  \renewcommand{\@mkbibcitation}{%
  \bgroup \par\medskip\small\noindent{\bfseries This article is an extended version of:}\par\nobreak
      \noindent
      Ruotong Cheng and Azadeh Farzan. 2025.
      Products of Recursive Programs for Hypersafety Verification.
      \textit{Proc. ACM Meas. Anal. Comput. Syst.} 9, OOPSLA2, Article 382 (October 2025), 29~pages.
      \url{https://doi.org/10.1145/3763160}
    \par\egroup }
\renewcommand{\vec}[1]{\bm{#1}}
\DeclareMathOperator{\red}{red}
\DeclareMathOperator{\Lin}{TO}
\DeclareMathOperator{\PO}{PO}
\DeclareMathOperator{\stack}{Stack}
\DeclareMathOperator{\param}{param}
\DeclareMathOperator{\fout}{output}
\DeclareMathOperator{\stktop}{top}
\DeclareMathOperator{\letters}{alph}
\newcommand{\enabled}{\mathsf{enabled}}
\newcommand{\Lang}{\mathcal{L}}
\newcommand{\shuffle}{\parallel}
\newcommand{\call}{\ensuremath{\mathtt{call}}}
\newcommand{\ret}{\ensuremath{\mathtt{ret}}}
\DeclareMathOperator{\dec}{dec}
\DeclarePairedDelimiter{\llrr}{\llbracket}{\rrbracket}
\newcommand{\sem}[2]{\mathcal{#1}\llrr{#2}}
\newcommand{\tS}{\widetilde{\Sigma}}
\newcommand{\powerset}{\mathcal{P}}
\newcommand{\Sl}{\Sigma^{\mathsf{int}}}
\newcommand{\Sc}{\Sigma^{\mathsf{call}}}
\newcommand{\Sr}{\Sigma^{\mathsf{ret}}}
\newcommand{\dl}{\delta^{\mathsf{int}}}
\newcommand{\dc}{\delta^{\mathsf{call}}}
\newcommand{\dr}{\delta^{\mathsf{ret}}}
\newcommand{\letter}[1]{\ovalbox{#1}}
\def\marrow{{\marginpar[\hfill  $\rightarrow$]{\color{red}$\leftarrow$}}}
\def\azadeh#1{{\color{blue}({\sc Azadeh says: }{\marrow\sf #1})}}
\def\ruotong#1{{\color{teal}({\sc Ruotong says: }{\marrow\sf #1})}}
\def\azadeh#1{}
\def\ruotong#1{}
\newcommand{\wn}{\mathsf{wn}}
\newcommand{\wnshuffle}{\mathbin{\pzigzag}}
\DeclareRobustCommand{\pzigzag}{%
  \mathbin{\text{\usefont{LS1}{stix2frak}{m}{n}\symbol{"CA}}}%
}
\def\tool{{\sc HyRec}}
\def\nP{\widehat{P}}
\def\pre{\text{\sf pre}}
\def\post{\text{\sf post}}
\newcommand{\nestedconcat}{\mathsf{nested\_\!concat}}
\newcommand{\ord}{\mathsf{ord}}
\newcommand{\blue}[1]{{\color{blue}#1}}
\newcommand{\indep}{\mathsf{indep}}
\newcommand{\dep}{\mathsf{dep}}
\newcommand{\stcomp}[1]{{#1}^\mathsf{c}}
\newcommand{\id}[1]{%
  \ensuremath{\mathit{\mathcode`\-=`\-\relax#1}}}
\newcommand{\attrib}[2]{\ensuremath{#1.\hspace*{1pt}\id{#2}}}
\definecolor{myblue}{rgb}{0.120137,0.407711,0.709558}
\definecolor{mypink}{rgb}{0.779247,0.000000,0.389398}
\newcommand{\refapp}[1]{Appendix \ref{#1}}
\newcommand{\refapp}[1]{\cite[Appendix \ref*{#1}]{anon-ext}}
\newcommand{\refapp}[1]{\cite[Appendix \ref*{#1}]{ChengF2025}}
\begin{document}

\title{Products of Recursive Programs for Hypersafety Verification}
\iftoggle{ext}{
  \subtitle{(Extended Version)}
}{}
\author{Ruotong Cheng}
\orcid{0009-0004-1857-7251}
\affiliation{%
  \institution{University of Toronto}
  \city{Toronto}
  \country{Canada}
}
\email{chengrt@cs.toronto.edu}

\author{Azadeh Farzan}
\orcid{0000-0001-9005-2653}
\affiliation{%
  \institution{University of Toronto}
  \city{Toronto}
  \country{Canada}
}
\email{azadeh@cs.toronto.edu}


\begin{abstract}
  We study the problem of {\em automated hypersafety verification} of {\em infinite-state recursive programs}. We propose an {\em infinite} class of {\em product programs},  specifically designed with {\em recursion} in mind, that reduce the hypersafety verification of a recursive program to standard safety verification. 
  For this, we combine insights from {\em language theory} and {\em concurrency theory} to propose an algorithmic solution for constructing an infinite class of recursive product programs. One key insight is that, using the simple theory of {\em visibly pushdown languages}, one can maintain the {\em recursive structure} of syntactic program alignments which is vital to constructing a new product program that can be viewed as a classic recursive program --- that is, one that can be executed on a  single stack. Another key insight is that techniques from {\em concurrency} theory can be generalized to help define product programs based on the view that the parallel composition of individual recursive programs includes all possible {\em alignments} from which a {\em sound} set of alignments that faithfully preserve the satisfaction of the hypersafety property can be selected. On the practical side, we formulate a family of parametric canonical product constructions that are intuitive to programmers and can be used as building blocks to specify recursive product programs for the purpose of relational and hypersafety verification, with the idea that the right product program can be verified automatically using existing techniques. We demonstrate the effectiveness of these techniques through an implementation and highly promising experimental results. 
\end{abstract}


\begin{CCSXML}
<ccs2012>
   <concept>
       <concept_id>10011007.10010940.10010992.10010998.10010999</concept_id>
       <concept_desc>Software and its engineering~Software verification</concept_desc>
       <concept_significance>500</concept_significance>
       </concept>
    <concept>
       <concept_id>10003752.10003790.10002990</concept_id>
       <concept_desc>Theory of computation~Logic and verification</concept_desc>
       <concept_significance>500</concept_significance>
       </concept>
   <concept>
       <concept_id>10003752.10003753.10003761</concept_id>
       <concept_desc>Theory of computation~Concurrency</concept_desc>
       <concept_significance>300</concept_significance>
       </concept>
   <concept>
       <concept_id>10003752.10003766.10003771</concept_id>
       <concept_desc>Theory of computation~Grammars and context-free languages</concept_desc>
       <concept_significance>300</concept_significance>
       </concept>
 </ccs2012>
\end{CCSXML}
\ccsdesc[500]{Software and its engineering~Software verification}
\ccsdesc[500]{Theory of computation~Logic and verification}
\ccsdesc[300]{Theory of computation~Concurrency}
\ccsdesc[300]{Theory of computation~Grammars and context-free languages}

\keywords{hyperproperties,product programs,reductions,visibly pushdown languages}

\maketitle

\section{Introduction}\label{sec:intro}
Hyperproperties \cite{ClarksonS2010} are properties expressible over multiple runs of a program. A $k$-safety property is a property whose violation is witnessed by $k$ finite-length program runs. Determinism is an example of such a property: non-determinism can only be witnessed by two runs of the program on the same input that produce two different outputs. This makes determinism an instance of a $2$-safety property. A hypersafety property is a $k$-safety property for $k > 1$.

Verification of hypersafety properties was first achieved by reducing the problem to a standard safety verification by means of {\em self-composition} \cite{BartheDR2004}. For example, in order to verify that $P$ is deterministic, one can verify that $P;P'$ under the assumption of equivalent inputs for $P$ and $P'$ guarantee the equivalence of their outputs at the end, where $P'$ is a fresh copy of $P$ with a disjoint memory. It was argued in \cite{BartheDR2004} that any other {\em independent composition} operator can be used in place of the sequential composition operator.  Later, {\em cross products} \cite{ZaksP2008}, which combine $P$ and $P'$ synchronously in lockstep, were used in a more limited setup of compiler validation. {\em Program products} \cite{BartheCK2011} were introduced as combining  the two notions where one can {\em fuse} the two programs both synchronously and asynchronously where appropriate. 

It is now  well-established that product programs are vital to verification of hypersafety properties and relational reasoning in general \cite{SousaD2016,ChurchillPSA2019,FarzanV2019,ShemerGSV2019,ItzhakySV2024}. In algorithmic verification, they permit the use of existing tools and techniques for classic safety verification of sequential programs to be applied to hypersafety verification. This is mainly due to the fact that they accommodate the creation of instances that admit simpler proofs. Approaches to hypersafety or relational verification typically {\em implicitly} or {\em explicitly} define a state space of product programs that may be used to simplify  verification. For instance in \cite{BartheCK2011,SousaD2016}, a successful derivation of the underlying logical rules {\em implicitly} defines the verified product program, while in \cite{FarzanKP2022}, the product programs are syntactically defined and look like classic programs. 
With the implicit construction, since the choice of application of a proof rule and the next alignment step are combined and one typically has control over the entire verification stack, one can opt for an ergonomic design that works more effectively; e.g., heuristics for proofs search can be tuned to the specific problem. On the other hand, since the explicit construction decouples the construction of the program from the construction of the proof, it can benefit from improvements in standard verification tools. More importantly, having access to a concrete (product) program opens up possibilities for using other standard tools for program analyses such as symbolic/concolic execution, fuzzing, monitoring and/or repair, and static analyses. The two sets of techniques are therefore complementary and serve different use cases. In this paper, we are interested in the {\em explicit} construction of product programs.

This problem is somewhat well-studied for iterative programs, but has only been lightly studied in the context of recursive programs, and the majority of the proposed techniques fit in the {\em implicit} construction category. Most even bypass the program view entirely and considers the question of how a similar concept can be incorporated in the CHC (Constrained Horn Clauses) encoding of them, in the form of {\sc CHCproduct} \cite{MordvinovF2017} or {\em fold/unfold} \cite{DeAngelisFPP2016,AngelisFPP2018a} techniques to implicitly explore the space of the product programs and proofs simultaneously (more details in Section \ref{sec:related-work}).  The very few \cite{EilersMH2018} that construct explicit product programs use specialized constructions that help solve a limited set of verification instances.

\begin{wrapfigure}[7]{r}{0.38\textwidth}\vspace{-10pt}
\includegraphics[scale=0.62]{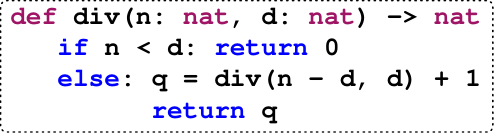}\vspace{-10pt}
\caption{Recursive integer division. \label{fig:iexample}}
\end{wrapfigure}
Consider the recursive code illustrated in Figure \ref{fig:iexample}, which computes
the integer result of dividing the numerator $\mathtt{n}$ by the denominator $\mathtt{d}$. Assume that we want to check if this {\em implementation} satisfies the following property of the original mathematical function:
\begin{align}
n \le n' &\implies \mathtt{div}(n,d) \le \mathtt{div}(n',d) \label{p1} \tag{\sc monotonicity}
\end{align}
First, observe that \ref{p1} is a 2-safety property.
The most straightforward encoding for solving the \ref{p1} property uses self-composition, that is, it reduces the problem to the safety verification of $\mathtt{div(n,d)};\mathtt{div(n',d')}$. However, this instance cannot be verified by an existing automated solver due to the nonlinearity of the arithmetic reasoning involved in guessing the required inductive function summaries \cite{GodefroidNRT2010}. In particular, the summary {\em must} precisely capture that $\mathtt{div(n,d)}$ is the {\em integer division} function.

Product programs are conceptually straightforward when it comes to non-recursive code. However, recursion introduces significant additional complexity. If $P$ and $P'$ are recursive, then $P;P'$ is a recursive program in the standard sense. However, the same cannot be said for an arbitrary product program made from $P$ and $P'$ (even if $P'$ is a copy of $P$). For instance, it is well-known that the parallel product of two recursive programs is a 2-stack machine, whose syntactic traces form a {\em context sensitive} language \cite{HopcroftMU2007}, which does not correspond to a recursive program. This problem does not exist in the iterative setting.

For \ref{p1}, the programmer's intuition points to a {\em lockstep} (or synchronous) product program, namely one in which the two recursive codes are executed {\em in tandem}, when possible. An appropriate product program then is one that (1) looks like a classic (single stack) recursive program, and (2) simulates the two copies of $\mathtt{div}$ in tandem. This product program can be defined using four recursive functions: $\mathtt{div1}$ and $\mathtt{div2}$ as two distinct copies of $\mathtt{div}$ from Figure \ref{fig:iexample}, and two distinct copies of the variation illustrated in Figure \ref{fig:s-prod}. In a lockstep simulation we need to execute in tandem while both copies last, and then a given copy alone when the other copy has terminated.  $\mathtt{div12}$ assumes that the second copy is still running and as such calls it in tandem, while $\mathtt{div1}$ assumes that the second copy has terminated and only locally recurses.

\begin{figure}[h]
\begin{center}
\includegraphics[scale=0.47]{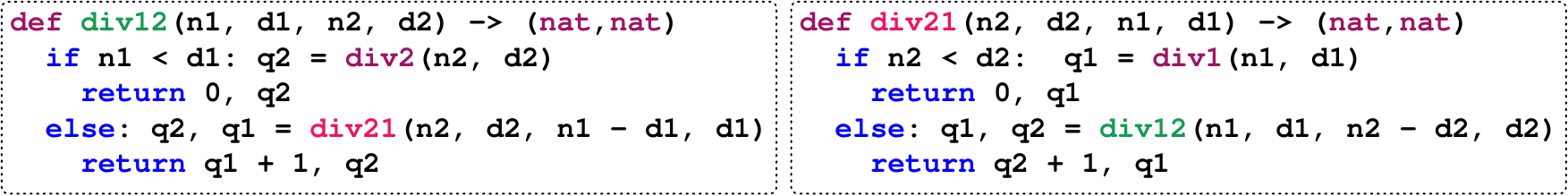}\vspace{-10pt}
\caption{With two disjoints copies of $\mathtt{div}$, these functions simulate a synchronous product of two $\mathtt{div}$'s}
\label{fig:s-prod}
\end{center}
\end{figure}

This product program does not syntactically match the one that we formally define and algorithmically construct in this paper, but it is more human-readable and conveys the two main ideas: (1) our product programs are {\em syntactic} in the sense that any syntactic run of it is identical to a syntactic run of the full (parallel) product of two distinct copies of $\mathtt{div}$ (not strictly true for this simplified example), and (2) even in the simplest case, they are far from trivial.  How can one systematically construct such a program? 

In \cite{FarzanV2019,FarzanV2020,FarzanKP2022}, this is achieved for iterative (non-recursive) programs, by taking an {\em operational} view of the program \cite{Farzan2023}. The set of all possible alignments are defined by the parallel composition of the components, which include arbitrary interleavings of syntactic runs from individual components. Then, a particular choice of product programs becomes about selecting a {\em sound} subset of these interleavings. As long as the subset has a finite representation (in a program or program-like structure), the product program and this set of behaviours are interchangeable.

Our first important observation is that, for the same operational style to work for recursive 
\begin{wrapfigure}[10]{r}{0.1\textwidth}\vspace{-12pt}
\includegraphics[scale=0.5]{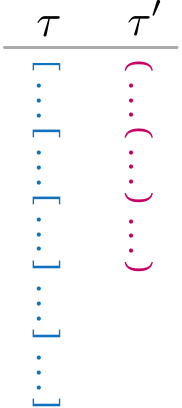}\vspace{-10pt}
\end{wrapfigure}
programs, one has to be explicitly conscious of the {\em structure} of the recursive runs, in terms of the 
matching of calls and returns in each component. To visualize this, consider runs $\tau$, $\tau'$ illustrated on the right, as being runs of the two disjoint copies $\mathtt{div(n,d)}$ and $\mathtt{div(n',d')}$ respectively. Open and closed brackets respectively denote calls and returns, and all internal actions are represented as the ellipses, since their specifics are unimportant. Note that each illustrated run is {\em well-nested}, in the sense that we have a {\em balanced} sequence of calls and returns. These are also called \emph{interprocedurally realizable paths} \cite{RepsHS1995,Ramalingam2000}.

Below we can see two (of the many) ways that one can  {\em interleave} (or align) $\tau$ and $\tau'$:
\begin{align} 
& \mathtt{\color{myblue} [ \dots\  [ \dots\   } \mathtt{\color{mypink} ( \dots\  ( \dots\  } \mathtt{\color{myblue} [ \dots}  \mathtt{\color{myblue} \ \dots ]  }
\mathtt{\color{mypink}\ \dots ) \ \dots )  } \mathtt{\color{myblue} \ \dots ]\ \dots ]  }
 \tag{\sc well-nested} \\
& \mathtt{\color{myblue} [ \dots\ [ \dots\  } \mathtt{\color{mypink} ( \dots\ ( \dots\  } \mathtt{\color{myblue} [ \dots   }
 \mathtt{\color{mypink}\ \dots)\ \dots )  } \mathtt{\color{myblue} \ \dots ]\ \dots ]\ \dots ] }
 \tag{\sc ill-nested}
\end{align}
and we contend that the well-nested run belongs to a product program and the ill-nested one does not. In the well-nested one, the matched calls and returns belong to the same component from which they originated. 
If we ignore the types/colours, we can pretend that the former is a well-matched run of a classic recursive program, whereas in the latter, we obtain an {\em interprocedurally unrealizable} path: a call and a non-matching return. Therefore, to construct a product program in this operational style, one needs (1) representations for syntactic program runs that maintain the recursive structure, and (2) mechanisms to isolate sound appropriate subsets of the well-nested alignments as product programs.

For (1), we lean on the well-established theory of {\em visibly pushdown languages} \cite{AlurM2004} to formally represent the syntactic runs of recursive programs, while retaining the recursive structure in the runs. Unlike a {\em call graph}, a classic model for recursive program analysis, a visibly pushdown {\em grammar} or {\em automaton} captures the {\em well-matched} syntactic runs of a recursive program. We then give a construction of the product program in the same formalism, which guarantees that it can be treated as a classic recursive program for all intents and purposes. 


To understand the complications of devising (2), let us further expand our simple $\mathtt{div}$ example with a slightly more complicated  property:
\begin{align}
n = 2 \times n' &\implies \mathtt{div}(n,d) \ge 2 \times \mathtt{div}(n',d) \label{p2} \tag{\sc scaling}
\end{align}
where \ref{p2} is a 2-safety property. Observe that a {\em lockstep} product is no longer appropriate. For \ref{p2}, we expect the execution of the first copy to be twice as long as the second copy. A {\em suitable} product program, therefore, must synchronize each recursive call of $\mathtt{div(n,d)}$ against two recursive calls of $\mathtt{div(n',d')}$. Consider a variation of this property where the scale factor  changes from $2$ to a different constant, like $5$, which immediately changes the product program needed for its verification to a new one. This motivates the following research question: how can one systematically define a state space of useful product programs for recursive programs for the purpose of relational and hypersafety verification? 
Any specific choice of a product program can be viewed as a {\em reduction} of the set of all alignments (or the behaviours of the parallel product). For a  group of $k$ syntactic runs, one from each individual component, a (minimal) {\em reduction} includes precisely one alignment of them, and discards all others. For recursive programs, with syntactic runs of unbounded length, there are infinitely many choices that can be made and as such infinitely many reductions that can be defined.


Our key (theoretical) contribution in this paper is to define an infinite space of product programs (reductions). Each reduction is representable as standard recursive programs and therefore can be proved correct using standard  verification techniques. Each member of this space is then formally definable and algorithmically constructible, and the set is {\em countable} and (recursively) {\em enumerable}, in the same style as \cite{FarzanV2019,FarzanV2020}.

To define the state space of these programs, we rely on the observation that the task of choosing one alignment from a group is like {\em scheduling}. Intuitively, one can think of the resolution of the nondeterminism to be done through the choices made by a {\em scheduler}. Most importantly, we observe that these {\em schedulers} can also be modelled in the same formalism as recursive programs. Given a set of statements, coming from distinct components, one can formally define a set of {\em schedules} (which are basically sequences over these statements), which can then serve as a restriction of the set of all alignments. For instance, a lockstep scheduler of two (distinct) components can be viewed as one that admits any schedule {\em alternating} statements from the two components. Construction of a reduction then becomes a standard {\em intersection} of the set of schedules defined by the scheduler and the well-nested shuffle of the components; a construction that is supported by the theory of visibly pushdown languages for both automata and grammars. In Section \ref{sec:reduction}, we formalize these schedulers as {\em contextual lexicographical orders} and the yielded reductions as {\em contextual lexicographical reductions}, {\em lex reductions} for short. Here, {\em context} refers to the fact that the schedule is {\em stateful} and adjusts its future choices based on what it has done in the past. These can be viewed as generalization of similar concepts from \cite{FarzanV2019,FarzanV2020,FarzanKP2022} to recursive programs.

We use ideas from {\em trace theory} \cite{Mazurkiewicz1987} (a simple theory for concurrency) to argue that our reductions are constructible. However, in Section \ref{sec:constructing-reductions}, we propose a new optimized algorithm for constructing a representation of our contextual lexicographic reductions as {\em visibly pushdown automata and grammars}. The new construction leverages the fact that, in the context of hypersafety (and relational) verification, we have the {\em full commutativity} of the actions from different components, to propose a {\em simpler} and {\em more compact} construction of visibly pushdown (VP) lex reductions. 

\begin{wrapfigure}[8]{r}{0.5\textwidth}\iftoggle{ext}{}{\vspace{-12pt}}
\includegraphics[scale=0.6]{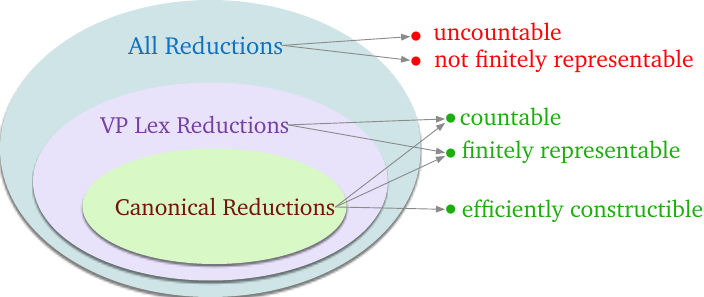}\vspace{-10pt}
\end{wrapfigure}
The specific feature of our proposed space of reductions is that, unlike the set of all reductions, it is recursively enumerable.
This means that assuming the existence of a verifiable reduction in the space, one can set up a trivial decision procedure to find it.
The underlying verification problem remains undecidable for generic infinite-state programs, but at least, we know that the component of searching for a reduction does not add any new divergence opportunities. This is in sharp contrast to, for example, considering the state space of {\em all possible} reductions, where one has to either bound the space to a finite subset \cite{BartheCK2011} or use ad-hoc heuristics \cite{MordvinovF2017,SousaD2016} to curb this intrinsic divergent behaviour. Moreover, the VP lex reductions are finitely representable, whereas an arbitrary reduction may not be.

Rather than setting up a search algorithm for a reduction in the space of VP lex reductions, we opted for investigating a different more practical hypothesis in this paper. This is motivated by the fact that the search, even in the simpler setup of  \cite{FarzanV2019,FarzanV2020}, is very expensive and unscalable and the problem only gets worse with the increased complexity of the same operations for pushdown systems \cite{EsparzaHRS2000}.
So, there is a natural question about what to do when one hits that scalability barrier. There is a well-established tradition of user-provided annotations to help with the automation of harder instances. For instance, in \cite{ShemerGSV2019}, when the mined predicates are insufficient for the combined search for an alignment and a proof, the user can input predicates in the proof (of the unknown alignment) to help; which is in line with this tradition.   
In this paper, we explore the alternative possibility of user-provided annotations for the specification of reductions. we devise a smaller (yet infinite) subfamily of lex reductions, we call {\em canonical reductions}, that have three properties: 
\begin{enumerate} 
\item They are intuitive to the programmer. For instance, looking at $\mathtt{div}$ and \ref{p2}, the programmer can easily guess and specify that they are interested in an asynchronous product of two $\mathtt{div}$ instances where one copy moves twice as fast as the other copy.
\item They are simply and succinctly specifiable (like a correctness property is in a logical language) as part of the verification command, using {\em composable building blocks}.
\item They are expressive enough to provide the means of solving a broad range of instances.
\end{enumerate}


To understand the obstacles in achieving this, let us expand our simple $\mathtt{div}$ example one more time with a substantially more sophisticated property:
\begin{align}
n'' = n + n' &\implies \mathtt{div}(n'',d) \ge \mathtt{div}(n,d) + \mathtt{div}(n',d) \label{p3} \tag{\sc distributivity} 
\end{align}
where \ref{p3} is a 3-safety property, and clearly, {\em lockstep} product is not the right choice for this and varying the rate is also insufficient. Consider the product program 
$$\left(\mathtt{div(n,d)} \oplus \mathtt{div(n',d')}\right) \otimes \mathtt{div(n'',d'')}$$ 
 where we intentionally use the vague $\oplus$ and $\otimes$ operators for the choice of product operators for now. 
Intuitively, since $n'' = n + n'$, we expect the execution length of the $\mathtt{div(n'',d'')}$ copy to be roughly the same as the sum of the execution lengths of the $\mathtt{div(n,d)}$ and $\mathtt{div(n',d')}$ copies. Therefore, in a suitable product program, we want to run a kind of {\em sequential composition} of $\mathtt{div(n,d)}$ and $\mathtt{div(n',d')}$ in tandem with $\mathtt{div(n'',d'')}$; that is, align each of the first $n$ recursive calls of $\mathtt{div(n'',d'')}$ with a call from $\mathtt{div(n,d)}$ and the last $n'$ recursive calls of $\mathtt{div(n'',d'')}$ with a call from $\mathtt{div(n',d')}$. This is precisely how looped computations are split \cite{BartheCK2011} to accommodate the construction of an effective product program. However, a recursive computation cannot be split like a loop through  {\em loop fission}, by breaking a loop  into two sequentially composed loops. The solution, at the high level, is in fact the complete opposite. One first {\em merges} $\mathtt{div(n,d)}$ and $\mathtt{div(n',d')}$ so that it looks like one recursive sequence of calls, and then synchronizes this merged program with $\mathtt{div(n'',d'')}$ in a lockstep manner. This {\em merge} can be viewed as a recursive analog of {\em loop fusion}.

\begin{wrapfigure}[4]{r}{0.42\textwidth}\vspace{-10pt}
\includegraphics[scale=0.5]{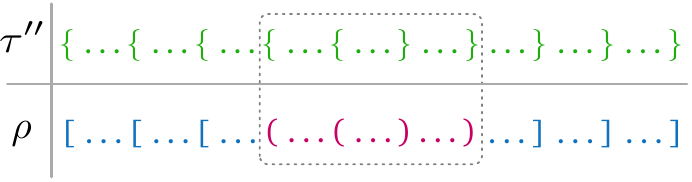}\vspace{-10pt}
\end{wrapfigure}
To visualize this, recall the runs  $\tau$ and $\tau'$ illustrated earlier, and consider $\tau''$ illustrated on the right, as being respectively runs of the three disjoint copies $\mathtt{div(n,d)}$, $\mathtt{div(n',d')}$, and $\mathtt{div(n'',d'')}$.  First, $\tau$ and $\tau'$ are {\em merged} together, by {\em fully nesting} $\tau'$ inside $\tau$ to get $\rho$. This operation (standing in for $\oplus$) can be viewed as the recursive analog of {\em concatenation} in the looped (iterative) setting.  

We introduce {\em nested concatenation}, the recursive dual of concatenation as a {\em novel building block}. Note that,  for  a {\em tail recursion}, nested concatenation behaves similar to concatenation, but otherwise, it is an entirely different and new way of composing computations. $\rho$ can then be composed with $\tau''$ through a standard lockstep composition, where the green and blue/pink calls are executed in tandem. Effectively,  the innermost two calls of $\tau''$ are synched with the two calls of $\tau'$, and its outermost three calls are synched with the calls of $\tau$. 

Looking at the \ref{p1} and \ref{p2} properties of $\mathtt{div}$, a {\em parametric-by-speed} version of lockstep is also a suitable candidate for another building block. Note that the parametric lockstep, by itself, yields an unbounded family of reductions. Combined with nested concatenation, they give rise to an infinite number of possibilities for product programs that can be very succinctly specified by the user. These building blocks are highly intuitive, but very nontrivial to define formally and construct correctly. Our second (theoretical and practical) key contribution is that we propose formal definitions for these building block reductions (Section \ref{sec:cr}), and also show that these reductions all belong to the state space of lex reductions. Moreover, we propose an alternative direct way of constructing product programs made from the two building block reductions of {\em nested concatenation} and {\em parametrized lockstep} (Section \ref{sec:eccr}) that yields much more compact representations for these canonical reductions, compared to the generic (but optimized) construction from Section \ref{sec:constructing-reductions} for all lex reductions. Remarkably, the subfamily demonstrates a large degree of expressivity with nearly all of our 103 benchmarks becoming solvable using a a reduction from this family.

As a bonus (third) contribution, we make a new observation for the verification of concurrent recursive programs: The well-nested shuffle is an under-approximation of a {\em concurrent recursive program}, which itself is a standard (nondeterministic) recursive program. As such, any existing technique for recursive program analysis and bug finding can be used to find errors in it which are guaranteed to be errors in the original program. Moreover, in Section \ref{sec:concurrency}, we formulate conditions under which this under-approximation becomes a {\em sound} one; that is, any property proven for the well-nested shuffle (using standard techniques) is guaranteed to hold for the original program. 

In Section \ref{sec:semantics}, we outline how the language-theoretic (automata or grammar) description of a product program can be turned into a classic recursive program and encoded for standard safety verification, which serves as the last step of our reduction of the problem of hypersafety verification of recursive programs to standard safety verification.

We have a tool called \tool, that reduces an input hypersafety instance to a classic safety instance by constructing the appropriate product programs. We experiment with the a large set (103) of benchmarks to show that the product programs are absolutely vital for hypersafety verification of recursive programs.  Specifically, our experiments with 103 benchmarks confirm that the high level idea of the product program is intuitively known to the programmer and can be succinctly specified and facilitates the verification of the hypersafety instance.

\section{Background}\label{sec:background}
\subsection{Languages, Automata, and Grammars} \label{sec:lag}
\emph{Visibly pushdown langauges} \cite{AlurM2004} are a well-established formal model for capturing the set of \emph{interprocedurally realizable paths} \cite{RepsHS1995,Ramalingam2000} of a program: procedure calls always return to the correct location, but the (data) feasibility of the paths is not considered.

We review some key concepts for visibly pushdown languages, and refer the reader to \cite{AlurM2004} for a full exposition. 
A \emph{visibly pushdown alphabet} (VP alphabet) $\tS$ is the union of three disjoint finite alphabets, $\Sc$ (\emph{calls}), $\Sr$ (\emph{returns}), and $\Sl$ (\emph{internals}). We use $\tS$ when we want to emphasize the alphabet is visibly pushdown and use $\Sigma$ for a general alphabet. A word $w_1\cdots w_l \in \tS^*$ induces a unique \emph{matching relation} \cite{AlurM2009} $\rightsquigarrow$ over $\{-\infty, 1, 2, \ldots, l\} \times \{1, 2, \ldots, l, +\infty\}$ so that $i \rightsquigarrow j$ ($-\infty < i < j < +\infty$) means $w_i, w_j$ are a pair of call and return that match in the usual sense, and $i \rightsquigarrow +\infty$ (resp. $-\infty \rightsquigarrow i$) means $i$ is a \emph{pending} call (resp. return). A word without pending letter is \emph{well-matched}.

A \emph{visibly pushdown automaton} (VPA) \cite{AlurM2004} over a VP alphabet is a pushdown automaton whose stack operations are entirely controlled by the type of the letter read: it pushes when reading a call, pops when reading a return, and neither pushes nor pops when reading an internal.

\begin{definition}[Visibly Pushdown Automaton (VPA)]
  A visibly pushdown automaton over a VP alphabet $\tS$ is a tuple $M = (Q, Q^{\mathsf{in}}, \Gamma, \delta, Q^{\mathsf{F}})$ where $Q$ is a finite set of states, $Q^{\mathsf{in}} \subset Q$ is a set of initial states, $\Gamma$ is a finite stack alphabet that contains a special bottom-of-stack symbol $\bot$, $\delta = \dc \cup \dr \cup \dl$ is a set of transitions, where $\dc \subset Q\times\Sc\times Q\times(\Gamma\setminus\{\bot\})$, $\dr \subset Q\times\Sr\times\Gamma\times Q$, and $\dl \subset Q\times\Sl\times Q$, and $Q^{\mathsf{F}} \subset Q$ is a set of final states.
\end{definition}

We say a VPA is complete if at any state, reading any letter leads to a next state.
A \emph{visibly pushdown language} (VPL) is a language recognized by some VPA. VPLs can also be characterized by \emph{visibly pushdown grammars} (VPGs) \cite{AlurM2004}, a subclass of context-free grammars. We are mainly concerned with languages of well-matched words, which are generated by \emph{well-matched VPGs} \cite{JiaKT2021}. We additionally permit a grammar to have more than one start symbols.

\begin{definition}[Well-matched VPG]
  A context-free grammar $G = (V, \tS, P, S)$ is a \emph{well-matched VPG} if every production rule in $P$ has the form $X \to \epsilon$, $X \to a Y$, or  $X \to c Y r Z$ 
   where $a \in \Sl$, $c \in \Sc$, and $r \in \Sr$.
\end{definition}

We recall the {\em shuffle} operator $\shuffle$ over languages. For any alphabet $\Sigma$ and $\Sigma' \subset \Sigma$, let the projection function $\Pi_{\Sigma'}:\Sigma^{*}\to\Sigma'^{*}$ be the string homomorphism that erases letters that are not in $\Sigma'$.
\begin{definition}[Shuffle]
    Let $\Sigma$ be an alphabet that can be partitioned into a disjoint union $\Sigma_{1}\uplus\cdots\uplus\Sigma_{n}$. Let $L_{i}$ be a language over alphabet $\Sigma_{i}$ ($i=1,\ldots,n$). The \emph{shuffle} $L_{1}\shuffle\cdots\shuffle L_{n}$ is the set of words $w\in\Sigma^{*}$ such that $\Pi_{\Sigma_{i}}(w)\in L_{i}$ for each $i$.
\end{definition}
Regular languages are closed under shuffle, but this is not true for VPLs, even in the case where $L_{i}$'s are copies of the same language. For instance, the shuffle of two Dyck languages \cite{DavisSW1994} over disjoint VP alphabets (opening/closing delimiters are calls/returns) is not context-free \cite{HopcroftMU2007}. 

\subsection{Recursive Programs}\label{sec:rp}
Let $V$ be the set of all valuations, i.e., maps from program variables to values, and $\stack(V)$ be the set of all nonempty stacks whose frames are elements of $V$. We use $S.\nu$ to denote a stack whose top frame is $\nu$ and the rest is $S$, which is a smaller stack.

Following \cite{HeizmannHP2010}, we model a recursive program as a VPL $P$ over a VP alphabet $\tS$ and a semantics $\mathcal{S}: \tS \to \powerset(\stack(V) \times \stack(V))$ which respects the VP alphabet $\tS$: the type of stack operation performed by $\sem{S}{a}$ is determined by the type of $a$. For our purpose, we assume $P$ is well-matched. Formally, there is a map $\mathcal{F} \subset (\Sc \times V \times V) \cup (\Sr \times V \times V \times V) \cup (\Sl \times V \times V)$ such that
\begin{itemize}
  \item If $c \in \Sc$, then $\sem{S}{c} := \{ (S.\nu,\, S.\nu.\nu') : \nu' \in \sem{F}{c}(\nu) \}$. 
  \item If $r \in \Sr$, then $\sem{S}{r} := \{ (S.\nu_{<}.\nu,\, S.\nu') : \nu' \in \sem{F}{r}(\nu,\nu_{<}) \}$.
  \item If $a \in \Sl$, then $\sem{S}{a} := \{ (S.\nu,\, S.\nu') : \nu' \in \sem{F}{a}(\nu) \}$.
\end{itemize}
Extend $\mathcal{S}$ from letters to syntactic runs (words over $\tS$ without pending return) via relation composition. For a syntactic run $\rho$, the hoare triple $\{\pre\} \rho \{\post\}$ holds if for all $(S.\nu,\, S'.\nu') \in \sem{S}{\rho}$, if $\nu \models \pre$, then $\nu' \models \post$. We say $\rho$ is \emph{feasible} if $\sem{S}{\rho} \neq \varnothing$. If $\rho$ is infeasible, any triple $\{\pre\} \rho \{\post\}$ is vacuously true.
Given a recursive program $(P, \mathcal{S})$, define $\{\pre\} P \{\post\}$ as $\forall \rho \in P.\, \{\pre\} \rho \{\post\}$. When $\mathcal{S}$ is clear from the context, we sometimes use just $P$ to refer to the recursive program.

\subsection{Commutativity and Reductions}

\begin{definition}[Commutativity-based Equivalence \cite{Mazurkiewicz1987}]\label{def:ceq}
Given an irreflexive symmetric \emph{commutativity relation} $I \subset \Sigma \times \Sigma$, define the \emph{commutativity-based equivalence} $\equiv_I$ on $\Sigma^*$ as the least transitive reflexive relation such that for all $u, v \in \Sigma^*$ and $a, b \in \Sigma$, if $(a, b) \in I$, then $uabv \equiv_I ubav$.
\end{definition}

It is straightforward to verify that $\equiv_I$ is indeed an equivalence relation. 
Intuitively, two words $u, v$ are equivalent under $\equiv_I$ if $v$ can be obtained from $u$ by repeatedly swapping adjacent commuting letters. The equivalence relation $\equiv_I$ partitions a language into equivalence classes. 
If a subset of the language contains representatives from all equivalence classes, we call it a \emph{reduction}.

\begin{definition}[Reduction]\label{def:reduction}
    Given a commutativity relation $I$ on $\Sigma$, for a language $L$ over $\Sigma$, a subset $\widehat{L}$ of $L$ is said to be a \emph{reduction} if for all $\rho \in L$, there exists $\sigma \in \widehat{L}$ such that $\sigma \equiv_I \rho$.
    A reduction is \emph{minimal} if there is precisely one such $\sigma$ for every $\rho$.
\end{definition}

In our context, letters are statements and words are runs of a program. Thus, we are interested in not an arbitrary commutativity relation but one that is \emph{sound}: if $(a, b) \in I$, then $ab$ and $ba$ have the same semantics. If $\widehat{P}$ is a reduction of $P$ under a sound commutativity relation, then $\widehat{P}$ covers all semantics of $P$, which results in the following theorem:

\begin{theorem}[Soundness of Reductions]\label{thm:soundness-of-reductions}
    If $\widehat{P}$ is a reduction of $P$ under a sound commutativity relation, then
    $\{\pre\}\ \nP\ \{\post\}$ if and only if $\{\pre\}\ P\ \{\post\}$.
\end{theorem}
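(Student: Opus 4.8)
The plan is to prove the two implications separately. The forward direction, $\{\pre\}\ P\ \{\post\} \Rightarrow \{\pre\}\ \nP\ \{\post\}$, is immediate: by Definition~\ref{def:reduction} a reduction satisfies $\nP \subseteq P$, so if every run of $P$ validates the triple then a fortiori so does every run of $\nP$. All the real work is in the reverse direction, and it rests on a single semantic-invariance lemma: commutativity-equivalent runs denote the same relation on configurations, i.e. $\sigma \equiv_I \rho$ implies $\sem{S}{\sigma} = \sem{S}{\rho}$.

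To prove this lemma I would induct on the length of the finite chain of adjacent swaps witnessing $\sigma \equiv_I \rho$; such a chain exists because, by Definition~\ref{def:ceq}, $\equiv_I$ is the reflexive–transitive closure of the (symmetric) single-swap relation. The reflexive and transitive cases are routine, so the heart of the argument is the base case $uabv \equiv_I ubav$ with $(a,b)\in I$. Here I use that $\mathcal{S}$ is extended from letters to runs by relational composition, which is associative: factoring $\sem{S}{uabv}$ as the composition of the single-letter denotations of $u$, then $\sem{S}{a}$, then $\sem{S}{b}$, then the letters of $v$, I can isolate the middle factor $\sem{S}{a}\circ\sem{S}{b}$. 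Soundness of $I$ gives exactly $\sem{S}{a}\circ\sem{S}{b} = \sem{S}{b}\circ\sem{S}{a}$ (this is what it means for $ab$ and $ba$ to have the same semantics); substituting this equal factor and reassociating yields $\sem{S}{uabv}=\sem{S}{ubav}$.

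With the lemma in hand the reverse direction is short. Take any $\rho \in P$; the reduction property supplies some $\sigma \in \nP$ with $\sigma \equiv_I \rho$. The hypothesis $\{\pre\}\ \nP\ \{\post\}$ gives $\{\pre\}\ \sigma\ \{\post\}$, and since the Hoare triple for a run is defined purely in terms of its denotation $\sem{S}{\cdot}$, the equality $\sem{S}{\sigma}=\sem{S}{\rho}$ transports it to $\{\pre\}\ \rho\ \{\post\}$. As $\rho \in P$ was arbitrary, $\{\pre\}\ P\ \{\post\}$ holds, completing the equivalence.

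I expect the only delicate point to be the base case of the lemma: the soundness hypothesis is stated for an isolated pair $ab$ versus $ba$, yet it must be applied inside an arbitrary context $u\cdot{-}\cdot v$. This is precisely why it is essential to argue at the level of relations on $\stack(V)$ rather than at the level of words. In general a single swap may produce an intermediate string that is not well-matched (for example, swapping a matched call/return pair yields a pending return), so such a string is not a \emph{syntactic run} in the sense of the paper and need not have a run semantics. By factoring everything through the individual single-letter denotations $\sem{S}{a}$ — each of which is a well-defined relation on $\stack(V)$ independent of its surrounding context — and invoking associativity of relational composition, I sidestep this issue entirely and never need the intermediate strings to be well-matched.
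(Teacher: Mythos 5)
Your proof is correct and is exactly the argument the paper intends: the paper presents this theorem as an immediate consequence of the observation that a reduction under a sound commutativity relation ``covers all semantics'' of $P$, which is precisely your semantic-invariance lemma ($\sigma \equiv_I \rho$ implies $\sem{S}{\sigma} = \sem{S}{\rho}$, by induction on the swap chain) combined with $\nP \subseteq P$ for the trivial direction. Your care about intermediate words in the swap chain possibly failing to be syntactic runs --- resolved by working with compositions of the single-letter relations on $\stack(V)$, which are context-independent --- correctly handles a detail the paper leaves implicit.
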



\section{Reductions Space for Hypersafety Verification } \label{sec:reduction}
The classic way \cite{BartheDR2004} to verify a hypersafety property for a recursive procedure is to produce $n$ disjoint copies $P_1, \dots, P_n$ from it and state the pre/postconditions $\pre$ and $\post$ as a property of the sequential composition of these disjoint copies:
$\{\pre\}\ P_1 ; \dots ; P_n\  \{\post\}$. Observe that the same disjointness assumption implies that the sequential composition can be replaced by a parallel composition (or the shuffle). Hence we can state our goal as constructing a single (product) recursive program $\nP$ such that for any hypersafety property expressed as a pair of pre/postconditions $\pre$ and $\post$:
\[\{\pre\}\ P_1 \shuffle \dots \shuffle P_n\ \{\post\} \iff \{\pre\}\ \nP\ \{\post\} \tag{Soundness}\label{con:sound}\] 
which makes it sound to verify $\nP$ instead. The parallel product is more useful for our purpose, because it already includes every possible imaginable alignment of individual program runs inside it. We start by assuming that the programs $P_i$'s are entirely disjoint, and later (in Section \ref{sec:concurrency}) address the concurrency case where some may share variables.

Each $P_i$ is a classic recursive program over a disjoint memory and, in the classic sense, can be interpreted using a stack semantics in the style outlined in \ref{sec:rp}. In order to track the semantic meaning of an arbitrary run of a {\em product} of these programs, which may align the executions from the programs in an arbitrary way, the semantic function has to maintain $n$ stacks, one for each component, to keep track of each of the $n$ disjoint recursive computations. In other words, the global state is now a {\em tuple} of $n$ stacks, and a statement from the $i$-th component operates on the $i$-th stack from this tuple. To get a classic recursive program, we need this tuple to collapse into a single stack while preserving the ability to keep track of the computation of any arbitrary component $P_i$. This can be guaranteed precisely when the alignments are \emph{well-nested}. 
\begin{definition}[Well-nested Words] \label{def:wnw}
Let $\tS = \tS_1 \uplus \cdots \uplus \tS_n$, with disjoint $\tS_i$'s. A word $w \in \tS^*$ is \emph{well-nested} if $i \rightsquigarrow j$ (where $\rightsquigarrow$ is the matching relation on $w$) implies that $w_i$ and $w_j$ both belong to the same $\tS_k$. For any language $L \subset \tS^*$, $\wn(L)$ denotes the set of all well-nested words in $L$. 
\end{definition}
If a language over $\tS$ contains only well-nested words, we say the language is well-nested. Note that being well-nested is different from being well-matched (see Section \ref{sec:lag}).

We need a suitable representation for individual $P_i$'s and $\nP$ that facilitates this goal. The control flow graph (or equivalent) representations lose track of the recursive structure of the runs. Context free grammars (CFG) or a pushdown automaton (PA) are the classic models that maintain this structure. However, 
{\em visibly pushdown languages} (VPL) \cite{AlurM2004}, or alternatively the {\em regular nested word languages} (RNWL) \cite{AlurM2009}, are a strict subset of context free languages that are expressive enough to model recursive programs and yet have the same desirable boolean properties as regular languages.

It is straightforward to take as input the code of a recursive component $P_i$ and construct a {\em visibly pushdown grammar} (VPG) or {\em visibly pushdown automaton} (VPA) that precisely captures the set of well-matched syntactic runs of $P_i$.  We abuse the notation $P_i$ to refer both to the program and the set of syntactic runs, which form a well-matched VPL, since the distinction is always clear from context. We can always ensure that each $P_i$ is over a distinct visibly pushdown alphabet $\tS_i$, and introduce $\tS = \tS_1 \uplus \cdots \uplus \tS_n$.

The language $P_{\shuffle} = P_1 \shuffle \dots P_n$ is not necessary visibly pushdown, even if all $P_i$'s are; nor does it exclude ill-nested alignments.
In this section, we formulate how we can use simple observations from language theory and concurrency theory to define and algorithmically construct individual product programs (reductions) from $P_{\shuffle}$, which (1) satisfy the \ref{con:sound} condition, (2) only contain well-nested alignments, and (3) are visibly pushdown and therefore representable through VPAs and VPGs. It is very important to understand that we manipulate the language (i.e. the set of syntactic runs) to achieve this, and {\em do not} formulate this as a manipulation of any specific representation of it. So, even though we operate on sets of syntactic runs, we do not manipulate (bounded) syntactic structures (grammars, automata, control flow graphs, etc) to achieve this. Syntactic structures are bounded and assume bounded amount of manipulation; that is, unless one resorts to unfolding and therefore changing the structures. Languages are unbounded and as such present an unbounded amount of potential for selection of product programs.

\def\Ith{\mathbb{I}}
\subsection{Contextual Lexicographic Reductions} \label{sec:contextual-lex-reductions}
For every run $\rho \in P_{\shuffle}$, it suffices that the reduction $\nP$ contains {\em at least one} permutation of statements of $\rho$ while maintaining the order of the statements within each $P_i$. All other permutations are {\em equivalent} and it suffices that the entire {\em equivalence class} (see Definition \ref{def:ceq}) has at least one of its elements. This is a {\em commutativity} type argument, where pairs of statements from different copies can be considered to soundly commute, since they operate on disjoint memories. If the goal is to opt for programs with simpler proofs then we select {\em precisely one} representative, since it is generally easier to prove a program with fewer behaviours correct. Such a language $\nP$ is by definition a \emph{minimal reduction} (see Section \ref{sec:background}) of $P_{\shuffle}$ under the commutativity relation $\Ith = \{ (a,b) \mid a \in \tS_i, b \in \tS_j, i \neq j \}$, the largest sound commutativity relation. 

Note that for any program that is not bounded, there are infinitely many equivalence classes of $\Ith$, and as long as these equivalence classes have more than one member, then \emph{uncountably} many minimal reductions (see Definition \ref{def:reduction}), each uniquely determined by a {\em selector} of representatives from the equivalence classes under $I$.

\begin{proposition}\label{prop:uncountable}
In general, the set of {\em minimal reductions} of $P_{\shuffle}$ under $\Ith$ is not {\em countable}.  
\end{proposition}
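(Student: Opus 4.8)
The plan is to identify minimal reductions with selectors that pick one representative per commutativity class, and then to exhibit a single witnessing instance in which infinitely many classes have at least two members, forcing uncountably many selectors. First I would unfold Definition \ref{def:reduction} in the minimal case: a subset $\nP \subseteq P_{\shuffle}$ is a minimal reduction exactly when, for every $\rho \in P_{\shuffle}$, there is \emph{precisely one} $\sigma \in \nP$ with $\sigma \equiv_{\Ith} \rho$. Equivalently, $\nP$ meets every $\equiv_{\Ith}$-equivalence class of $P_{\shuffle}$ in exactly one element. Thus minimal reductions are in bijection with choice functions that select one representative from each class, and distinct choice functions yield distinct reductions.

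Next I would supply the witness. Take $n = 2$ with $\tS_1 = \{a\}$ and $\tS_2 = \{b\}$, both internal letters, and $P_1 = a^*$, $P_2 = b^*$; these are trivially well-matched VPLs (indeed regular). Then $P_{\shuffle} = \{a,b\}^*$, since projecting any word onto $\{a\}$ lands in $a^*$ and onto $\{b\}$ lands in $b^*$. Under $\Ith$ the letters $a$ and $b$ commute, so $u \equiv_{\Ith} v$ iff $u$ and $v$ have the same number of $a$'s and the same number of $b$'s (the same Parikh image). Hence the equivalence classes are indexed by $(m,k) \in \mathbb{N}^2$, and the class $C_{m,k}$ of words with $m$ occurrences of $a$ and $k$ of $b$ has $\binom{m+k}{m}$ elements. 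In particular $|C_{m,k}| \ge 2$ whenever $m,k \ge 1$, and there are infinitely many such classes.

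To conclude uncountability, I would build an injection from the Cantor space $\{0,1\}^{\mathbb{N}}$ into the set of minimal reductions. Enumerate the classes with at least two members as $C_0, C_1, C_2, \dots$ and fix two distinct representatives $\sigma_i^0 \ne \sigma_i^1$ in each $C_i$, together with an arbitrary fixed representative in every remaining class. For $g \in \{0,1\}^{\mathbb{N}}$, let the reduction select $\sigma_i^{g(i)}$ from $C_i$ and the fixed representative from every other class; by the bijection of the first step this is a bona fide minimal reduction, and distinct $g$ produce distinct reductions. Since $\{0,1\}^{\mathbb{N}}$ has cardinality $2^{\aleph_0}$, the set of minimal reductions of this particular $P_{\shuffle}$ is uncountable, which establishes the claim ``in general.''

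The only genuinely delicate point is the bijection in the first step: verifying that the ``precisely one'' clause of Definition \ref{def:reduction} forces a minimal reduction to meet each $\equiv_{\Ith}$-class in a single element (so that both surjectivity onto, and injectivity of, the correspondence with selectors hold). Everything after that is a routine Parikh-image computation plus a standard injection from Cantor space, so I expect no real difficulty there. If one prefers a genuinely recursive witness over the degenerate loops $a^*, b^*$, the same argument goes through verbatim for any two unbounded components whose shuffle makes two commuting letters co-occur arbitrarily often, matching the surrounding observation that unboundedness together with multi-element classes already suffices.
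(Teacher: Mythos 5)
Your proof is correct and follows essentially the same route as the paper's: exhibit two unary components whose shuffle has infinitely many $\equiv_{\Ith}$-classes of size at least two, identify minimal reductions with choice functions on the quotient, and conclude uncountability. The only cosmetic difference is your witness $a^*, b^*$ versus the paper's $aa^*, bb^*$ (the latter makes \emph{every} class have size $\ge 2$, sparing the case split you handle explicitly), and your spelled-out Cantor-space injection, which the paper leaves implicit.
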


Our goal is to introduce an infinite yet finitely representable (and thus countable) subset of this. To this end, we are interested in a selector (from equivalence classes) that can be finitely represented. Inspired by \cite{FarzanV2020,FarzanKP2022}, we focus on selectors of the form $E \mapsto \min_{\preceq}(E)$, where $E$ is an equivalence class, and $\preceq$ is a \emph{contextual lexicographic order}, a generalization of the standard lexicographic orders. We define a \emph{contextual order} on $\tS$ as a map from $\tS^*$ to strict partial orders on $\tS$ (denoted $\PO(\tS)$);
this allows the order on letters to vary according to the left context, which is a word over $\tS$.
Each contextual order on $\tS$ determines a contextual lexicographic order on $\tS^*$.

\begin{definition}[Contextual Lexicographic Order] \label{def:clo}
    Given a contextual order $\prec:\tS^*\to\PO(\tS)$, we lift $\prec$ to a \emph{contextual lexicographic order} (CLO) on $\tS^*$ by having $\sigma \preceq \rho$ if and only if
    \begin{itemize}
        \item $\sigma$ is a prefix of $\rho$, or
        \item $\sigma = \alpha a \beta$ and $\rho = \alpha b \gamma$ for some $\alpha, \beta, \gamma \in \tS^*$ and $a,b\in\tS$ such
        that $a\prec_{\alpha}b$.
    \end{itemize}
\end{definition}

The standard lexicographic orders are the special case where the map $\prec$ is a constant function and its image is a total order.

\begin{proposition}\label{prop:total-clo}
    If $\prec$ is a contextual order on $\tS$, then the CLO induced by $\prec$ is a partial order on $\tS^*$. Furthermore, if $\prec_x$ is a total order on $\tS$ for every $x \in \tS^*$, then the CLO induced by $\prec$ is total.
\end{proposition}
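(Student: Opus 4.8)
The plan is to verify reflexivity, antisymmetry, and transitivity of $\preceq$ directly from Definition~\ref{def:clo}, and then handle totality separately. The first step I would take is to record a normalizing observation that makes all the subsequent case analysis manageable: whenever the second clause of the definition applies, the witness $\alpha$ is forced to be the \emph{longest common prefix} of $\sigma$ and $\rho$. Indeed, since $\prec_\alpha$ is a strict partial order it is irreflexive, so $a \prec_\alpha b$ forces $a \neq b$; hence $\sigma = \alpha a\beta$ and $\rho = \alpha b\gamma$ first disagree at position $|\alpha|+1$, so $\alpha$ is exactly their longest common prefix and $a,b$ are the first differing letters. Consequently the two clauses are mutually exclusive and the decomposition used in the second clause is unique. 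I would also use the elementary fact that any two prefixes of one common word are comparable under the prefix relation.

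Reflexivity is immediate, since every word is a prefix of itself. For antisymmetry, I would assume $\sigma \preceq \rho$ and $\rho \preceq \sigma$ and split on the justifying clauses. If both hold by the prefix clause, then $\sigma$ and $\rho$ are prefixes of each other, so $\sigma = \rho$. A prefix witness in one direction together with a second-clause witness in the other is impossible, because the second clause produces a first differing letter that contradicts the prefix relationship. If both hold by the second clause, then by the normalizing observation both invoke the same $\alpha$ (the longest common prefix) and the same first differing letters $a$ (in $\sigma$) and $b$ (in $\rho$), yielding $a \prec_\alpha b$ and $b \prec_\alpha a$, which contradicts the asymmetry of $\prec_\alpha$.

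The main work, and the step I expect to be the principal obstacle, is transitivity: assuming $\sigma \preceq \rho$ and $\rho \preceq \tau$, derive $\sigma \preceq \tau$. I would split on which clause justifies each hypothesis. The prefix/prefix case is trivial, and the two mixed cases reduce to short computations showing that $\sigma$ either remains a prefix of $\tau$ or acquires a first differing letter that is $\prec$-below $\tau$'s letter in the common context. The genuinely delicate case is second-clause against second-clause: here I write $\sigma=\alpha a\beta$, $\rho=\alpha b\gamma$ with $a\prec_\alpha b$, and $\rho=\alpha' a'\beta'$, $\tau=\alpha' b'\gamma'$ with $a'\prec_{\alpha'} b'$, where $\alpha$ and $\alpha'$ are both prefixes of $\rho$ and hence comparable. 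If $|\alpha|<|\alpha'|$ or $|\alpha|>|\alpha'|$, one checks that the shorter of the two prefixes already exhibits the first disagreement between $\sigma$ and $\tau$, with the correct $\prec$-orientation inherited from one of the two hypotheses. When $|\alpha|=|\alpha'|$ we get $\alpha=\alpha'$ and $b=a'$ (both equal the letter of $\rho$ at position $|\alpha|+1$), so $a\prec_\alpha b=a'\prec_\alpha b'$, and transitivity of the strict partial order $\prec_\alpha$ gives $a\prec_\alpha b'$, whence $\sigma\preceq\tau$ via the second clause. This final subcase is precisely where transitivity of the per-context orders is essential.

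Finally, for totality I would assume every $\prec_x$ is a strict total order and take arbitrary $\sigma,\rho$. If one is a prefix of the other, the prefix clause yields comparability directly. Otherwise let $\alpha$ be their longest common prefix and write $\sigma=\alpha a\beta$, $\rho=\alpha b\gamma$ with $a\neq b$; trichotomy of $\prec_\alpha$ forces either $a\prec_\alpha b$ or $b\prec_\alpha a$, i.e.\ $\sigma\preceq\rho$ or $\rho\preceq\sigma$. I expect no difficulty here beyond reusing the longest-common-prefix decomposition already set up for the partial-order part.
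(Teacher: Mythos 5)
Your proof is correct and follows essentially the same route as the paper's: a direct case analysis on the two clauses of Definition~\ref{def:clo}, with the crucial second-clause/second-clause transitivity case split according to how the two context prefixes of the middle word compare (equal, or one a proper prefix of the other), exactly as in the paper. Your write-up is in fact more detailed than the paper's — which dispatches reflexivity, antisymmetry, the mixed transitivity cases, and totality with brief remarks — and your longest-common-prefix normalization (forced by irreflexivity of $\prec_\alpha$) is a clean observation the paper leaves implicit.
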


\begin{example}[A Round-Robin\footnote{Round-robin here is an intuitive simplification of (1,1)-lockstep, whose formal definition is presented in Section \ref{sec:cr}.} CLO]\label{ex:rr}
    Let $\tS_1 = \{\texttt{(} ,\texttt{)}\}$ and $\tS_2 = \{\texttt{[} , \texttt{]}\}$ be the VP alphabet of parentheses and brackets respectively, and let $\tS = \tS_1 \uplus \tS_2$. Define the contextual order $\prec$ as follows: for any words $x, w \in \tS^*$ such that $w$ is nonempty and well-matched, 
    \begin{align*}
        x \texttt{(} &\mapsto \texttt{[} < \texttt{(} < \texttt{)} < \texttt{]} \\
        x \texttt{(} w &\mapsto \texttt{)} < \texttt{(} < \texttt{[} < \texttt{]} \\
        x \texttt{[} w &\mapsto \texttt{]} < \texttt{[} < \texttt{(} < \texttt{)} \\
        \text{otherwise} &\mapsto \texttt{(} < \texttt{[} < \texttt{]} < \texttt{)}
    \end{align*}
    Intuitively, $\prec$ simulates round-robin scheduling and it enforces well-nestedness. Let $\preceq$ be the CLO induced by the contextual order $\prec$. Consider $\alpha = \texttt{([()])}$, $\beta = \texttt{(([]))}$, and $\gamma = \texttt{([(]))}$. They are in the same equivalence class under $\Ith$, and we have $\alpha \preceq \beta$ and $\alpha \preceq \gamma$. Note that $\beta$ does not follow a round-robin scheduling and $\gamma$ is ill-nested, and $\alpha$ is selected over both by this CLO.
    \qed
\end{example}

If $\preceq$ is a CLO, the set of $\preceq$-minimal representatives of the equivalence classes, induced by a commutativity relation $I$, of a language $L$ form a \emph{lex reduction} of $L$.

\begin{definition}[Lex Reduction]\label{def:lexreduction}
    Let $L$ be a language over $\tS$ and $I \subseteq \tS \times \tS$ a commutativity relation. The reduction of $L$ induced by contextual order $\prec$ on $\tS$ is defined as
    \[
    \red_{I,\prec}(L):=\{w \in L \mid \forall u \in L:\ u \equiv_I w \land u \preceq w \implies u = w \},
    \]
    where $\preceq$ is the CLO induced by $\prec$.
\end{definition}

If $\preceq$ is total, then $\red_{I,\prec}(L)$ is a minimal reduction. For simplicity, we assume from now on that the contextual orders in consideration map words over $\tS$ to strict total orders on $\tS$ (denoted $\Lin(\tS)$), which is a sufficient (but not necessary) condition for the order yielding a minimal reduction. Recall that $\Ith$ is the relation that declares any letters from distinct alphabets as commuting. Since we are strictly interested in this commutativity relation, we use the shorthand notation $\red_\prec$ in place of $\red_{\Ith,\prec}$, since the commutativity relation will always be $\Ith$.

\begin{example} \label{exm:lex-reductions}
    We continue in the setting of Example \ref{ex:rr}. Observe that for \emph{any} $P_1 \subset \{ \texttt{(}^n \texttt{)}^n : n > 0 \}$ and $P_2 \subset \{ \texttt{[}^n \texttt{]}^n : n > 0 \}$, the reduction $\red_\prec(P_1 \shuffle P_2)$ is precisely the round-robin like scheduling of $P_1$ and $P_2$ that also enforces well-nestedness.
\end{example}


If $I$ is a sound commutativity relation (such as $\Ith$ in the hypersafety context), then any reduction under $I$ can be verified in place of the original program (Theorem \ref{thm:soundness-of-reductions}). This naturally applies to lex reductions, which we restate as follows:

\begin{theorem}[Soundess of Lex Reductions]\label{thm:soundness-of-lexreduction}
Let $I$ be a sound commutativity relation over $\tS^*$ and $\prec$ a contextual order on $\tS$. The lex reduction $\red_{I,\prec}(P_1 \shuffle \dots \shuffle P_n)$ satisfies the \ref{con:sound} condition.
\end{theorem}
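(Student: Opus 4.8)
The plan is to reduce the claim to the already-proved Theorem~\ref{thm:soundness-of-reductions}: since $I$ is assumed sound, it suffices to show that the lex reduction $\red_{I,\prec}(P_1 \shuffle \dots \shuffle P_n)$ is a \emph{reduction} of $P_1 \shuffle \dots \shuffle P_n$ in the sense of Definition~\ref{def:reduction}. Instantiating Theorem~\ref{thm:soundness-of-reductions} with $P := P_1 \shuffle \dots \shuffle P_n$ and $\nP := \red_{I,\prec}(P_1 \shuffle \dots \shuffle P_n)$ then yields the \ref{con:sound} condition directly. So the entire burden of the argument is the bridging lemma ``every lex reduction is a reduction''.

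Writing $L := P_1 \shuffle \dots \shuffle P_n$, I would verify the two requirements of Definition~\ref{def:reduction}. The inclusion $\red_{I,\prec}(L) \subseteq L$ is immediate from Definition~\ref{def:lexreduction}. For the covering requirement, I must exhibit, for each $\rho \in L$, some $\sigma \in \red_{I,\prec}(L)$ with $\sigma \equiv_I \rho$.

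The crux is a finiteness observation. The generating step $uabv \equiv_I ubav$ of $\equiv_I$ merely permutes two adjacent letters, so it preserves both the length of a word and its multiset of letters; hence every word $\equiv_I$-equivalent to $\rho$ is one of the finitely many permutations of $\rho$, and each $\equiv_I$-class is finite. Consequently $E := \{u \in L : u \equiv_I \rho\}$ is finite and nonempty (it contains $\rho$). Because $\preceq$ is a partial order on $\tS^*$ (Proposition~\ref{prop:total-clo}), the finite nonempty set $E$ has a $\preceq$-minimal element $\sigma$. Unwinding Definition~\ref{def:lexreduction}, being $\preceq$-minimal within $E$ is exactly the membership condition of $\red_{I,\prec}(L)$, so $\sigma \in \red_{I,\prec}(L)$, while $\sigma \in E$ gives $\sigma \equiv_I \rho$. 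This establishes the covering requirement, so $\red_{I,\prec}(L)$ is a reduction of $L$ under $I$.

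The only genuine content is the finiteness of $\equiv_I$-classes and the resulting existence of $\preceq$-minimal elements; the rest is bookkeeping. The main point to handle carefully is matching the informal phrase ``$\preceq$-minimal within its class intersected with $L$'' against the precise quantifier structure of Definition~\ref{def:lexreduction}---in particular that the universal quantifier there ranges over $u \in L$ rather than all of $\tS^*$, so that membership in the lex reduction coincides exactly with $\preceq$-minimality inside $E$. I would also note that totality of $\preceq$ is \emph{not} needed for this statement: a partial CLO already yields minimal elements and hence a sound reduction, with totality only mattering later for \emph{minimality} of the reduction (one representative per class). With the reduction property in hand, invoking Theorem~\ref{thm:soundness-of-reductions} using the soundness of $I$ closes the argument.
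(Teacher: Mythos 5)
Your proposal is correct and follows the same route as the paper, which presents this theorem as an immediate consequence of Theorem~\ref{thm:soundness-of-reductions} applied to lex reductions (the paper gives no separate proof). Your bridging argument---that each $\equiv_I$-class is finite because swaps preserve the multiset of letters, so $\preceq$-minimal elements exist even when the CLO is only a partial order (Proposition~\ref{prop:total-clo}), making $\red_{I,\prec}(P_1 \shuffle \dots \shuffle P_n)$ a reduction in the sense of Definition~\ref{def:reduction}---correctly fills in exactly the step the paper leaves implicit.
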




We conclude our discussion around lex reductions with the following observations, which motivate the content of the next section. 

\begin{proposition}\label{prop:nonvpllex}
    There exists a contextual order $\prec$ such that $\red_\prec(P_1 \shuffle P_2)$ is not context-free. There also exists a contextual order $\prec$ such that $\red_\prec(P_1 \shuffle P_2)$ contains ill-nested words.
\end{proposition}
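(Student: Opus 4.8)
The plan is to establish both claims with explicit witnesses built from the running example, taking $P_1 = \{\texttt{(}^n\texttt{)}^n : n > 0\}$ and $P_2 = \{\texttt{[}^m\texttt{]}^m : m > 0\}$ over $\tS_1 = \{\texttt{(},\texttt{)}\}$ and $\tS_2 = \{\texttt{[},\texttt{]}\}$ (valid recursive programs, and exactly the Dyck-like languages of Examples~\ref{ex:rr} and \ref{exm:lex-reductions}). The first observation is that each $\equiv_{\Ith}$-equivalence class of $P_1 \shuffle P_2$ is precisely the set of all interleavings of a fixed pair $(\texttt{(}^n\texttt{)}^n,\ \texttt{[}^m\texttt{]}^m)$, so $\red_\prec$ selects, for each $(n,m)$, one interleaving, namely the $\preceq$-minimal one. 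I would drive the whole argument by arranging for $\prec$ to select, for every pair, the fully crossing interleaving $\texttt{(}^n\texttt{[}^m\texttt{)}^n\texttt{]}^m$.

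For a concrete $\prec$ I would take the constant contextual order (equivalently, an ordinary lexicographic order) induced by $\texttt{(} \prec \texttt{[} \prec \texttt{)} \prec \texttt{]}$. The key lemma is then that $\red_\prec(P_1 \shuffle P_2) = \{\texttt{(}^n\texttt{[}^m\texttt{)}^n\texttt{]}^m : n, m > 0\}$. I would prove this by the standard greedy characterization of the lexicographically least word within a finite set of equal-length words: at each position emit the $\prec$-least letter whose current prefix still extends to a word of $P_1 \shuffle P_2$. Since the shuffle imposes no well-nestedness constraint, a prefix is completable iff its two projections can still be finished to $\texttt{(}^n\texttt{)}^n$ and $\texttt{[}^m\texttt{]}^m$; tracing the greedy choice then emits all $\texttt{(}$'s, then all $\texttt{[}$'s, then all $\texttt{)}$'s, then all $\texttt{]}$'s, yielding exactly the crossing word. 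I would sanity-check this on the base case $(n,m)=(1,1)$, where the six interleavings of $\texttt{()}$ and $\texttt{[]}$ have lex-least element $\texttt{([)]}$.

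With the reduction identified, both claims follow quickly. For non-context-freeness, relabelling $\texttt{(},\texttt{[},\texttt{)},\texttt{]}$ as $a,b,c,d$ turns the reduction into the classical cross-serial language $\{a^n b^m c^n d^m : n, m > 0\}$, which I would show is not context-free via the pumping lemma: on $z = a^p b^p c^p d^p$ any factor of length $\le p$ meets at most two adjacent blocks, so pumping necessarily breaks one of the equalities $\#a = \#c$ or $\#b = \#d$. For the ill-nestedness claim, I would note that the very same $\prec$ already suffices: in any $\texttt{(}^n\texttt{[}^m\texttt{)}^n\texttt{]}^m$ with $n,m \ge 1$ the VP matching relation pairs the first $\texttt{)}$ with the last $\texttt{[}$, and these lie in different $\tS_i$, violating Definition~\ref{def:wnw}.

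The main obstacle is the key lemma pinning down the exact shape of the reduction: one must argue rigorously that the constrained greedy lexicographic selection yields precisely the crossing interleaving, in particular that completability never forces an earlier return to be emitted before all calls are out. The non-context-freeness and ill-nestedness steps are then routine. A pleasant bonus is that a single (and even constant) contextual order witnesses both parts of the proposition simultaneously, underscoring why restricting to VPL-representable reductions is genuinely necessary.
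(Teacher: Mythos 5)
Your proof is correct, and it takes a genuinely different route from the paper's. The paper also argues by explicit witness, but it uses the full Dyck languages of balanced parentheses and brackets for $P_1, P_2$ together with a genuinely \emph{contextual} order that alternates with the parity of the prefix length (even-length prefixes order letters as $\texttt{(}<\texttt{)}<\texttt{[}<\texttt{]}$, odd-length ones as $\texttt{[}<\texttt{]}<\texttt{(}<\texttt{)}$); the reduction then contains the words $(\texttt{([})^{p}(\texttt{(]})^{p}\texttt{)}^{2p}$, which the paper pumps, leaving the exact characterization of the reduction (and hence non-membership of the pumped strings) implicit. You instead use the simpler components $\{\texttt{(}^n\texttt{)}^n\}$ and $\{\texttt{[}^m\texttt{]}^m\}$ with a \emph{constant} order, and your greedy lemma---which is sound, since within an equivalence class (all interleavings of one fixed pair, as commutation is full across the disjoint alphabets) any prefix can be completed, so emitting the least available letter is always safe---identifies the reduction exactly as the cross-serial language $\{a^n b^m c^n d^m : n,m>0\}$, making the pumping and ill-nestedness steps textbook. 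What each approach buys: your version is more self-contained and shows the slightly stronger fact that even an ordinary, non-contextual lexicographic order already destroys context-freeness and well-nestedness; the paper's parity-based order is chosen so that the single example simultaneously discharges Proposition \ref{prop:dummy}, since that order is visibly pushdown---though your constant order is likewise a VPO (a one-state complete VPA represents it), so your example would serve there equally well.
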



\subsection{Visibly Pushdown Well-Nested Lex Reductions}
The conclusion from the results presented in the previous section is that we need to further restrict the set of lex reductions so that we can get reductions that are well-nested and visibly pushdown. We achieve this by restricting the set of contextual orders in consideration.

\begin{definition}[Visibly Pushdown Contextual Order (VPO)]\label{def:representability}
A contextual order $\prec: \tS^{*} \to \Lin(\tS)$ is \emph{represented by a deterministic VPA} $A = (Q_A,Q_A^{\mathsf{in}},\Gamma_A,\delta_A,Q_A^{\mathsf{F}})$ if $A$ is complete and there is a map $\ord: Q_A \to \Lin(\tS)$ such that for any $w \in \tS^*$, if the run of $A$ on $w$ ends at state $q$, then ${\prec_w}$ is $\ord(q)$. We call such orders {\em visibly pushdown contextual orders}.
\end{definition}
An alternative way of interpreting this definition, for the reader unfamiliar with language theory, is to think about the contextual order to be definable through a scheduler that has the same computation power as a visibly pushdown automaton.
\begin{example}
    The contextual order $\prec$ in Example \ref{ex:rr} is visibly pushdown, which can be seen as follows. 
    Under the map $\prec$, the preimage of each order on $\tS$ is a VPL: the four different types of prefixes (i.e., context) that would lead to the four different choices of orders are all recognizable by deterministic VPA. Then, the product of the four VPAs (as in the proof of closure under intersection \cite{AlurM2004}) can represent the contextual order $\prec$.
\end{example}

The set of all minimal runs defined by any visibly pushdown order is visibly pushdown. In other words, if the order is VP, then all schedules (independent of the choice of language) that are accepted by the order form a VP language.
\begin{proposition}\label{prop:vp-sigma-star}
    If $\prec$ is a VPO, then $\red_\prec(\tS^*)$ is visibly pushdown.
\end{proposition}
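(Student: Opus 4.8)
The plan is to prove VPL-membership by recognizing the \emph{complement} $\tS^{*}\setminus\red_\prec(\tS^{*})$ --- the set of words that are \emph{not} the $\preceq$-minimal representative of their $\equiv_{\Ith}$-class --- and then appeal to closure of VPLs under complementation \cite{AlurM2004}. The starting point is a purely combinatorial characterization of non-minimality that adapts the classical description of lexicographic normal forms of Mazurkiewicz traces to contextual orders. Concretely, I would prove: a word $w$ lies outside $\red_\prec(\tS^{*})$ if and only if it factors as $w = x\,a\,y\,b\,z$ with $a,b\in\tS$ and $x,y,z\in\tS^{*}$ such that (i) $(a,b)\in\Ith$, (ii) every letter of $y$ commutes with $b$, and (iii) $b\prec_x a$. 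Writing $\tS_{[b]}$ for the component $\tS_i$ containing $b$, and using that $\Ith$ declares exactly the cross-component pairs commuting, conditions (i)--(ii) say precisely that $a\notin\tS_{[b]}$ and $y\in(\tS\setminus\tS_{[b]})^{*}$.

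The easy (sufficiency) direction is immediate from Definition \ref{def:clo}: given such a factorization, $b$ may be commuted leftward past $ay$ to yield $u = x\,b\,a\,y\,z \equiv_{\Ith} w$, and since $u$ and $w$ share the prefix $x$ and then differ with $b\prec_x a$, we get $u\prec w$ with $u\neq w$, so $w$ is not minimal. The necessity direction is the crux. Given $u\equiv_{\Ith}w$ with $u\prec w$ and $u\neq w$, totality of $\prec_x$ and Definition \ref{def:clo} force $w = x\,a\,\gamma$ and $u = x\,b\,\beta$ with $b\prec_x a$ and $a\neq b$; choosing $y$ to run up to the \emph{first} occurrence of $b$ in $a\gamma$ gives $w = x\,a\,y\,b\,z$. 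I would then invoke the standard trace-theoretic fact that the first occurrence of a letter which can be commuted to the front of a word must commute with its entire left context (cancelling the common prefix $x$ yields $a\,y\,b\,z\equiv_{\Ith}b\,\beta$). Since no letter of $\tS_{[b]}$ commutes with $b$, this forces $ay$ to avoid $\tS_{[b]}$, establishing (i)--(ii).

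With the characterization in hand, VPL-ness of the complement follows by assembling VPLs under the operations of \cite{AlurM2004}. For each state $q\in Q_A$ of the representing deterministic complete VPA $A$, the set $L_q := \{x\in\tS^{*} : A \text{ ends in state } q \text{ on } x\}$ is a VPL (it is $A$ with final-state set $\{q\}$), and by Definition \ref{def:representability} every $x\in L_q$ satisfies $\prec_x=\ord(q)$. The non-minimal words are then exactly
\[
\tS^{*}\setminus\red_\prec(\tS^{*})\;=\bigcup_{\substack{q\in Q_A,\ a,b\in\tS\\ (a,b)\in\Ith,\ b\prec_{\ord(q)}a}} L_q\cdot\{a\}\cdot(\tS\setminus\tS_{[b]})^{*}\cdot\{b\}\cdot\tS^{*},
\]
where $b\prec_{\ord(q)}a$ means $b$ precedes $a$ in the total order $\ord(q)$. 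Each factor is a VPL ($L_q$ as above; the singletons, $(\tS\setminus\tS_{[b]})^{*}$ and $\tS^{*}$ are regular hence VPL), the union is finite, and VPLs are closed under concatenation and union \cite{AlurM2004}; complementing yields $\red_\prec(\tS^{*})$ as a VPL.

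The main obstacle is the necessity direction of the characterization: one must be careful that the \emph{contextual} comparison takes place exactly at the common prefix $x$ (so that Definition \ref{def:clo} supplies $b\prec_x a$ at the right context), and one relies on the trace-theoretic first-occurrence lemma to pin down that the commuted letter $b$ clears its entire left segment $ay$. Everything downstream --- the reduction to a finite union of concatenations and the appeal to VPL closure properties --- is routine once the characterization and the bookkeeping through the states of $A$ are in place.
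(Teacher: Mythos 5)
Your proof is correct, but it follows a genuinely different route from the paper's. The paper proves Proposition \ref{prop:vp-sigma-star} by a direct sleep-set construction \cite{Godefroid1996}: it builds a VPA for $\red_{I,\prec}(\tS^*)$ whose states are pairs $(r,s)$ of a sleep set $r\subseteq\tS$ and a state $s$ of the order-representing VPA; a letter $\sigma$ can be read only if $\sigma\notin r$, and the successor sleep set consists of the letters commuting with $\sigma$ that are either already in $r$ or below $\sigma$ in $\ord(s)$. This immediately yields a VPA with $2^{|\tS|}\cdot|Q_A|$ states, works for a generic commutativity relation $I$ (not only $\Ith$), and is the construction the paper reuses downstream for its size bounds (Theorem \ref{thm:vpred}) and the optimized construction of Section \ref{sec:oc}. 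You instead characterize the \emph{complement} by an Anisimov--Knuth-style forbidden-factor condition adapted to contextual orders (with the comparison correctly anchored at the common prefix $x$), express it as the finite union of concatenations $L_q\cdot\{a\}\cdot(\tS\setminus\tS_{[b]})^*\cdot\{b\}\cdot\tS^*$, and invoke closure of VPLs under concatenation, union, and complementation \cite{AlurM2004}. The two arguments share the same combinatorial core --- your first-occurrence lemma is precisely what makes the paper's sleep-set bookkeeping sound, and your version has the merit of making that step explicit where the paper writes only ``it can be shown'' --- but yours is less economical: VPA complementation requires determinization, and together with the nondeterminism introduced by concatenation this produces an automaton far larger than the paper's, which is immaterial for the proposition as stated but matters for the quantitative results later in the paper. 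Your argument is also tied to $\Ith$ (a letter commutes with $b$ iff it lies outside $\tS_{[b]}$), though it generalizes to arbitrary $I$ by replacing $(\tS\setminus\tS_{[b]})^*$ with $(I(b))^*$.
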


Yet, this result is not strong enough to imply that a minimal reduction of $P_1 \shuffle \cdots \shuffle P_n$ induced by a VPO is visibly pushdown or well-nested.

\begin{proposition}\label{prop:dummy}
There exists a VPO $\prec$ such that $\red_\prec(P_1 \shuffle P_2)$ is not context-free. There also exists a VPO $\prec$ such that $\red_\prec(P_1 \shuffle P_2)$ contains ill-nested words.
\end{proposition}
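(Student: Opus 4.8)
The plan is to settle both existential claims at once with a single, maximally simple witness: a \emph{constant} contextual order, which is automatically a VPO. Take $\tS_1 = \{\texttt{(},\texttt{)}\}$, $\tS_2 = \{\texttt{[},\texttt{]}\}$, and let $P_1 = \{\texttt{(}^n\texttt{)}^n : n \ge 0\}$ and $P_2 = \{\texttt{[}^m\texttt{]}^m : m \ge 0\}$ be the two Dyck-like well-matched VPLs (each generated by a one-rule well-matched VPG, hence a legitimate recursive component). Fix the total order $\texttt{[} < \texttt{(} < \texttt{]} < \texttt{)}$ on $\tS$ and let $\prec$ be the constant contextual order with this image. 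A one-state complete deterministic VPA, with its single state labelled by this total order via $\ord$, represents $\prec$, so $\prec$ is a VPO; moreover the induced CLO is then the ordinary lexicographic order, which is total, so $\red_\prec$ is a minimal reduction by Proposition~\ref{prop:total-clo}.

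The first step is to compute $\red_\prec(P_1 \shuffle P_2)$ explicitly. I would observe that each $\equiv_\Ith$-class of $P_1 \shuffle P_2$ is exactly $\texttt{(}^n\texttt{)}^n \shuffle \texttt{[}^m\texttt{]}^m$ for a unique pair $(n,m)$, since commutation under $\Ith$ preserves both component projections, and then read off its lexicographic minimum by the standard greedy rule: at each position emit the $\prec$-least letter that still admits a completion respecting the two fixed projections. Because the only constraint imposed by the shuffle is that each component's letters occur in its own fixed order, the greedy choice first exhausts all $\texttt{[}$'s, then all $\texttt{(}$'s, then all $\texttt{]}$'s, then all $\texttt{)}$'s, yielding
\[
\red_\prec(P_1 \shuffle P_2) = \{\, \texttt{[}^m\texttt{(}^n\texttt{]}^m\texttt{)}^n : n,m \ge 0 \,\}.
\]

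From this closed form both claims follow. For ill-nestedness, whenever $n,m \ge 1$ the first return $\texttt{]}$ matches (pops) the most recently pushed call $\texttt{(}$, so the matching relation pairs a $\tS_2$-return with a $\tS_1$-call; the word is ill-nested, establishing the second part. For non-context-freeness, renaming $\texttt{[},\texttt{(},\texttt{]},\texttt{)}$ to $a,b,c,d$ turns the reduction into the classic crossing-dependency language $\{a^m b^n c^m d^n : n,m \ge 0\}$. Since renaming is a bijective length-preserving homomorphism, context-freeness is preserved in both directions, and a routine application of the pumping lemma to $a^p b^p c^p d^p$ finishes the job: any pump of bounded window $vwx$ meets at most two adjacent blocks and so cannot preserve both $\#a = \#c$ and $\#b = \#d$. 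Hence $\red_\prec(P_1 \shuffle P_2)$ is not context-free (and a fortiori not visibly pushdown), establishing the first part, and the same $\prec$ witnesses both statements.

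The routine ingredients here are the VPO-membership of a constant order and the CFL pumping argument; the one step demanding care is the explicit computation of the reduction. The subtlety is to justify that greedy lexicographic minimization is legitimate, i.e. that at each prefix the set of next emittable letters is precisely the next unconsumed letter of each component, and to see that the \emph{crossing} letter order $\texttt{[} < \texttt{(} < \texttt{]} < \texttt{)}$ is exactly what forces the interleaving into the non-context-free crossing pattern rather than a nested (hence visibly pushdown) one. Conceptually this is an instance of the identity $\red_\prec(P_1 \shuffle P_2) = \red_\prec(\tS^*) \cap (P_1 \shuffle P_2)$, whose first factor is visibly pushdown by Proposition~\ref{prop:vp-sigma-star} while the second is not even context-free, so their intersection need not be context-free.
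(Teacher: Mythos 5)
Your proof is correct, but it takes a genuinely different route from the paper's. The paper also exhibits a single witness serving both claims, but it uses a \emph{non-constant} order --- one that flips between $\texttt{(}<\texttt{)}<\texttt{[}<\texttt{]}$ and $\texttt{[}<\texttt{]}<\texttt{(}<\texttt{)}$ according to the parity of the prefix length (a VPO via a two-state automaton) --- applied to the \emph{full} Dyck languages of balanced parentheses and brackets; it then pumps a particular word of the form $\texttt{([}\cdots\texttt{([}\,\texttt{(]}\cdots\texttt{(]}\,\texttt{)}^{2p}$ lying in the reduction, without ever computing the reduction in closed form. Your construction replaces both ingredients with simpler ones: a constant order (trivially a VPO, via a one-state complete deterministic VPA) and the thinner components $\{\texttt{(}^n\texttt{)}^n\}$, $\{\texttt{[}^m\texttt{]}^m\}$, for which the greedy argument pins down the reduction exactly as $\{\texttt{[}^m\texttt{(}^n\texttt{]}^m\texttt{)}^n\}$. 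That closed form is what your approach buys: ill-nestedness is visible by inspection, and non-context-freeness reduces to the textbook crossing language $\{a^mb^nc^md^n\}$, so the pumping argument is entirely routine rather than bespoke. Your greedy step is properly justified --- the $\equiv_{\Ith}$-class of a word in the shuffle is exactly the set of interleavings of its two projections, all of equal length, and any partial interleaving completes, so lexicographic minimization is indeed greedy. What the paper's choice buys in exchange is mainly illustrative: its witness order is genuinely contextual (emphasizing the breadth of the VPO class) and its components are the full Dyck languages, closer in spirit to the run languages of actual recursive programs; mathematically, though, both witnesses discharge both existential claims, and yours is the more elementary and self-contained argument.
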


What is the missing piece? Interestingly, there is a correlation between the reduction being well-nested and it being visibly pushdown. Namely, if we enforce well-nestedness, we get visibly pushdown for free.

\begin{theorem}[Visibly Pushdown Lex Reduction]\label{thm:vpwnred}
    If $\prec$ is visibly pushdown and $\red_\prec(P_1 \shuffle P_2) \subset \wn(P_1 \shuffle P_2)$, then $\red_\prec(P_1 \shuffle P_2)$ is visibly pushdown.\footnote{For notational simplicity, the statement is made only for two components, but this theorem and the ones below also hold for $n > 2$ components.}
\end{theorem}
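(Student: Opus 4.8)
The plan is to express the reduction as an intersection of two visibly pushdown languages and then invoke closure of VPLs under intersection. The central identity I would aim to establish is
\[
\red_\prec(P_1 \shuffle P_2) \;=\; \wn(P_1 \shuffle P_2) \,\cap\, \red_\prec(\tS^*),
\]
after which the result is immediate: the second factor is visibly pushdown by Proposition~\ref{prop:vp-sigma-star}, the first is visibly pushdown by the well-nested shuffle construction sketched below, and VPLs are closed under intersection.

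First I would record the basic fact that swapping two adjacent letters from distinct alphabets preserves both projections, so $u \equiv_{\Ith} w$ implies $\Pi_{\tS_i}(u) = \Pi_{\tS_i}(w)$ for $i = 1,2$. Since membership in $P_1 \shuffle P_2$ depends only on these projections, an entire $\equiv_{\Ith}$-equivalence class is either contained in $P_1 \shuffle P_2$ or disjoint from it. Consequently, for any $w \in P_1 \shuffle P_2$ its whole equivalence class lies inside $P_1 \shuffle P_2$, so the $\preceq$-minimality condition quantified over $P_1 \shuffle P_2$ ranges over exactly the same set of competitors as when quantified over $\tS^*$. This yields $\red_\prec(P_1 \shuffle P_2) = (P_1 \shuffle P_2) \cap \red_\prec(\tS^*)$, and notably this step needs no totality assumption on $\preceq$. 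The hypothesis $\red_\prec(P_1 \shuffle P_2) \subseteq \wn(P_1 \shuffle P_2)$ then lets me intersect both sides with the set of well-nested words without altering the left-hand side, replacing $P_1 \shuffle P_2$ by $\wn(P_1 \shuffle P_2)$ and producing the target identity.

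The main obstacle is showing that $\wn(P_1 \shuffle P_2)$ is itself visibly pushdown, which demands a genuine automaton construction rather than a closure argument, since $P_1 \shuffle P_2$ need not even be context-free. Given deterministic VPAs $A_1, A_2$ for $P_1, P_2$ with (essentially) disjoint stack alphabets, I would build a product VPA over states $Q_1 \times Q_2$ and stack alphabet $\Gamma_1 \uplus \Gamma_2$ that, on a $\tS_i$-letter, updates the $i$-th component exactly as $A_i$ does, pushing/popping $\Gamma_i$ symbols accordingly, and that provides a return transition \emph{only} when the popped symbol belongs to the same alphabet as the return. The crux of correctness is the non-crossing property of matching relations: in a well-nested word, between any matched $\tS_1$ call/return pair the interleaved $\tS_2$-letters form a balanced subword (and symmetrically). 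This guarantees that at every return the stack top already carries a symbol of the matching alphabet, so the product VPA neither blocks on nor misreads a well-nested input, its $\tS_i$-projected run coincides with the run of $A_i$, and its undefined transitions reject exactly the ill-nested interleavings. Hence it accepts precisely the well-nested words whose projections lie in $P_1$ and $P_2$, that is $\wn(P_1 \shuffle P_2)$.

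With both factors shown to be visibly pushdown, closure under intersection finishes the argument. I expect the well-nestedness/non-crossing reasoning in the product construction to be the delicate point: it is precisely what fails for ill-nested interleavings, which is why the hypothesis $\red_\prec(P_1 \shuffle P_2) \subseteq \wn(P_1 \shuffle P_2)$ is indispensable and cannot be dropped (cf.\ Proposition~\ref{prop:dummy}).
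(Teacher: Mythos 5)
Your proposal is correct and follows essentially the same route as the paper: both establish the identity $\red_\prec(P_1 \shuffle P_2) = \red_\prec(\tS^*) \cap \wn(P_1 \shuffle P_2)$ via closedness of $P_1 \shuffle P_2$ under $\Ith$ plus the well-nestedness hypothesis, and then conclude by Proposition~\ref{prop:vp-sigma-star}, the visibly-pushdown-ness of the well-nested shuffle, and closure of VPLs under intersection. The only cosmetic difference is that you prove the two ingredients inline (the closedness identity and the product-VPA construction for $\wn(P_1 \shuffle P_2)$), whereas the paper cites them as Equation~\eqref{eq:lex} and Theorem~\ref{thm:VPLs-closed-under-wnshuffle}; your product construction matches the paper's appendix proof of that theorem.
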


The problem is that the above condition is dependent on the specific choices of $P_i$'s and is rather non-constructive (or equivalently hard to check). 
To obtain a condition on the order alone, independent of $P_i$'s, we introduce the concept of \emph{coherent contextual orders}. Intuitively, a coherent contextual order deprioritizes any returns that do not match the last open/pending call, because if the non-matching return is prioritized, then one ends up violating well-nestedness. Formally:

\begin{definition}[Coherent Contextual Order] \label{def:coherence}
   Let $\tS_i$ be VP alphabets ($i = 1,2$) and $\tS = \tS_1 \uplus \tS_2$. A contextual order $\prec$ on $\tS$ is said to be {\em coherent} if for any word $u \in \tS^*$ and $i \in \{1,2\}$, if $u$ has pending calls and the last one is from $\Sc_i$, then for any letter $a \in \tS_i$ and $r \in \bigcup_{j \neq i} \Sr_j$, we have $a \prec_u r$.
\end{definition}

\begin{example}
    The contextual order $\prec$ in Example \ref{ex:rr} is coherent. For instance, if $u = x \texttt{(} w$, then the last pending call in $u$ is $\texttt{(} \in \Sc_1$, and we can verify that $a \prec_u \texttt{]}$ for all $a \in \tS_1$.
\end{example}

For specific languages $P_1, P_2$, coherency is not a necessary condition for the well-nestedness of $\red_\prec(P_1 \shuffle P_2)$. For instance, the word $u$ that witnesses the violation of coherency may not occur as a prefix in the shuffle $P_1 \shuffle P_2$ at all. However, coherency is always a sufficient condition.

\begin{proposition}\label{prop:coherency-implies-well-nestedness}
    Let $\prec$ be a coherent contexual order on $\tS = \tS_1 \uplus \tS_2$ and $P_1, P_2$ be well-matched VPLs over $\tS_1, \tS_2$ respectively. Then $\red_\prec(P_1 \shuffle P_2) \subset \wn(P_1 \shuffle P_2)$.
\end{proposition}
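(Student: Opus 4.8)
The plan is to prove the contrapositive: if $w \in P_1 \shuffle P_2$ is \emph{ill-nested}, then $w$ is not $\preceq$-minimal within its $\equiv_{\Ith}$-equivalence class, and hence $w \notin \red_\prec(P_1 \shuffle P_2)$. Concretely, I would exhibit a word $w'' \in P_1 \shuffle P_2$ with $w'' \equiv_{\Ith} w$ and $w'' \prec w$, obtained by a single block commutation whose direction is dictated precisely by coherence.

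The first step is to locate where ill-nestedness first manifests. Let $w_j$ be the leftmost return that is \emph{cross-matched}, i.e. $i \rightsquigarrow j$ with $w_i$ and $w_j$ in different components; write (without loss of generality) $w_i \in \Sc_1$ and $w_j \in \Sr_2$, and set $u = w_1 \cdots w_{j-1}$. Because a return in the matching relation always matches the most recent pending call, $w_i$ is the last pending call of $u$, which is exactly the premise required to invoke coherence at context $u$: for every $a \in \tS_1$ and every $r \in \Sr_2$ we then have $a \prec_u r$.

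The technical heart, and the step I expect to be the main obstacle, is to show that some $\tS_1$-letter occurs strictly to the right of position $j$. I would reduce this to a balance argument: it suffices that the number of $\Sc_1$-letters minus $\Sr_1$-letters in $u$ is strictly positive, since $\Pi_{\tS_1}(w) \in P_1$ is well-matched (total $\tS_1$-balance zero) and $w_j \in \tS_2$ does not change the $\tS_1$-balance, so a compensating $\Sr_1$-letter must appear after $j$. To establish positivity of the balance of $u$, I would prove a sublemma asserting that on the prefix $u$ -- where, by the choice of $j$, \emph{every} matched return is matched to a same-component call -- the global matching restricted to each $\tS_k$ coincides with the matching of the projection $\Pi_{\tS_k}(u)$, and the globally pending calls coincide with the projection-pending calls. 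The proof is an induction tracking the global stack alongside the two per-component stacks, with the invariant that the global stack projects onto each component stack; a same-matched return pops a top that is consistent across all three, so the invariant survives each step. Reconciling the global and per-projection matchings like this is the delicate point. Granting the sublemma, the globally pending $w_i \in \Sc_1$ is also pending in $\Pi_{\tS_1}(u)$, so the $\tS_1$-balance of $u$ is at least one and a $\tS_1$-letter indeed occurs after position $j$.

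Finally I would take $p'$ to be the least index $\ge j$ with $w_{p'} \in \tS_1$; by minimality $w_j, \ldots, w_{p'-1} \in \tS_2$, each commuting with $w_{p'}$ under $\Ith$. Sliding $w_{p'}$ leftward to just before $w_j$ produces $w''$, which is $\equiv_{\Ith}$-equivalent to $w$ and still lies in $P_1 \shuffle P_2$, since only different-alphabet letters were reordered and hence both projections $\Pi_{\tS_1}$ and $\Pi_{\tS_2}$ are preserved. As $w''$ and $w$ agree on $u$ and then first differ at position $j$, with $w_{p'} \in \tS_1$ against $w_j \in \Sr_2$, coherence at context $u$ yields $w_{p'} \prec_u w_j$, and the definition of the CLO gives $w'' \prec w$, contradicting the minimality of $w$. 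The argument extends verbatim to $n > 2$ components, since the sublemma and the commutation only ever distinguish the component of the pending call from the others.
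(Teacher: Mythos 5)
Your proposal is correct and follows essentially the same argument as the paper's proof: locate the first cross-matched return, observe that the wrongly matched call is the last pending call of the preceding prefix, and use coherence to justify commuting the first same-component letter found after that return leftward past the intervening other-component letters, yielding a strictly $\preceq$-smaller equivalent interleaving. The only difference is one of rigor: your balance argument and the sublemma identifying global matching with projection matching on the clean prefix make explicit what the paper leaves implicit when it writes the word in the form $\alpha \mathtt{(} \beta \mathtt{]} \gamma \mathtt{)} \delta$ (i.e., that a $\tS_1$-letter necessarily occurs after the offending return).
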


Note that a coherent contextual order is not necessarily visibly pushdown, and vice versa. As a consequence of Theorem \ref{thm:vpwnred} and Proposition \ref{prop:coherency-implies-well-nestedness}, we can conclude that the combination of the two properties in an order is sufficient to guarantee that it induces a reduction that is both visibly pushdown and well-nested.

\begin{theorem}[VP and Coherent Contextual Order]\label{thm:vpred-vp-coherent}
    If $\prec$ is a visibly pushdown {\bf and} coherent contextual order, then $\red_\prec(P_1 \shuffle P_2)$ is visibly pushdown and well-nested.
\end{theorem}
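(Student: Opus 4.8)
The plan is to obtain the result purely as a corollary of the two preceding results, Theorem~\ref{thm:vpwnred} and Proposition~\ref{prop:coherency-implies-well-nestedness}, since between them they already isolate exactly the two hypotheses we are handed: visible pushdownness of the order and coherence of the order. The strategy is to use coherence to discharge the side condition on which Theorem~\ref{thm:vpwnred} is predicated, and then read off both conclusions directly.

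First I would invoke Proposition~\ref{prop:coherency-implies-well-nestedness}. Its hypotheses are that $\prec$ is a coherent contextual order on $\tS = \tS_1 \uplus \tS_2$ and that $P_1, P_2$ are well-matched VPLs over $\tS_1, \tS_2$; the former is given directly, and the latter is the standing assumption of our setting, since each $P_i$ is modelled as a well-matched VPL. Its conclusion is exactly $\red_\prec(P_1 \shuffle P_2) \subset \wn(P_1 \shuffle P_2)$, which already establishes the well-nestedness half of the theorem.

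Next I would feed this inclusion, together with the hypothesis that $\prec$ is visibly pushdown, into Theorem~\ref{thm:vpwnred}. That theorem's two premises --- that $\prec$ is visibly pushdown and that $\red_\prec(P_1 \shuffle P_2) \subset \wn(P_1 \shuffle P_2)$ --- are now both met, the first by assumption and the second by the previous step. Its conclusion is that $\red_\prec(P_1 \shuffle P_2)$ is visibly pushdown, which is the remaining half of the theorem. Combining the two yields the full statement.

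The genuinely hard content lives in the two results being combined, not in the combination itself, so I do not anticipate a real obstacle here. The only point that warrants a moment's care is hypothesis matching: Theorem~\ref{thm:vpwnred} requires the well-nestedness inclusion as a side condition that, for arbitrary $P_i$, is non-constructive and language-dependent, whereas the role of coherence (via Proposition~\ref{prop:coherency-implies-well-nestedness}) is precisely to supply that inclusion from a property of the order alone. I would therefore state explicitly that coherence is what converts the order-level hypothesis into the language-level inclusion that Theorem~\ref{thm:vpwnred} consumes. The extension to $n > 2$ components noted in the footnote follows the same two-step combination, appealing to the $n$-ary forms of the two underlying results.
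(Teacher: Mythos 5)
Your proposal is correct and is exactly the paper's own proof: the paper derives Theorem~\ref{thm:vpred-vp-coherent} precisely by applying Proposition~\ref{prop:coherency-implies-well-nestedness} to obtain the inclusion $\red_\prec(P_1 \shuffle P_2) \subset \wn(P_1 \shuffle P_2)$ and then feeding that inclusion into Theorem~\ref{thm:vpwnred}. Your additional remark that coherence serves to convert an order-level hypothesis into the language-level side condition consumed by Theorem~\ref{thm:vpwnred} matches the paper's own motivation for introducing coherence.
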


Once we know a reduction is visibly pushdown, we know that it is theoretically constructible. However, in the next section, we take a closer look at what is the best way of algorithmically constructing a visibly pushdown and well-nested $\red_\prec(P_1 \shuffle P_2)$.

\section{Constructing Reductions}\label{sec:constructing-reductions}
A language $L$ over $\tS$ is called \emph{closed} under a commutativity relation $I$ if each equivalence class of $I$ is either entirely included in $L$ or none of the members appear in $L$.
For example, $P_1 \shuffle \dots \shuffle P_n$ is by definition closed under any commutativity relation that does not reorder statements from the same $P_i$. 
It is straightforward to see that if $L$ is closed under $I$ then  
\begin{equation}
\red_{I,\prec}(L)=\red_{I,\prec}(\tS^{*})\cap L. \label{eq:lex}
\end{equation} 
This simple observation puts forward an algorithmic path for the construction of the reductions in the iterative case \cite{FarzanV2019,FarzanV2020,FarzanKP2022}. The generic language $\red_{I,\prec}(\tS^{*})$, which is somewhat independent of any given verification instance and is purely defined based on an alphabet of statements, can be constructed by an adaptation of the idea of {\em sleep sets} \cite{Godefroid1996}. Then, any reduction can be constructed through a simple construction of the intersection of this generic language and the baseline parallel product $P_1 \shuffle \dots \shuffle P_n$. 
The proof of Theorem \ref{thm:vpred-vp-coherent} puts forward a similar type of construction for the visibly pushdown case. However, this is unsatisfactory in two orthogonal ways:
\begin{enumerate}
\item {\em Complexity:} the construction is generic and does not exploit properties of the commutativity relation. We are primarily interested in the commutativity relation $\Ith$, and as we demonstrate in Section \ref{sec:oc}, there is a construction that is strictly simpler under the assumption that all copies fully commute and the stack alphabet is small.
\item {\em Burden on user:} Theorem \ref{thm:vpred-vp-coherent} demands two independent conditions on the order. The visibly pushdown condition on the order is rather trivial; if the order can be expressed in a specific template (grammar or automata), then it is visibly pushdown. However, the coherence condition is an extra burden to check/verify, which should ideally be avoidable.
\end{enumerate}
Here, we give an alternative construction of VP lex reductions that overcomes these restrictions. 

\subsection{Well-nested Shuffle}
Recall that the role of the {\em coherence} condition is to ensure that the reduction only includes well-nested alignments, or in other words, it is a subset of $\wn(P_1 \shuffle \dots \shuffle P_n)$. It turns out that the latter is an object that is interesting on its own, and we can define it in a more elegant way using a new binary composition operator:

\begin{definition}[Well-Nested Shuffle]\label{def:well-nested-shuffle}
For languages $L_1, \ldots, L_n$ respectively over disjoint alphabets $\tS_1, \ldots, \tS_n$, define the \emph{well-nested shuffle} $L_1 \wnshuffle \cdots \wnshuffle L_n$ as the set of well-nested words in $L_1 \shuffle \cdots \shuffle L_n$.
\end{definition}
We observe that well-nested shuffle is associative, so the well-nested shuffle of $n$ languages for any $n > 1$ is well-defined. For well-matched languages $L_1, L_2$, we have $L_1; L_2 \subset L_{1}\wnshuffle L_{2}\subset L_{1}\shuffle L_{2}$.  

\begin{theorem}[Closure]\label{thm:VPLs-closed-under-wnshuffle}
    Visibly pushdown languages are closed under the well-nested shuffle.
\end{theorem}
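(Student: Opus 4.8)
The plan is to prove the binary case $L_1 \wnshuffle L_2$ and then invoke the already-noted associativity of $\wnshuffle$ to extend it to arbitrary $n$. So let $A_1 = (Q_1, Q_1^{\mathsf{in}}, \Gamma_1, \delta_1, Q_1^{\mathsf{F}})$ and $A_2 = (Q_2, Q_2^{\mathsf{in}}, \Gamma_2, \delta_2, Q_2^{\mathsf{F}})$ be VPAs over $\tS_1, \tS_2$ recognizing $L_1, L_2$. I would build a single product VPA $A$ over $\tS = \tS_1 \uplus \tS_2$ that runs $A_1$ and $A_2$ \emph{simultaneously on one stack}: its state set is $Q_1 \times Q_2$, with the obvious product initial and final sets, and its stack alphabet is the \emph{tagged} disjoint union $(\Gamma_1 \setminus \{\bot\}) \uplus (\Gamma_2 \setminus \{\bot\}) \uplus \{\bot\}$, so that every pushed symbol records which component pushed it. A letter $a \in \tS_i$ drives only the $A_i$-component of the state and freezes the other: on a call it pushes the $\Gamma_i$-symbol that $A_i$ would push; on an internal it only updates $A_i$'s state; and on a return it is enabled \emph{only} when the symbol on top of the stack is a tagged $\Gamma_i$-symbol that $A_i$ can pop (a pending return pops $\bot$, exactly as in $A_i$). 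Completing $A$ with a sink state for the missing transitions is routine.

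The crux is to show $L(A) = \wn(L_1 \shuffle L_2)$, and the heart of this is a single stack invariant. The essential observation, which is exactly where well-nestedness enters, is that the unique matching relation of a word $w \in \tS^*$ is realized by the LIFO discipline of $A$'s single stack: when $A$ reads a return, the symbol it pops is precisely the one pushed by the \emph{most recent unmatched call}, i.e. by the call that $w$'s matching relation pairs with this return. Consequently the return transition is enabled if and only if that matching call belongs to the same $\tS_i$ as the return --- which is precisely the well-nestedness condition of Definition~\ref{def:wnw}. I would make this precise by induction on the length of $w$, establishing that after reading any prefix (i) the $A_i$-component of $A$'s state equals the state $A_i$ reaches on the projection $\Pi_{\tS_i}(w)$ of that prefix, and (ii) $A$'s stack, read bottom-to-top, is the interleaving of the two component stacks of $A_1$ and $A_2$ in the order in which the pending calls were made. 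The invariant is maintained trivially on calls and internals; on returns it is maintained exactly when the top tag matches the return's component.

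From the invariant both inclusions follow. If $A$ accepts $w$, then by (i) the projections $\Pi_{\tS_i}(w)$ are accepted by $A_i$, so $w \in L_1 \shuffle L_2$; and since every return transition fired, by the observation above every matched pair of $w$ is same-component, so $w$ is well-nested. Conversely, if $w \in \wn(L_1 \shuffle L_2)$, then each projection is accepted by the corresponding $A_i$, and well-nestedness guarantees that at every return the top tag agrees with the return's component, so no transition is ever blocked and $A$ reaches a product final state. I expect the main obstacle to be the careful bookkeeping in part (ii) of the invariant: one must argue that the interleaving of the two component stacks onto $A$'s single stack stays consistent, which rests on the fact that in a well-nested word the span strictly between any matched call/return pair of one component is well-matched and hence contains a \emph{balanced} block of the other component's letters --- so those pushes are all popped before the outer return is reached and never obstruct it. This balancedness of matched spans, together with well-nestedness, is exactly what lets two independent stacks collapse into one, and is the structural reason the (non-VPL) full shuffle becomes visibly pushdown once restricted to well-nested words.
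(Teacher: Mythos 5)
Your proposal is correct and takes essentially the same approach as the paper: the paper's proof constructs exactly this product VPA (product state set, stack alphabet taken as the disjoint union of the component stack alphabets, so that a return of component $i$ can only pop a $\Gamma_i$-symbol), just defined directly for $n$ components rather than via the binary case plus associativity. Your stack-interleaving invariant usefully fleshes out the correctness verification, which the paper leaves as ``it can be verified.''
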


This is significant, because the same is not true for the standard shuffle operator. If we view programs  $P_1$, \dots, $P_n$ as languages of their behaviours, then the program $P_1 \wnshuffle \cdots \wnshuffle P_n$ is the maximal, in terms of the set of its behaviours, program consisting of arbitrary alignments of behaviours of $P_1$, \dots, $P_n$ that can be executed using a single stack. 

The well-nested shuffle is a construction of interest, independent of the specific problem of constructing lex reductions. We further remark on this in Section \ref{sec:concurrency}. 

\subsection{Visibly Pushdown Lex Reductions as Reductions of The Well-Nested Shuffle}\label{sec:redwn}
When a contextual order $\prec$ defines a reduction of $P_1 \shuffle \cdots \shuffle P_n$, coherency of $\prec$ guarantees that only well-nested words are selected from it. If we can alternatively define a reduction as a restriction of $P_1 \wnshuffle \cdots \wnshuffle P_n$, it will by definition include only well-nested words. There is, however, a major technical challenge in achieving this goal: the construction path of Theorem \ref{thm:vpred-vp-coherent} relies on the validity of Equation \ref{eq:lex}, which in turn relies on the {\em closedness} of the language. $P_1 \wnshuffle \cdots \wnshuffle P_n$ is not closed (w.r.t. $\Ith$), because swapping a pair of adjacent calls or returns may break well-nestedness.

Our alternative construction relies on a new insight: any VPO $\prec$ can be repaired to a \emph{coherent} (Definition \ref{def:coherence}) VPO $\prec'$ such that for any well-matched $\rho, \tau \in \wn(\tS^*)$ such that $\rho \equiv_{\Ith} \tau$, we have $\rho \preceq \tau \Rightarrow \rho \preceq' \tau$. We can construct $\prec'$ as follows: for any $a, b\in \tS$ and $u \in \tS^*$, with $R_k := \bigcup_{j\neq k} \Sr_j$,
\[
    a \prec'_u b \equiv \begin{cases}
        ((a \in R_k \leftrightarrow b \in R_k) \land a \prec_u b) \lor (a \notin R_k \land b \in R_k), &\text{if $u$ has a last pending call from $\tS_k$} \\
        a \prec_u b. &\text{otherwise}
    \end{cases}
\]
\begin{lemma} \label{lem:repair}
    For any VPO $\prec$, there exists a coherent VPO $\prec'$ such that $\red_\prec(P_1 \wnshuffle \cdots \wnshuffle P_n) = \red_{\prec'}(P_1 \shuffle \cdots \shuffle P_n)$ for any well-matched languages $P_1, \ldots, P_n$ over disjoint VP alphabets.
\end{lemma}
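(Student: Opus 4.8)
The construction of $\prec'$ is already supplied, so the work is to verify three things: that $\prec'$ is a genuine visibly pushdown contextual order, that it is coherent, and that it induces the stated equality of reductions. The plan is to dispatch the first two by inspection and then reduce the equality to a single comparison lemma about well-nested words.

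First I would check that each $\prec'_u$ is a strict total order. In the ``otherwise'' case it is literally $\prec_u$, and in the case where $u$ has a last pending call from $\tS_k$ the formula is exactly the ordered sum that places the block $\stcomp{R_k}$ (ordered internally by $\prec_u$) entirely below the block $R_k$ (again ordered by $\prec_u$); an ordered sum of two strict total orders is a strict total order, so $\prec' : \tS^* \to \Lin(\tS)$ is a contextual order. To see $\prec'$ is a VPO (Definition \ref{def:representability}), I would take the deterministic complete VPA $A$ with $\ord : Q_A \to \Lin(\tS)$ representing $\prec$ and form a product automaton $A'$ whose state additionally records the component of the current top-of-stack call, i.e.\ the last pending call, drawn from $\{1,\ldots,n,\#\}$ with $\#$ meaning ``no pending call''. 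This is VP-computable in the standard way: on a call from $\Sc_m$ we push the old recorded component onto the stack alongside $A$'s stack symbol and set the recorded component to $m$; on a return we pop and restore the saved component; internals leave it unchanged. Setting $\ord'(q,\#) = \ord(q)$ and $\ord'(q,k)$ to the ordered sum just described makes $A'$ represent $\prec'$. Coherence (Definition \ref{def:coherence}) is then immediate: if $u$'s last pending call is from $\tS_k$, every $a \in \tS_k$ satisfies $a \notin R_k$ while every $r \in \bigcup_{j\neq k}\Sr_j = R_k$ satisfies $r \in R_k$, so the second disjunct of the formula yields $a \prec'_u r$.

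The heart of the argument is the equality of reductions, which I would factor through the following comparison claim (the ``repair'' property advertised before the lemma): \emph{for well-nested words $\rho,\tau$ with $\rho \equiv_{\Ith}\tau$, one has $\rho \preceq \tau \Rightarrow \rho \preceq' \tau$}. Since $\rho \equiv_{\Ith}\tau$ forces equal length, for distinct $\rho,\tau$ the CLO (Definition \ref{def:clo}) is decided at the first differing position: writing $\alpha$ for the longest common prefix and $a,b$ for the next letters of $\rho,\tau$, we have $\rho \prec \tau$ iff $a \prec_\alpha b$ and $\rho \prec' \tau$ iff $a \prec'_\alpha b$. If $\alpha$ has no pending call then $\prec'_\alpha = \prec_\alpha$ and we are done; otherwise let $\tS_k$ be the component of $\alpha$'s last pending call. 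Inspecting the formula, $a \prec_\alpha b$ fails to imply $a \prec'_\alpha b$ only in the subcase $a \in R_k$ and $b \notin R_k$. This is exactly where well-nestedness of $\rho$ does the work: $a \in R_k$ means $a$ is a return from some $\tS_j$ with $j \neq k$, and since it immediately follows $\alpha$ it must match $\alpha$'s last pending call, which is from $\tS_k$; that is a cross-component matched pair, contradicting that $\rho$ is well-nested (Definition \ref{def:wnw}). Thus the bad subcase never arises and the claim holds. I expect this step---pinning down that the first differing letter of a well-nested word cannot be a non-matching return while a call is pending---to be the crux of the whole proof.

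Finally I would assemble the equality class by class. Since $P_1 \shuffle \cdots \shuffle P_n$ is closed under $\equiv_{\Ith}$, fix an equivalence class $C$ inside it and set $C_{\mathsf{wn}} = C \cap \wn(\tS^*)$, which equals $C \cap (P_1 \wnshuffle \cdots \wnshuffle P_n)$. Both $\preceq$ and $\preceq'$ are total (Proposition \ref{prop:total-clo}), so the comparison claim plus totality upgrades to: $\prec$ and $\prec'$ order $C_{\mathsf{wn}}$ identically, giving $\min_{\prec'}(C_{\mathsf{wn}}) = \min_{\prec}(C_{\mathsf{wn}})$. Because $\prec'$ is coherent, Proposition \ref{prop:coherency-implies-well-nestedness} guarantees $\min_{\prec'}(C) \in \wn(\tS^*)$, hence $\min_{\prec'}(C) = \min_{\prec'}(C_{\mathsf{wn}}) = \min_\prec(C_{\mathsf{wn}})$. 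Unfolding Definition \ref{def:lexreduction}, a word $w \in C$ lies in $\red_{\prec'}(P_1 \shuffle \cdots \shuffle P_n)$ iff $w = \min_{\prec'}(C)$, while a word $w \in C_{\mathsf{wn}}$ lies in $\red_\prec(P_1 \wnshuffle \cdots \wnshuffle P_n)$ iff $w = \min_\prec(C_{\mathsf{wn}})$; the displayed equality shows both select the same word, and ranging over all classes yields $\red_\prec(P_1 \wnshuffle \cdots \wnshuffle P_n) = \red_{\prec'}(P_1 \shuffle \cdots \shuffle P_n)$.
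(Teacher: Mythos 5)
Your proof is correct and takes essentially the same route as the paper's: the same stack-assisted tracking construction for the VPA representing $\prec'$, coherence by inspection of the second disjunct, and a class-by-class argument that the $\preceq'$-minimum of each equivalence class of $P_1 \shuffle \cdots \shuffle P_n$ equals the $\preceq$-minimum of its well-nested restriction. The paper's own proof is far terser---it simply asserts that the definition of $\prec'$ ensures $\min_{\preceq}(E) = \min_{\preceq'}(E')$---so your first-difference comparison claim (ruling out the bad case $a \in R_k$, $b \notin R_k$ via well-nestedness) and your appeal to Proposition \ref{prop:coherency-implies-well-nestedness} supply exactly the details the paper leaves implicit.
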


This leads to the key theorem of this section that links a family of VPL reductions to the family of visibly pushdown contextual orders by construction:

\begin{theorem}[Visibly Pushdown Lex Reduction]\label{thm:vpred}
    If $\prec$ is a visibly pushdown contextual order and $P_1, \ldots, P_n$ are VPLs, then so is $\red_\prec(P_1 \wnshuffle \cdots \wnshuffle P_n)$.
\end{theorem}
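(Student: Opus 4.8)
The plan is to reduce the statement to the already-established case of a \emph{coherent} visibly pushdown order acting on the ordinary shuffle, so that the machinery of Theorem~\ref{thm:vpred-vp-coherent} applies directly. The obstruction to a naive argument should be made explicit first: one would like to write $\red_\prec(P_1 \wnshuffle \cdots \wnshuffle P_n) = \red_\prec(\tS^*) \cap (P_1 \wnshuffle \cdots \wnshuffle P_n)$ in the style of Equation~\ref{eq:lex}, and then conclude by combining Proposition~\ref{prop:vp-sigma-star}, Theorem~\ref{thm:VPLs-closed-under-wnshuffle}, and closure of VPLs under intersection. This fails, because the well-nested shuffle is \emph{not} closed under $\Ith$: an equivalence class may contain both well-nested and ill-nested words, and the globally $\preceq$-minimal representative taken over $\tS^*$ may be ill-nested, hence absent from the well-nested shuffle, so the intersection would select the wrong witness.

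The way around this is to repair the order rather than the operand. First I would invoke Lemma~\ref{lem:repair}, which supplies a coherent VPO $\prec'$ satisfying
\[
\red_\prec(P_1 \wnshuffle \cdots \wnshuffle P_n) = \red_{\prec'}(P_1 \shuffle \cdots \shuffle P_n)
\]
for all well-matched VPLs $P_1, \ldots, P_n$ over the disjoint alphabets $\tS_1, \ldots, \tS_n$. This is the crucial move: it trades the unclosed operand (the well-nested shuffle under the original order) for the closed operand (the full shuffle under the repaired order), paying for it by replacing $\prec$ with $\prec'$. Since the programs $P_i$ are well-matched by our standing modeling convention (Section~\ref{sec:rp}), the hypotheses of the lemma are met.

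Next I would observe that $\prec'$ is, by its very construction in Lemma~\ref{lem:repair}, both visibly pushdown \emph{and} coherent, which are exactly the two hypotheses of Theorem~\ref{thm:vpred-vp-coherent}. Applying that theorem (which, as noted, extends from two to $n$ components) yields that $\red_{\prec'}(P_1 \shuffle \cdots \shuffle P_n)$ is a visibly pushdown language. Because the two reductions are \emph{equal as sets} by Lemma~\ref{lem:repair}, it follows immediately that $\red_\prec(P_1 \wnshuffle \cdots \wnshuffle P_n)$ is visibly pushdown, completing the argument.

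The genuine difficulty therefore does not live in this theorem but in Lemma~\ref{lem:repair} on which it rests. If one were to establish the repair from scratch, the main obstacle would be to verify that deprioritizing every return not matching the last pending call — while leaving all other comparisons untouched — (i) still yields a total order in each context, (ii) remains visibly pushdown, the point being that the identity of the component owning the last pending call is precisely the information a VPA can record on its stack and surface through the state map $\ord$, and (iii) genuinely equates the two reductions, which requires showing that on each $\equiv_{\Ith}$-class the $\prec'$-minimum over the full shuffle is well-nested and coincides with the $\prec$-minimum over the well-nested shuffle. With Lemma~\ref{lem:repair} available, as it is here, the theorem is a direct corollary of it and Theorem~\ref{thm:vpred-vp-coherent}.
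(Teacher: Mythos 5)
Your proposal is correct and follows essentially the same route as the paper's own proof: invoke Lemma~\ref{lem:repair} to trade $\red_\prec$ over the well-nested shuffle for $\red_{\prec'}$ over the full shuffle with a coherent VPO $\prec'$, then apply Theorem~\ref{thm:vpred-vp-coherent} to that reduction. Your preliminary remarks on why the naive Equation~\ref{eq:lex} argument fails (non-closedness of the well-nested shuffle under $\Ith$) and on where the real work lies (inside Lemma~\ref{lem:repair}) accurately reflect the paper's own discussion in Section~\ref{sec:redwn}.
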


The theorem is a consequence of Theorem \ref{thm:vpred-vp-coherent} (a generic construction via sleep sets) and Lemma \ref{lem:repair} and yields an automaton of $O(q^n q_A n\,2^{|\tS|})$ states, assuming the VPA for each $P_i$ has $O(q)$ states and the VPA representing the contextual order has $O(q_A)$ states. Next, we argue that under the assumption of full commutativity, one can do better than this.

\subsection{Optimized Construction Specific to Hypersafety}\label{sec:oc}

Remarkably, it turns out that one can exploit the assumption of a full commutativity relation, applicable in all hypersafety verification instances, to devise a {\em simple} construction for the VPA, resulting in a more efficient algorithm. Consider two singleton well-matched languages $\{ \rho_1 \}$ and $\{ \rho_2 \}$. Under the assumption of full commutativity between the two alphabets, the lex reduction $\red_\prec(\{ \rho_1 \} \wnshuffle \{ \rho_2 \})$, which consists of exactly one word, can be computed in a greedy manner: If we have correctly interleaved the prefixes $\rho_1', \rho_2'$ of $\rho_1, \rho_2$ as $\sigma$, then the next letter is the smaller one of the letters following $\rho_1', \rho_2'$ that do not break well-nestedness: if $\sigma$ contains a pending call and the last one is from $\tS_i$, then the next letter is not from any $\Sr_j$ such that $j \neq i$.

With modest assumptions on the order, this idea can be applied to the construction of $\red_\prec(P_1 \wnshuffle P_2)$ as well. 
An order $\prec$ is \emph{uniform} w.r.t. VPA $P_1, P_2$ if for any pair of states $q_1, q_2$ from $P_1, P_2$, for any $w \in \tS^*$, $\prec_w$ ranks outgoing letters of $q_1, q_2$ consistently, i.e., if some outgoing letter from $q_1$ is less than some outgoing letter from $q_2$, then all outgoing letters from $q_1$ are less than all outgoing letters from $q_2$.
If the VPAs are directly obtained from code, the only operational restriction is that the {\tt then} and {\tt else} branches of a component have to be ordered the same way against actions of another component. 
The uniformity condition is not practically limiting, as one can always adjust $P_1$ and $P_2$ (while preserving the language) to accommodate the order. For instance, any VPA can be converted in linear time to one such that for every state, the outgoing letters are all internals, all calls, or all returns.
For a uniform VP contextual order, we can construct a compact VPA for $\red_\prec(P_1 \wnshuffle P_2)$ in the same greedy manner:
\begin{theorem}\label{thm:oc}
    Let $\prec$ be a uniform VP order represented by VPA $A$.
Then the lex reduction $\red_\prec(P_1 \wnshuffle P_2)$ is recognized by a VPA with $O(n_1 n_2 n_A n_\Gamma)$ states, where $n_1$, $n_2$, and $n_A$ are the number of states of $P_1$, $P_2$, and $A$ respectively and $n_\Gamma$ is the sum of the size of the stack alphabets of $P_1$ and $P_2$.
\end{theorem}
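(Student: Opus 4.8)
The plan is to give a direct product construction of a VPA recognizing $\red_\prec(P_1 \wnshuffle P_2)$ by formalizing the greedy interleaving procedure described just before the theorem statement. The key idea is that, under full commutativity (the relation $\Ith$) and uniformity of $\prec$, the $\preceq$-minimal well-nested interleaving of any pair of runs can be produced \emph{online}: at each step, we know the current state of $P_1$ and of $P_2$, the current state of the order automaton $A$, and enough stack information to tell which component ``owns'' the last pending call. The smaller enabled letter that does not break well-nestedness is then forced, and by uniformity this choice depends only on the two source states and the order state, not on the particular word read so far, so it is realizable by a finite-state control.

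First I would define the state space of the product VPA as tuples $(q_1, q_2, q_A, g)$, where $q_i$ is a state of $P_i$, $q_A$ is a state of $A$, and $g$ records the component identity of the top stack symbol (which is why the $n_\Gamma$ factor appears). Since $P_1, P_2$ are over disjoint VP alphabets, each letter read belongs to exactly one component; a call pushes onto the single shared stack a symbol tagging both the component and the originating local stack symbol, a return pops it, and the tag lets the control recover which component's call is currently pending. I would then define the transition relation to fire letter $a$ from state $(q_1,q_2,q_A,g)$ precisely when (i) $a$ is enabled in its own component $P_i$ from $q_i$, (ii) firing $a$ does not violate well-nestedness — i.e., if $a$ is a return, the tag $g$ on the stack top must mark $a$'s own component, and (iii) $a$ is $\prec$-minimal among all letters satisfying (i) and (ii), as dictated by $\ord(q_A)$. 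The uniformity hypothesis is exactly what makes clause (iii) well-defined at the level of states: it guarantees that the $\prec_w$-ranking of the enabled outgoing letters of $q_1$ against those of $q_2$ agrees for every context $w$ consistent with reaching $(q_1,q_2,q_A)$, so the ``smaller enabled admissible letter'' is a function of the control state alone.

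The correctness argument then has two directions. For soundness, every accepting run of the product VPA reads a well-nested word (guaranteed by clause (ii) together with the stack discipline) that projects onto an accepting run of each $P_i$ (guaranteed by clause (i) and the acceptance condition being the conjunction of the $P_i$ acceptance conditions), and at each step it takes the $\preceq$-minimal admissible continuation, so the produced word is the $\preceq$-minimal representative of its $\Ith$-class within $\wn(P_1 \shuffle P_2)$; this is precisely membership in $\red_\prec(P_1 \wnshuffle P_2)$. For completeness, given $w \in \red_\prec(P_1 \wnshuffle P_2)$, I would show by induction on prefixes that the greedy automaton never deviates from $w$: if at some prefix the forced minimal letter differed from the next letter of $w$, one could exhibit an $\Ith$-equivalent well-nested word strictly smaller than $w$ under $\preceq$ (by swapping the smaller admissible letter earlier, using that it commutes with the intervening letters of the other component), contradicting the $\preceq$-minimality of $w$. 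This swap-to-contradiction lemma is essentially the greedy-exchange argument already sketched for the singleton case $\red_\prec(\{\rho_1\}\wnshuffle\{\rho_2\})$, lifted to languages.

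The main obstacle I expect is making the greedy step genuinely \emph{finite-state} — specifically, justifying that the information needed to determine both admissibility (clause (ii)) and minimality (clause (iii)) is bounded. Admissibility requires knowing the component of the last pending call, which is unbounded nesting information, and the crux is that it suffices to track only the \emph{top} of stack, recoverable by the tag $g$ pushed at each call; I would need to verify that this tag correctly propagates under push/pop so that after any well-nested prefix the recovered owner is correct. Determining $\preceq$-minimality across the two components' enabled letters is where uniformity does the real work, and the delicate part is checking that uniformity, which is stated as a condition on outgoing letters of states of $P_1, P_2$, interacts correctly with the order automaton $A$: I would need a lemma saying that as $A$ reads the shared interleaving, its state $q_A$ captures exactly the $\ord$-ranking relevant at that point, and that combining $A$'s product with the uniform ranking yields a deterministic forced choice. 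Once this finiteness and determinism of the greedy rule is established, the state count $O(n_1 n_2 n_A n_\Gamma)$ follows immediately from the shape of the state tuple, and the two correctness directions are routine inductions.
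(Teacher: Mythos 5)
Your construction is essentially the paper's own: a product automaton whose states are tuples $(q_1,q_2,q_A,\gamma)$ with $\gamma$ the top of the conceptual single stack, whose transitions fire only the $\prec$-minimal enabled letter that preserves well-nestedness (with uniformity making this choice a function of the control state), and a two-direction correctness argument ending in the greedy-exchange lemma. However, there is a genuine gap in your completeness direction. Your swap-to-contradiction step assumes that when the forced minimal letter $y$ belongs to component $j$, some letter of component $j$ actually occurs later in $w$ and can be commuted forward to produce a smaller equivalent word. This fails when component $j$'s run has already \emph{ended} at a final state of $P_j$ that still has outgoing transitions: then $y$ is enabled, so your greedy control insists on component $j$, but the remainder of $w$ contains no $\tS_j$-letters at all, so no smaller equivalent word exists --- and your automaton wrongly rejects $w$. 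Concretely, take $P_1 = \{a, ab\}$ and $P_2 = \{c\}$ (all internal letters) with the constant uniform order $a \prec b \prec c$. The word $ac$ is the $\preceq$-minimal member of its class $\{ac, ca\}$, hence lies in $\red_\prec(P_1 \wnshuffle P_2)$; but after reading $a$, the (deterministic) automaton for $P_1$ sits at a final state with outgoing letter $b$, your control computes $\min\{b,c\} = b$ and refuses to read $c$, so $ac$ is rejected and the accepted language is not even a reduction. The paper closes exactly this hole by first normalizing $P_1, P_2$ so that final states have no outgoing transitions (this preserves the languages and uniformity), and it explicitly remarks that its completeness lemma is false without that assumption. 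Your proof needs the same (or an equivalent) normalization step.

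Two smaller bookkeeping points. First, the fourth state component must be the actual top-of-stack \emph{symbol}, not merely the identity of the component that owns it: which return transitions of $P_i$ are enabled depends on the symbol itself (return transitions of a VPA consult the popped symbol), and this is what makes the $n_\Gamma$ factor appear. Second, to restore this component after a pop, each pushed frame must record not only the new symbol and its component but also the \emph{previous} top and the order automaton's own stack symbol; the paper's stack alphabet is $(\Gamma_{12} \times \Gamma_A \times \Gamma_{12}) \cup \{\bot\}$ for precisely this reason. You flag the propagation issue as an obstacle, but the frames you describe (component tag plus originating local symbol) do not carry enough information to resolve it.
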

As outlined in Section \ref{sec:experiments}, one can convert an automaton to a grammar using a standard construction \cite{AlurM2009} or verify this automaton directly. Directly constructing a VPG for the same language is not straightforward, since the contextual information (used by the contextual order) needs to propagate not in a top-down manner but through the linear and hierarchical edges.

\section{Sound Reductions for Concurrency}\label{sec:concurrency}

As an underapproximation of the shuffle, the well-nested shuffle and its lex reductions are classic recursive programs that can always be used for \emph{bug finding} for recursive concurrent programs.

In the hypersafety context, it is sound to consider only behaviours of $P_1 \shuffle \cdots \shuffle P_n$ that are subsets of $P_1 \wnshuffle \cdots \wnshuffle P_n$, since under the assumption that $P_i$'s are fully disjoint, we have
\begin{equation} \label{eq:wnshuffle-sound}
    \{\pre\}\ P_1 \wnshuffle \dots \wnshuffle P_n\ \{\post\} \implies \{\pre\}\ P_1 \shuffle \dots \shuffle P_n\ \{\post\}.
\end{equation}

In the presence of shared memory, there may be dependency between different components. It is then no longer the case that any run of $P_1 \shuffle \cdots \shuffle P_n$ can be {\em soundly} rearranged to be well-nested and condition \ref{eq:wnshuffle-sound} may no longer hold. We outline under what assumptions about $P_i$'s, we can recover this property for shared memory concurrent programs, so the well-nested shuffle can also be soundly \emph{verified} in place of the original recursive concurrent program.

Consider a simple case of tail-recursive programs first. Since a tail-recursion can be transformed to a standard iteration, one intuitively expects this case to be problem-free, and it is. Assume $P_1$ and $P_2$ each make a single call of a tail-recursive function, resulting in runs $\rho_1 \in P_1$ and $\rho_2 \in P_2$. Since a return statement is always a local operation and therefore soundly commutes with any statement from other components, one can commute the return statements to make any interleaving of $\rho_1$ and $\rho_2$ well-nested while preserving semantics. For example, below, the last (pink) return can be commuted to turn the run to a semantically equivalent well-nested one.
\begin{center}
\includegraphics[scale=0.6]{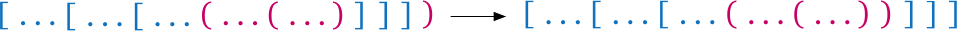}
\end{center}
One can further relax the tail-recursion condition to \emph{tail-independence}.
For any $\tS_i$, let $\tS_i^\indep$ be all letters in $\tS_i$ that soundly commute with all letters in other components.

\begin{definition}[Tail/Head-Independent] \label{def:tail-independent}
    We call a component $P_i$ \emph{tail-independent} (resp. \emph{head-independent}) if every run $\rho$ of $P_i$ can be written as $uv$ ($u, v \in \tS^*_i$) such that
    \begin{itemize}
        \item all calls are in $u$ and all returns are in $v$, except those at a position $j$ such that all letters between $j$ and its matching position are in $\tS_i^\indep$; and
        \item all letters in $v$ (resp. $u$) are in $\tS_i^\indep$.
    \end{itemize}
\end{definition}

This definition is effective, since given alphabets $\tS_i$ and $\tS_i^\indep$, one can construct a VPL $P_i^{\mathsf{max}}$ such that a component $P_i$ is tail-independent iff $P_i \subset P_i^{\mathsf{max}}$. The same holds for head-independence. Hence, for any concurrent recursive program, one can algorithmically check if Defintion \ref{def:tail-independent} applies.

\begin{proposition} \label{prop:tail-indep-vpl}
Given a component $P_i$ as a VPA and the alphabet $\tS_i^\indep$, it is decidable in polynomial time whether $P_i$ is tail-independent (or head-independent).
\end{proposition}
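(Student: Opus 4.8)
The plan is to reduce tail-independence to a visibly pushdown language inclusion and then show that this particular inclusion is checkable in polynomial time. By the construction stated just before the proposition, there is a VPL $P_i^{\mathsf{max}}$ over $\tS_i$ --- the set of all well-matched words meeting the structural requirement of Definition~\ref{def:tail-independent} --- with $P_i$ tail-independent exactly when $P_i \subseteq P_i^{\mathsf{max}}$. So the whole question becomes deciding $P_i \subseteq P_i^{\mathsf{max}}$. General VPL inclusion is \textsc{exptime}-complete because it requires complementing (hence determinizing) the right-hand side; the crux of a polynomial bound is therefore to exhibit a \emph{deterministic} complete VPA for $P_i^{\mathsf{max}}$ of size polynomial in $|\tS_i|$, after which only standard polynomial VPL operations remain.

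I would first argue that the existential split $\rho = uv$ in Definition~\ref{def:tail-independent} is in fact \emph{canonical}, so that it need not be guessed. Take $v$ to be the maximal suffix of $\rho$ consisting entirely of letters in $\tS_i^\indep$. Any call occurring inside $v$ has its matching return inside $v$ as well, so its whole span lies in $\tS_i^\indep$ and the pair is exceptional; hence every \emph{non}-exceptional call automatically falls in $u$, and the only genuine constraint is that every non-exceptional return lie inside $v$. Because $v$ is the \emph{largest} independent suffix, no valid split can rescue a non-exceptional return that this $v$ omits. Consequently $\rho \in P_i^{\mathsf{max}}$ iff, reading left to right, once the first non-exceptional return is seen, every subsequent letter (that return included) is in $\tS_i^\indep$. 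This is a single deterministic condition, eliminating the nondeterminism of the split.

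The remaining ingredient is to compute, deterministically and online, whether a matched pair is exceptional, i.e.\ whether its span is entirely in $\tS_i^\indep$. This is a bottom-up property of the nested-word structure, which a VPA can evaluate using its stack discipline: on each call it pushes the accumulated ``all-independent so far'' flag and starts a fresh flag for the new frame, updates the flag on each internal, and on the matching return combines the nested result with the enclosing flag read off the stack. Folding this into the control described above yields a deterministic complete VPA $M$ recognizing $P_i^{\mathsf{max}}$ among well-matched words, whose state set and stack alphabet are polynomial in $|\tS_i|$.

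Finally I would assemble the decision procedure. Since $M$ is deterministic and complete, the complement $\overline{P_i^{\mathsf{max}}}$ is obtained by complementing its final states in linear time; the product of the given VPA $P_i$ with this complement recognizes $P_i \cap \overline{P_i^{\mathsf{max}}}$ with a polynomial number of states and stack symbols (closure under intersection); and emptiness of a VPA is decidable in polynomial time by the standard pushdown reachability argument. $P_i$ is tail-independent iff this product is empty, giving the polynomial bound. Head-independence is the mirror image, obtained by exchanging the roles of $\Sc_i$ and $\Sr_i$ (equivalently, running the same construction on reversed words), and enjoys the identical bound. I expect the main obstacle to be the two determinism points of the middle paragraphs --- proving the split is canonical so that the existential quantifier costs nothing, and realizing the exceptionality predicate within the push/pop stack discipline --- since everything after that is routine polynomial-time VPL machinery.
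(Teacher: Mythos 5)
Your proof is correct, and at the top level it follows the same skeleton as the paper's --- reduce tail\hyp{}independence to the inclusion $P_i \subseteq P_i^{\mathsf{max}}$, exhibit a \emph{small deterministic complete} VPA for $P_i^{\mathsf{max}}$, then complement, intersect with $P_i$, and test emptiness --- but your route to that deterministic automaton is genuinely different. The paper goes through Lemma~\ref{lem:tail-indep-grammar}: it characterizes tail-independence by a constant-size grammar $X \to Y \mid aX \mid cYrX \mid cXrY$ (with $Y$ generating the well-matched words over $\tS_i^\indep$), rewrites it as a VPG, converts it to a nondeterministic VPA with $O(1)$ states and stack alphabet $O(|\tS_i|)$ using \cite[Theorem 6]{AlurM2004}, and only then determinizes via \cite[Theorem 2]{AlurM2004} --- the determinization being affordable precisely because the state count is constant. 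You instead prove that the existential split of Definition~\ref{def:tail-independent} is \emph{canonical} (take $v$ to be the maximal all-independent suffix; any call inside it is automatically exceptional since its matching return and span also lie in it, so the only live constraint is that non-exceptional returns land in $v$), which turns the definition into a left-to-right deterministic condition, and you then build the deterministic VPA by hand, threading an ``all-independent-so-far'' flag through the stack. Your version buys explicitness and avoids invoking determinization altogether; the paper's grammar lemma buys reuse, since characterization (3) of Lemma~\ref{lem:tail-indep-grammar} is also the workhorse of the proof of Theorem~\ref{thm:wnshuffle-sound}, so one lemma serves two results. One small caveat on your last step: head-independence is \emph{not} directly checkable by the same left-to-right flag automaton, because a call's exceptionality is only determined by letters occurring after it; you need either the reversal trick you gesture at (reversing a nondeterministic VPA while swapping calls and returns is polynomial) or a deferred-check variant that postpones the test for each call to its matching pop. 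Either patch is routine, so this is a presentational gap rather than a mathematical one.
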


%

We can then conclude conditions under which the well-nested shuffle can be soundly used for verification of a concurrent program:

\begin{theorem}[Soundness of Well-Nested Shuffle Under Concurrency] \label{thm:wnshuffle-sound}
   If every component $P_i$ is tail-independent (or respectively head-independent) 
    then condition \eqref{eq:wnshuffle-sound} is satisfied.
\end{theorem}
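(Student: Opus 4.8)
The plan is to show that under tail-independence (the head-independent case being symmetric), condition \eqref{eq:wnshuffle-sound} holds, i.e., that any Hoare triple verified for $P_1 \wnshuffle \cdots \wnshuffle P_n$ transfers to $P_1 \shuffle \cdots \shuffle P_n$. The core obligation is that every run $\rho \in P_1 \shuffle \cdots \shuffle P_n$ has a semantically equivalent well-nested counterpart $\sigma \in P_1 \wnshuffle \cdots \wnshuffle P_n$; if so, then for any $(S.\nu,\, S'.\nu') \in \sem{S}{\rho}$ with $\nu \models \pre$ we get the same pair in $\sem{S}{\sigma}$, which satisfies $\post$ by hypothesis. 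Here ``semantically equivalent'' must be interpreted with respect to the \emph{shared-memory} semantics, so the rearrangement must only ever swap adjacent letters that \emph{soundly commute} — this is the essential difference from the fully disjoint case and the crux of where tail-independence is needed.

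First I would fix an arbitrary $\rho \in P_1 \shuffle \cdots \shuffle P_n$ and let $\rho_i := \Pi_{\tS_i}(\rho)$ be its projection onto component $i$, so each $\rho_i$ is a run of $P_i$. Invoking tail-independence, I would write each $\rho_i = u_i v_i$ as in Definition \ref{def:tail-independent}: every unmatched/nested structure in $u_i$ carries the calls, the tail $v_i$ consists entirely of independent letters from $\tS_i^\indep$, and the only returns allowed to appear in $u_i$ are those whose entire matched span lies in $\tS_i^\indep$. The goal is to transform $\rho$ into a well-nested interleaving by pushing the ``offending'' returns — those returns that, in $\rho$, are interleaved so as to break well-nestedness — later in the word until nesting is restored.

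The key step is the commutation argument. A violation of well-nestedness in $\rho$ occurs exactly when some return $r \in \Sr_j$ is matched in the interleaving against a call from a different component $\tS_i$ ($i \neq j$), i.e., it closes a context opened by another component; equivalently, between an open pending call from $\tS_i$ and its matching return there sits a return $r$ from $\tS_j$. I would argue that any such $r$ must be one of the independent-tail returns: because $\rho_j = u_j v_j$ places all non-independent returns of component $j$ into $v_j$ after all of $j$'s calls, any return appearing ``out of order'' relative to another component's nesting is drawn from $\tS_j^\indep$ (it lies in $v_j$, or is one of the exceptional positions in $u_j$ whose span is independent). Since letters in $\tS_j^\indep$ soundly commute with all letters of other components, I can repeatedly swap such an $r$ rightward past the intervening letters of other components — each swap preserves semantics by soundness of the commutativity relation — until $r$ no longer sits inside a foreign open context. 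Formalizing this as a terminating rewriting process (e.g., by a potential measure counting nesting violations, or by induction on the number of offending returns) yields a well-nested word $\sigma$ with $\sigma \equiv \rho$ under the sound commutativity relation and $\Pi_{\tS_i}(\sigma) = \rho_i$ for every $i$, so $\sigma \in P_1 \wnshuffle \cdots \wnshuffle P_n$.

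The main obstacle I anticipate is making the commutation argument airtight in the \emph{multi-component, multiply-nested} setting: a single return may need to be moved past letters that themselves belong to several different components and may themselves later be moved, so I must choose an order of rewriting (e.g., resolving the outermost violation first, or processing returns right-to-left) that demonstrably terminates and never reintroduces a violation it has already fixed. The delicate point is confirming that every return ever moved genuinely lies in some $\tS_j^\indep$ — this is precisely what Definition \ref{def:tail-independent} buys us, and I would state and prove a clean lemma to that effect: \emph{in any interleaving of the $\rho_i = u_iv_i$, a return that closes a foreign context is necessarily independent}. Once that lemma is in hand, soundness of each individual swap is immediate from the definition of a sound commutativity relation, and the head-independent case follows by the mirror-image argument (moving calls leftward rather than returns rightward), completing the proof.
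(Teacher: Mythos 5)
Your high-level reduction is the right one and matches the paper's: it suffices to show that every $\rho \in P_1 \shuffle \cdots \shuffle P_n$ is equivalent, under the largest sound commutativity relation $I$, to a well-nested word with the same projections, after which the transfer of Hoare triples is immediate. The gap is in your rewriting mechanism. ``Repeatedly swap an offending return rightward past the intervening letters of other components'' fails on exactly the runs that tail-independence is designed to admit. Take $\rho_1 = c_1\, r_1\, a_1$ with $a_1$ a \emph{dependent} internal (this is tail-independent: the return lies in $u$ and its matched span is empty, so the exception clause of Definition \ref{def:tail-independent} applies), take $\rho_2 = c_2\, r_2$, and consider the interleaving $\rho = c_1\, c_2\, r_1\, a_1\, r_2$. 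Both returns are ``offending'' ($c_2$ is matched with $r_1$, and $c_1$ with $r_2$), but neither can be moved rightward by your rule: $r_1$ is immediately followed by $a_1$, a letter of its \emph{own} component, and swapping those two is not a commutation at all (it would reorder $\rho_1$); $r_2$ has nothing to its right. Your procedure is stuck, yet a repair exists: slide the \emph{call} $c_2$ rightward (calls and returns commute with all foreign letters) to obtain the well-nested $c_1\, r_1\, a_1\, c_2\, r_2$. In general, when a component has dependent letters \emph{after} a return, that return can never be pushed out of a foreign context; instead the enclosing span must be closed early by pulling its (necessarily independent) interior and return leftward. Your proposal has no rewriting rule in this direction for the tail-independent case, and this is not a corner case --- it is one of the two branches of the dichotomy on which the paper's proof turns.

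Relatedly, your ``key lemma'' (an offending return is independent) aims at the wrong target: under the paper's standing assumption, calls and returns always commute with letters of other components, so the lemma is essentially trivial, and it carries no weight because the obstruction is never the return itself but the dependent internals of its own component around it. The paper instead converts Definition \ref{def:tail-independent} into the grammar characterization of Lemma \ref{lem:tail-indep-grammar} ($X \to Y \mid aX \mid cYrX \mid cXrY$, with $Y$ generating well-matched words over $\tS_i^\indep$), which makes the needed dichotomy explicit for each call--return span: either its interior is all-independent, in which case the interior and the return are slid \emph{leftward} so the span closes as a contiguous block before the foreign material (production $cYrX$); or the interior contains a dependent letter, in which case the production must be $cXrY$, everything after the return is independent, and the return together with the whole tail is slid \emph{rightward} past the foreign material. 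The argument then proceeds by induction after peeling off the first call's span, which sidesteps the termination/confluence issues you flag but leave unresolved. To repair your proof you would need both rewriting directions plus a criterion for which to apply --- at which point you have essentially reconstructed the paper's case analysis. (A minor additional point: the Hoare triples for runs of the shuffle should be stated with respect to the multi-stack semantics $\mathcal{M}$ of Section \ref{sec:semantics}, not the single-stack semantics $\mathcal{S}$ you quote; soundness of $I$ is a statement about $\mathcal{M}$.)
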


The well-nested shuffle can be verified as a classic recursive program using classic verification techniques, without concurrency posing an additional complication. Moreover, as we discuss in Section \ref{sec:redwn}, any sound reduction of it can also be used for the purpose of verification.


\section{Canonical Reductions and Their Construction} \label{sec:canonical}
In Section \ref{sec:reduction}, we introduce lex reductions as an infinite family of recursive product programs. Here, we introduce a simpler (still infinite) subfamily of lex reductions that are defined based on a set of building blocks, called {\em canonical reductions}, which are intuitive for the programmer to understand, compose, and use as annotations in hypersafety verification of recursive programs.

In Section \ref{sec:cr}, we give the formal definitions, and show that this new family is a subset of the lex reductions. As such, the results from the previous section support the construction of these in the style presented in Section \ref{sec:oc}. We observe, however, that for this sub-family, one can follow the inductive definitions and provide a direct VPG construction of a reduction from the family. This construction yields smaller grammars compared to that of Section \ref{sec:oc} as we illustrate in Section \ref{sec:experiments}.

\subsection{Canonical Reductions for Recursive Programs}\label{sec:cr}
We start by giving a formal definition for the nested concatenation, motivated by the \ref{p3} property of $\mathtt{div}$ from Section \ref{sec:intro}, and then one for the {\em parametric lockstep}. We define them on words for simplicity, since one can lift the definition from words to languages (i.e., sets of words) in the standard way. We formally argue that these are instances of lex reductions, and then use an example to discuss the expressive power of the simple subfamily.

\paragraph{\bf Nested Concatenation.} The rules in Fig. \ref{fig:nc} inductively define a ternary relation $w_1 \boldsymbol{\oplus} w_2 \to w$, where $w_1$ and $w_2$ are well-matched words. The \emph{nested concatenation} $w_1 \oplus w_2$ is the word $w$ such that $w_1 \boldsymbol{\oplus} w_2 \to w$, which can be shown to be uniquely defined and well-matched. We generalize this from two to $n$ words by letting $w_1 \oplus w_2 \cdots \oplus w_n = w_1 \oplus (w_2 \oplus \cdots \oplus w_n)$.

\begin{figure}[h]
\begin{center}
\begin{tabular}{|ccc|}\hline
	\infer[\mbox{NC-}\epsilon]{\epsilon \boldsymbol{\oplus} w_2 \to w_2}{} \ \ \ \ 
&
	\infer[\mbox{NC-int}]{{\blue a}w_1 \boldsymbol{\oplus} w_2 \to {\blue a}w'}{
 w_1 \boldsymbol{\oplus} w_2 \to w'}\ \ \ \ 
&	\infer[\mbox{NC-call}]{{\blue c}w_1{\blue r}v \boldsymbol{\oplus} w_2 \to {\blue c}w'{\blue r}v}{
 w_1 \boldsymbol{\oplus} w_2 \to w'}\\\hline
\end{tabular}
\end{center}\vspace{-5pt}
\Description{Definition of nested concatenation}
\caption{Definition of nested concatenation: $\blue{c}$, $\blue{r}$, and $\blue{a}$ respectively stand for a call, a return, or an internal statement. In ${\blue c}w{\blue r}v$, the $\blue{r}$ is meant to be the return that matches $\blue{c}$. \label{fig:nc}}\vspace{-5pt}
\end{figure}


\paragraph{\bf Parametric Lockstep.} Next, we define a lockstep product that is parametric on a vector $\vec{s}$ of positive integers that provides the nominal speed of the movement of each component in the lockstep synchronization. We define a simple {\em modulo} decrement operation for such vectors to simplify our definition. For vectors of natural numbers $\vec{s}$ and $\vec{t}$ such that $|\vec{s}| = |\vec{t}| = n > 0$, $\vec{0} \le \vec{t} \le \vec{s}$ (componentwise), and $\vec{t} \neq \vec{s}$, define
\[
    \dec_{\vec{s}}(\vec{t}) := \begin{cases}
        (s_1 - 1, s_2, \ldots, s_n), &\text{if } \vec{t} = \vec{0} \\
        (0, \ldots, 0, t_i - 1, t_{i+1}, \ldots, t_n). &\text{if } i = \min\{j:t_j > 0\}
    \end{cases}
\]

\def\LSS#1{\mathsf{LS}_{\vec{#1}}}
\def\LS#1#2{\mathsf{LS}_{\vec{#1},\vec{#2}}}

$\LSS{\vec{s}}(w_1, \dots, w_n)$ denotes the lockstep composition of the $n$ well-matched words at the {\em speed} of $\vec{s}$. To inductively define it, we need an additional helper vector $\vec{t}$.  
The rules in Fig. \ref{fig:ls} define a relation $\LS{s}{t}(w_1, \dots, w_n) \to w'$, where $\vec{s}$ and $\vec{t}$ are vectors satisfying the constraints above. The \emph{$\vec{s}$-lockstep} of $w_1, \ldots, w_n$ is the unique word $w'$ such that $\LS{s}{0}(w_1, \dots, w_n) \to w'$. 

Note that the definition is inductive over the structure that exposes the \emph{leftmost} letter of a word. This is intentional, for consistency with the definition of CLO (Definition \ref{def:clo}), which looks at the \emph{left context}. However, this is not the only valid choice, and we discuss the alternative in Section~\ref{sec:customizations}.

\begin{figure}[t]
\begin{mathpar}
\infer[\mbox{LS-single}]{\LS{s}{t}(w_1) \to w_1}{}
    \and
\infer[\mbox{LS-}\epsilon]{\LS{s}{t}(w_1, \dots, w_{j-1}, \epsilon, w_{j+1}, \dots w_n) \to w'}
{\LS{s'}{t'}(w_1, \dots, w_{j-1}, w_{j+1}, \dots w_n) \to w' & 
\vec{s'},\vec{t'} \leftarrow \vec{s}_{-j},\vec{t}_{-j}
}

\and

\infer[\mbox{LS-int}]{\LS{s}{t}(w_1, \dots, w_{m-1}, \blue{a}w_{m}, w_{m+1}, \dots, w_n) \to w'}
{  
\LS{s}{t}(w_1, \dots, w_n) \to w'
&
\text{none of $w_i$ where $i<m$ starts with an internal}
}

\and

\infer[\mbox{LS-call-1st}]{\LS{s}{t}(\blue{c_1}w_1\blue{r_1}v_1, \dots, \blue{c_n}w_n\blue{r_n}v_n) \to \blue{c_1}u'\blue{r_1}v'}
{\LS{s}{{\dec_{\vec{s}}(\vec{t})}}(w_1, \blue{c_{2}}w_2\blue{r_{2}}, \dots, \blue{c_{n}}w_{n}\blue{r_{n}}) \to u' 
& 
\vec{t} = \vec{0} 
&
\LS{s}{t}(v_1, \dots, v_n) \to v'
}
\and

\infer[\mbox{LS-call}]{\LS{s}{t}(\blue{c_1}w_1\blue{r_1}v_1, \dots, \blue{c_n}w_n\blue{r_n}v_n) \to \blue{c_m}u'\blue{c_m}v_m}
{\LS{s}{{\dec_{\vec{s}}(\vec{t})}}(\dots, \blue{c_{m-1}}w_{m-1}\blue{r_{m-1}}, w_{m}, \blue{c_{m+1}}w_{m+1}\blue{r_{m+1}}, \dots) \to u' 
& 
\vec{t} \neq \vec{0} 
&
m = \min\{i : t_i > 0\}
}

\end{mathpar}\vspace{-10pt}
\Description{Definition of $\vec{s}$-lockstep}
\caption{Inductive defintion of parametric lockstep. $\vec{s}_{-j}$ denotes the vector obtained from $\vec{s}$ by dropping the $j$-th component, and similar for $\vec{t}_{-j}$. \label{fig:ls} }\vspace{-10pt}
\end{figure}

\paragraph{\bf Subfamily of Lex Reductions.}
Crucially, we can show that the nested concatenation and $\vec{s}$-lockstep of well-matched VPLs over disjoint alphabets can be produced through a VP contextual order (VPO) as a lex reduction of the well-nested shuffle. 
\begin{proposition}\label{prop:cr-vplex}
    If $P_1, \dots, P_n$ are well-matched VPLs over disjoint alphabets, then their concatenation, nested concatenation, and $\vec{s}$-lockstep are lex reductions of $P_1 \wnshuffle \cdots \wnshuffle P_n$ induced by VPOs.
\end{proposition}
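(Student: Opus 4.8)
The plan is to reduce the claim, for each of the three operations $\mathrm{op}$ (concatenation, nested concatenation $\oplus$, and $\vec{s}$-lockstep $\LSS{s}$), to exhibiting a single visibly pushdown contextual order $\prec$ (Definition~\ref{def:representability}) with $\red_\prec(P_1 \wnshuffle \cdots \wnshuffle P_n) = \mathrm{op}(P_1, \ldots, P_n)$. First I would pin down the equivalence classes of the well-nested shuffle under $\Ith$: two words are $\equiv_{\Ith}$-equivalent exactly when they share all projections $\Pi_{\tS_i}$, so each class of $P_1 \wnshuffle \cdots \wnshuffle P_n$ is precisely the set of well-nested interleavings of one tuple $(w_1, \ldots, w_n) \in P_1 \times \cdots \times P_n$, and all its members have the common length $\sum_i |w_i|$. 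Each operation is a function of the tuple that outputs a single well-nested interleaving of it — immediate from the inductive rules, which consume the inputs left to right and keep every matched call/return pair within one component — so $\mathrm{op}(P_1,\ldots,P_n)$ already contains exactly one representative per class. It therefore suffices to show that this representative is the $\preceq$-\emph{minimum} of its class for the CLO (Definition~\ref{def:clo}) induced by a suitable VPO.

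For the minimum I would use the \emph{greedy} characterization underlying the construction of Section~\ref{sec:oc}: since all words of a class have equal length, the $\preceq$-least one is built left to right by repeatedly appending the $\prec$-least \emph{head} letter (the next unread letter of some component) that can still be completed to a well-nested interleaving. The only completability constraint is on returns: a head $r \in \Sr_i$ is admissible only when the innermost pending call is from $\tS_i$, whereas calls and internals are always admissible; consequently at most one return is ever a valid head. I would record this alongside an \emph{extensibility} lemma — every well-nested prefix assembled from component heads extends to a full well-nested interleaving — so the greedy procedure never stalls and genuinely computes the class minimum.

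With this in hand, concatenation and nested concatenation are handled by \emph{constant} VPOs, each realized by a one-state complete deterministic VPA and hence visibly pushdown by Definition~\ref{def:representability}. For concatenation, the order $\tS_1 < \tS_2 < \cdots < \tS_n$ forces each well-matched $w_i$ to be read in full before any later component, yielding $w_1 w_2 \cdots w_n$. For nested concatenation, the order ranking all calls and internals below all returns and ordering calls/internals by ascending component index makes greedy descend $w_1$ along its leftmost call/internal spine until its head becomes a return, then (every return being deferred, and the unique admissible one among them together with the deferral) switch to $w_2$, recurse, and only afterwards emit the pending returns — exactly reproducing rules NC-$\epsilon$, NC-int, and NC-call. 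I would prove these equalities by induction on $\sum_i|w_i|$, matching each greedy step to the corresponding rule.

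The main obstacle is the parametric lockstep, which requires a genuinely \emph{stateful} VPO. Here I would let the finite control of the representing VPA carry the decrement counter $\vec{t}$ (ranging over the finite set $\{\vec{t} : \vec{0} \le \vec{t} \le \vec{s}\}$ fixed by $\vec{s}$), with $\ord$ mapping each counter value to an order that prioritizes the component $m = \min\{i : t_i > 0\}$ (or component $1$ when $\vec{t} = \vec{0}$) and that advances $\vec{t}$ to $\dec_{\vec{s}}(\vec{t})$ at each synchronized step; crucially, on a call the current counter is pushed onto the stack and on the matching return it is popped, so the tail of a call resumes at the pre-call phase exactly as the reset of $\vec{t}$ in rules LS-call-1st and LS-call demands. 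Checking that this VPA is well-defined, complete, and deterministic is routine, but establishing that its greedy minimum equals $\LS{s}{0}(w_1,\ldots,w_n)$ is delicate: the induction must track the counter dynamics through the recursion, reconcile the contextual selection of the least admissible head with the index choice $m=\min\{i:t_i>0\}$ and the modulo decrement $\dec_{\vec{s}}$, and verify that the push/pop discipline restores the phase across every nesting level. This synchronization argument, rather than the visibly-pushdown bookkeeping, is where the real work lies.
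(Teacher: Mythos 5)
Your high-level route is the paper's own: the paper proves this proposition precisely by exhibiting, for each operation, a complete deterministic VPA together with an $\ord$ map, and your two constant orders agree with the paper's verbatim ($\tS_1 < \cdots < \tS_n$ for concatenation, and $\Sl_1 \cup \Sc_1 < \cdots < \Sl_n \cup \Sc_n < \Sr$ for nested concatenation). Your greedy-minimum characterization plus the extensibility lemma is a sound way to discharge the verification that the paper leaves implicit, and the arguments you sketch for concatenation and nested concatenation go through. The gap is in the $\vec{s}$-lockstep construction, and it is not merely that the verification is ``delicate'': the VPO you describe induces the wrong reduction, so the induction you defer could not succeed.

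There are two concrete defects. First, rule LS-int emits internals eagerly, lowest component index first, independently of whose synchronization turn it is; hence at a counter state the order must place all internals (ordered by component index) below all calls, i.e.\ $\Sl_1 < \cdots < \Sl_n < \Sc_m < \cdots < \Sc_{m-1} < \Sr$, as the paper does. ``Prioritizing component $m$'' wholesale fails already on $w_1 = c_1 r_1$, $w_2 = a_2 c_2 r_2$ with $\vec{s} = (1,1)$: the lockstep is $a_2 c_1 c_2 r_2 r_1$, but your greedy begins with $c_1$. Second, and more fundamentally, popping back to the pre-call counter at \emph{every} matching return is incorrect. Take $\vec{s} = (1,1)$, $w_1 = c_1 r_1 c_1' r_1'$, $w_2 = c_2 r_2 c_2' r_2'$: the lockstep is $c_1 c_2 r_2 r_1 c_1' c_2' r_2' r_1'$, but after the prefix $c_1 c_2 r_2$ your pop restores the counter under which component $2$ has priority (that was the phase when $c_2$ was read), so greedy emits $c_2'$ rather than $r_1$ and produces $c_1 c_2 r_2 c_2' r_2' r_1 c_1' r_1'$, which is not the lockstep word. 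The misreading is of LS-call: when $\vec{t} \neq \vec{0}$ the conclusion has the shape $c_m u' r_m v_m$ --- after the return of a call that merely \emph{joined} a synchronization block, only that component's own tail follows, and the enclosing pending returns must unwind before any other component resumes; only the return of the block-opening call (the LS-call-1st case, whose call is read at $\vec{t} = \vec{0}$) restores the pre-call counter. The paper implements exactly this with an extra state $\star$: a return of component $i$ popping a frame whose counter has $t_i > 0$ moves to $\star$, whose order ranks whole alphabets by index ($\tS_1 < \cdots < \tS_n$), while a pop of a frame with $t_i = 0$ restores that counter. With these two repairs, your greedy framework does establish the lockstep case.
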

We point out that the VPOs used by the construction are uniform (see Section \ref{sec:oc}) w.r.t. VPA such that for every state, the outgoing letters are all internals, all calls, or all returns. Combining Proposition \ref{prop:cr-vplex} and Theorem \ref{thm:vpred}, we have the following theorem:
\begin{theorem}\label{thm:cr-vp}
    If $P_1, \dots, P_n$ are well-matched VPLs over disjoint alphabets, then any reduction built from a combination of concatenation, nested concatenation, and $\vec{s}$-lockstep and one occurrence of each $P_i$ is a VP reduction of $P_1 \wnshuffle \cdots \wnshuffle P_n$. 
\end{theorem}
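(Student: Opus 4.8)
The plan is to prove the statement by structural induction on the expression tree $T$ that builds the reduction: its leaves are the components $P_1, \dots, P_n$ (each occurring exactly once) and its internal nodes are labelled by concatenation, nested concatenation, or $\vec{s}$-lockstep. At every node $v$ I will maintain two invariants for the language $Q_v$ it computes: (i) $Q_v$ is a well-matched VPL, and (ii) $Q_v$ is a reduction under $\Ith$ of $\wnshuffle_{i \in S_v} P_i$, where $S_v$ is the set of leaf indices below $v$. The theorem is the root case, where $S_v = \{1,\dots,n\}$. The leaf case is immediate, since $Q_v = P_i$ is a well-matched VPL by hypothesis and is trivially a reduction of itself.

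For the inductive step at a node applying an operator $\mathrm{op}$ to children computing $Q_1, \dots, Q_k$ over disjoint alphabets, the induction hypothesis gives that each $Q_j$ is a well-matched VPL and a reduction of $L_j := \wnshuffle_{i \in S_j} P_i$. Because the $Q_j$ are well-matched VPLs over disjoint alphabets, Proposition \ref{prop:cr-vplex} applies to them directly, yielding a VPO $\prec$ with $\mathrm{op}(Q_1,\dots,Q_k) = \red_\prec(Q_1 \wnshuffle \cdots \wnshuffle Q_k)$; Theorem \ref{thm:vpred} then shows this language is a VPL, and it is well-matched as it consists of well-nested words. This establishes invariant (i) and shows $\mathrm{op}(Q_1,\dots,Q_k)$ is a reduction of $Q_1 \wnshuffle \cdots \wnshuffle Q_k$. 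What remains is to lift this into a reduction of $L_1 \wnshuffle \cdots \wnshuffle L_k$, which by associativity of $\wnshuffle$ equals $\wnshuffle_{i \in S_v} P_i$.

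The crux is a compositionality lemma I would prove separately: if $\widehat{L}$ is a reduction of $L$ (both over $\Sigma_L$) and $M$ is any language over an alphabet $\Sigma_M$ disjoint from $\Sigma_L$, then $\widehat{L} \wnshuffle M$ is a reduction of $L \wnshuffle M$. Containment follows from monotonicity of $\wnshuffle$. For covering, take $w \in L \wnshuffle M$, set $\rho = \Pi_{\Sigma_L}(w) \in L$ and $\tau = \Pi_{\Sigma_M}(w) \in M$, and pick $\sigma \in \widehat{L}$ with $\sigma \equiv_{\Ith} \rho$. Since $\{\sigma\} \wnshuffle \{\tau\}$ is nonempty for well-matched $\sigma,\tau$ (the concatenation $\sigma\tau$ is always well-nested), choose any $w' \in \{\sigma\} \wnshuffle \{\tau\}$, so that $w' \in \widehat{L} \wnshuffle M$. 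The key point is $w' \equiv_{\Ith} w$: here I would invoke the standard trace-theoretic characterization that, for the full cross-alphabet commutativity $\Ith$, two words are $\equiv_{\Ith}$-equivalent exactly when their projections onto each $\tS_i$ coincide. Since $\sigma \equiv_{\Ith} \rho$ equalizes the per-component projections inside $\Sigma_L$, and $\tau$ is shared, all projections of $w'$ and $w$ agree, giving $w' \equiv_{\Ith} w$.

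With the lemma in hand, I would apply it iteratively (using commutativity of $\wnshuffle$ to reduce on either side) together with transitivity of the reduction relation to conclude that $Q_1 \wnshuffle \cdots \wnshuffle Q_k$ is a reduction of $L_1 \wnshuffle \cdots \wnshuffle L_k$; composing with the local reduction $\mathrm{op}(Q_1,\dots,Q_k)$ of $Q_1 \wnshuffle \cdots \wnshuffle Q_k$ again by transitivity closes invariant (ii). I expect the main obstacle to be exactly the covering step of the compositionality lemma, namely guaranteeing that the re-shuffled word $w'$ can be taken well-nested while remaining $\equiv_{\Ith}$-equivalent to $w$ despite the internal reordering of the $\Sigma_L$-letters; the projection characterization of $\equiv_{\Ith}$ is what makes this clean, letting me avoid tracking how well-nestedness interacts with the commutations step by step.
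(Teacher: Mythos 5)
Your proposal is correct and takes essentially the same approach as the paper: the paper's entire proof reads ``by Proposition \ref{prop:cr-vplex}, induction, and the permutation-invariance of $\wnshuffle$'' (with Theorem \ref{thm:vpred} supplying visible pushdownness, as the text just before the theorem indicates), and your structural induction on the expression tree, the compositionality lemma, and the projection characterization of $\equiv_{\Ith}$ are precisely the details that this one-liner leaves implicit. One small repair: well-nestedness does not imply well-matchedness (the paper explicitly distinguishes the two), so invariant (i) should instead be justified by observing that every word of $Q_1 \wnshuffle \cdots \wnshuffle Q_k$ is an interleaving of well-matched words (the $Q_j$, by the induction hypothesis) and is therefore well-matched; relatedly, your compositionality lemma should require the words of $M$ (and of $L$) to be well-matched and well-nested --- which holds in every application --- so that the witness $\sigma\tau$ is indeed well-nested.
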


\paragraph{\bf Expressive Power.}
Superficially, nested concatenation and parametric lockstep may seem simple and trivial. As we illustrate empirically in Section \ref{sec:experiments}, their compositions yields a powerful family that can solve nearly all our benchmarks.
However, to discuss this in depth, let us look at an example borrowed from \cite{ChurchillPSA2019}, which puts forward a solution for constructing {\em semantic} product programs. Our reductions are syntactic and as we remark in Section \ref{sec:discussion}, this implies a strict limitation compared to semantic ones. However, as we demonstrate using the example below, the combination of canonical reductions with slight \emph{customizations}, which operate on \emph{unbounded} and \emph{rich} syntax, can be surprisingly powerful.

\begin{wrapfigure}[12]{r}{0.5\textwidth}\vspace{-15pt}
\includegraphics[scale=0.45]{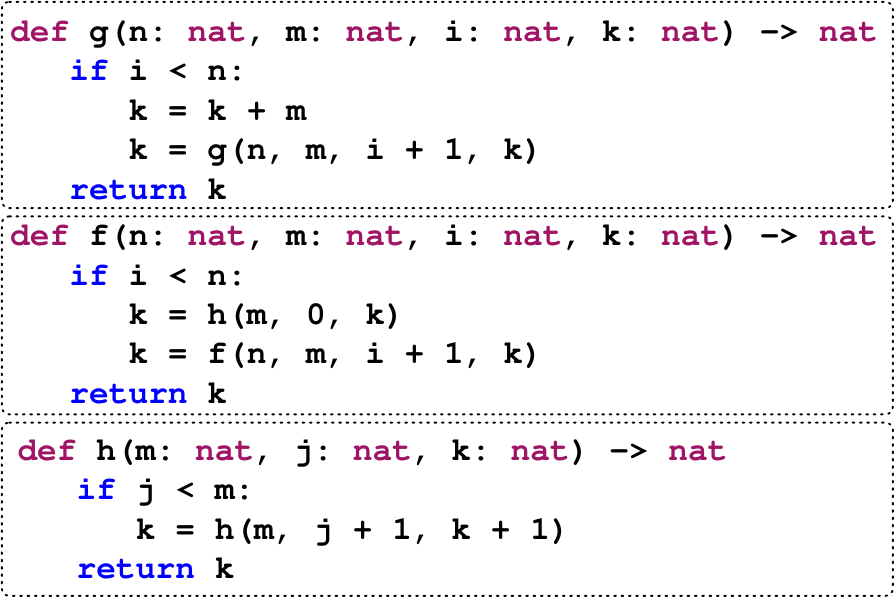}\vspace{-8pt}
\caption{Challenging Example From \cite{ChurchillPSA2019} \label{fig:12}}
\end{wrapfigure}
Consider the functions $f$ and $g$ illustrated on the right and the goal of proving them equivalent. They both increment $k$ by $m$ for $n$ times; the only difference is that in $f$, the number $m$ is computed by an auxiliary function $h$ as the sum of $m$ $1$'s.
They appear in Figure 12 from \cite{ChurchillPSA2019}, where an iterative version of this is cited as a challenge for verification using product programs, even {\em semantically} constructed ones. To solve this in \cite{ChurchillPSA2019}, the parameter $m$ is bounded to a small constant.

In constrast, our methodology easily yields a solution. In each interleaving run, we intuitively want to match every call of $f$ against one call of $g$. This is \emph{almost} a $(1, 1)$-lockstep, except that the formal definition in Fig. \ref{fig:ls} dictates that \emph{every} call and return participates in the scheduling, which undesirably forces the calls of $h$ to also participate in the lockstep matching of the calls; as a result, the second recursive call of $g$ are nested inside the first call of $h$. However, as we demonstrate in Section \ref{sec:customizations}, this limitation in the formal definitions of the canonical reductions is superficial, and we can easily \emph{customize} them to schedule only a selected subset of calls and returns (in this case, those of functions $f$ and $g$) and treat the rest the same way as internal actions. Our tool {\tool} can verify this instance when such a customized $(1,1)$-lockstep reduction is given as input.

\subsection{Customizations of Canonical Reductions} \label{sec:customizations}
In this section, we introduce two simple \emph{customizations} of canonical reductions.

\paragraph{\bf Scheduling Selected Calls/Returns Only.}

By default, every call and return participates in the scheduling of the canonical reductions, and our first customization is to allow only a selected subset to participate, and the rest stack operations $\Sigma' \subset \Sc \cup \Sr$ are treated the same way as internal actions.

To achieve this, one cannot simply change the partition of the visibly pushdown alphabet, which breaks the mechanics of visibly pushdown languages; nor can we apply the NC-int rule or LS-int rule for runs that start with a call that does not participate in the scheduling, which breaks well-nestedness of the reduction. Instead, we modify the \emph{contextual orders} $\prec$ for the canonical reductions (Proposition \ref{prop:cr-vplex}): Based on $\prec$, we can define another contextual order $\prec'$ that replaces letters in $\Sigma'$ with internal letters, queries $\prec$, and returns the result. Then the reduction induced by $\prec'$ is exactly the customization that we want. Formally, let $f : \tS^* \to \tS^*$ be a letter-to-letter string homomorphism such that
\begin{equation}
    f(a) \in \Sl_i \text{ if } a \in \Sigma' \cap \tS_i \text{ for any } i, \text{ and } f(a) = a \text{ otherwise.} \label{eq:homo}
\end{equation}
Then $\prec'$ is given by the contextual order $f(\prec)$ defined by
$\forall a, b \in \tS.\; a\; f(\prec)_w\; b \Leftrightarrow f(a) \prec_{f(w)} f(b)$.

In general, $f(\prec)_w$ is not a linear order even when $\prec$ is a map $\tS^* \to \Lin(\tS)$, as we have been assuming; but in our use case, where $f$ replaces excluded calls and returns with internals from the same component, such a definition is sufficient to induce a minimal reduction. Moreover, when $\prec$ is a VPO (as in Proposition \ref{prop:cr-vplex}), with a reasonable assumption on $\Sigma'$, there is a contextual order \emph{equivalent} to $f(\prec)$ that is a VPO:
\begin{proposition} \label{prop:homo-vpo}
    Let $\Sigma'$ be a subset of $\Sc \cup \Sr$ such that for all $i \in \{1,\ldots,n\}$ and $c, r \in \tS_i$ that match in some run of $P_i$, we have $c \in \Sigma' \Leftrightarrow r \in \Sigma'$. Let $f$ be a letter-to-letter string homomorphism that satisfies \eqref{eq:homo}. If $\prec$ is a VPO, then there is a VPO $\prec'$ such that $\red_{\prec'}(P_1 \wnshuffle \cdots \wnshuffle P_n) = \red_{f(\prec)}(P_1 \wnshuffle \cdots \wnshuffle P_n)$.
\end{proposition}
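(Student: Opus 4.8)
The plan is to construct a concrete deterministic complete VPA $A'$ over $\tS$ with an order labelling $\ord': Q_{A'} \to \Lin(\tS)$ whose induced VPO $\prec'$ selects, on the well-nested shuffle, exactly the same representatives as $f(\prec)$. The starting observation is that $f(\prec)$ itself already \emph{defines} the desired reduction; it can fail to be a VPO only for two reasons: (i) $f(\prec)_w$ need not be total, since distinct letters of $\Sigma' \cap \tS_i$ collapse to the same internal under $f$ and become tied; and (ii) the order at context $w$ is $\prec_{f(w)}$, so the representing automaton must read the $f$-image of its input, and under $f$ the letters of $\Sigma'$ change their visibly pushdown type. The construction of $(A', \ord')$ resolves both.

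First I would build $A'$ so that its finite control tracks the state of $A$ after $f(w)$, while its stack carries either a genuine $A$-stack symbol from $\Gamma_A$ or a dummy marker. On internals and on calls/returns \emph{not} in $\Sigma'$, $A'$ copies $A$'s transition on the same letter (pushing/popping the encoded $\Gamma_A$-symbol exactly as $A$ does). On a call $c \in \Sigma'$, where $f(c)$ is an internal, $A'$ advances the tracked $A$-state by $A$'s \emph{internal} transition on $f(c)$ and, to respect the $\tS$-discipline, pushes a dummy marker; symmetrically, a return $r \in \Sigma'$ advances the state by the internal transition on $f(r)$ and pops a marker. Here the matching-closure hypothesis on $\Sigma'$ is exactly what is needed: since $c \in \Sigma'$ iff its matching return is, the markers pushed at $\Sigma'$-calls are popped precisely at their matching $\Sigma'$-returns, so on (prefixes of) well-nested inputs the genuine $\Gamma_A$-symbols remain correctly positioned and each non-$\Sigma'$ return still finds on top the symbol that $A$ would pop. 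An induction on the length of a well-nested prefix $w$ then gives that $A'$ reaches a state whose tracked component is $A$'s state on $f(w)$; the transitions on ill-nested configurations may be completed arbitrarily, as such prefixes never occur in $P_1 \wnshuffle \cdots \wnshuffle P_n$.

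I would then set $\ord'(q)$, for the tracked $A$-state $q$, to compare $a, b$ by comparing $f(a), f(b)$ in $\ord(q)$ and to break the residual ties (which arise only when $f(a) = f(b)$) by one fixed total order on $\tS$; this is a function of $q$ alone, so $(A', \ord')$ represents a genuine total-order VPO $\prec'$. To prove $\red_{\prec'}(P_1 \wnshuffle \cdots \wnshuffle P_n) = \red_{f(\prec)}(P_1 \wnshuffle \cdots \wnshuffle P_n)$, I would use that any two words compared by a CLO lie in the same $\equiv_{\Ith}$-class and hence agree on every projection $\Pi_{\tS_i}$; a short argument shows their first disagreement carries letters from two \emph{different} components (otherwise the shared projection onto that component would force the two letters to coincide). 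At such a decisive position, $a \in \tS_i$ and $b \in \tS_j$ with $i \neq j$ have $f$-images in distinct components, so they are never tied and $\ord'$ ranks them exactly as $f(\prec)$ does---the added tie-breaking is irrelevant. Together with the simulation correctness (so $\prec'$ and $f(\prec)$ consult the same $A$-state at every context $\alpha$), this shows the two orders agree on every comparison that can affect minimality, hence select the same representatives.

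The main obstacle is reconciling the two incompatible stack disciplines: a letter of $\Sigma'$ is a call or return for $A'$ over $\tS$ but an internal for $A$ over $f(w)$. The matching-closure assumption on $\Sigma'$ is precisely the hypothesis that makes the dummy-marker bookkeeping sound, and the technical heart of the proof is verifying that this bookkeeping keeps the genuine $\Gamma_A$-symbols synchronized with $A$'s run on $f(w)$ along every well-nested prefix.
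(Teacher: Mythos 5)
Your proposal is correct and takes essentially the same approach as the paper's own proof: the same automaton $A'$ that tracks $A$'s state on $f(w)$ while pushing/popping a dummy stack symbol for $\Sigma'$-calls and $\Sigma'$-returns (with the matching-closure hypothesis guaranteeing the markers stay synchronized on well-nested prefixes), and the same resolution of the non-totality of $f(\prec)$ by extending the pulled-back order, justified in both cases by the observation that the only comparisons affecting minimality are between letters of different components, whose $f$-images are distinct and hence already ordered. The only cosmetic differences are that you fix a concrete tie-breaking order and explicitly complete the automaton on ill-nested configurations, where the paper invokes the Order-Extension Principle and leaves completeness implicit.
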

The proof in \refapp{app:proofs} gives a construction for the VPO $\prec'$ given $\prec$ and $f$.
\paragraph{\bf Right Alignment.}
The canonical reductions are defined over the structure that exposes the \emph{leftmost} letter of a word, for consistency with the use of \emph{left context} (Definition \ref{def:clo}), which eventually leads to Proposition \ref{prop:cr-vplex}. However, this does not have to be the case: we can customize the \emph{alignment} of canonical reductions. To have \emph{right-aligned} canonical reductions, we consider well-matched words of the form $\epsilon$, $w\blue{a}$, and $v\blue{c}w\blue{r}$ and change the rules accordingly. Then Proposition \ref{prop:cr-vplex} holds for the modified version of contextual lex reductions that look at the \emph{right context}.

As an example, consider a benchmark from \cite{MordvinovF2019} about proving the monotonicity of the \emph{Ackermann function} $A(m, n)$ over the second argument:
\[
    A(m, n) := \begin{cases}
        n + 1, &\text{if } m = 0 \\
        A(m - 1, 1), &\text{if } m > 0 \land n = 0 \\
        A(m - 1, A(m, n - 1)). &\text{if } m > 0 \land n > 0
    \end{cases}
\]
Intuitively, a $(1, 1)$-lockstep reduction seems like the appropriate choice. However, let us consider a run $\rho_1$ that starts with calling $A(m, 0)$ and a run $\rho_2$ that starts with calling $A(m, n)$ ($m, n > 0$). The run $\rho_1$ consists of one direct recursive call $A(m - 1, 1)$, whereas the run $\rho_2$ consists of two direct recursive calls $a := A(m, n - 1)$ and $A(m - 1, a)$, in that order. We would want to match $A(m - 1, 1)$ in $\rho_1$ against the \emph{second} call $A(m - 1, a)$ in $\rho_2$, since it is the second call that produces the final result and has the same argument $m - 1$. A concise way to describe such a reduction is the \emph{right-aligned} $(1, 1)$-lockstep.

\subsection{Direct Construction of Canonical Reductions} \label{sec:eccr}
Given well-matched VPGs $G_1, \ldots, G_n$ describing the language of the syntactic runs of each component, the inductive definitions of the canonical reductions (Figures \ref{fig:nc} and \ref{fig:ls}) inspire a way to directly construct them as VPGs.
A syntactic assumption on the form of input VPGs facilitate this construction: for any productions $X \to \alpha$ and $X \to \beta$, we assume that the right-hand sides $\alpha, \beta$ both start with a call, both start with an internal, or both are empty. Any VPG can be converted in linear time to an equivalent one in this form. 


\paragraph{\bf Nested Concatenation:} One can view the construction as a type of {\em product} construction.  
The nonterminal symbols of the grammar $G$ for the language $\Lang(G_1) \oplus \Lang(G_2)$ are of the form $[W_1,W_2]$, where $W_1,W_2$ are nonterminal symbols of $G_1,G_2$ respectively; $[W_1,W_2]$ is a start symbol if $W_1,W_2$ are start symbols of $G_1,G_2$ respectively. The productions of $G$ are defined by the rules in Figure \ref{fig:nc-vpg}, where $(X \to \alpha) \in G$ denotes $(X \to \alpha)$ is a production in $G$, a slight abuse of notation. Observe the syntactic similarity between the rules of Figure \ref{fig:nc} and those in Figure \ref{fig:nc-vpg}. The extra syntactic condition guarantees this simplicity of inference of a grammar from the inductive definition. 
\begin{figure}[h]
\footnotesize
\begin{mathpar}
    \inferrule
    {(Y_1 \to \epsilon) \in G_1\\\\
    (W_2 \to \alpha) \in G_2}
    {([Y_1,W_2] \to \alpha) \in G}\text{GN-}\epsilon
    \and
    \inferrule
    {(Y_1 \to \blue{a}W_1) \in G_1\\\\
    (W_2 \to \alpha) \in G_2}
    {([Y_1,W_2] \to \blue{a}[W_1,W_2]) \in G}\text{GN-int}
    \and
    \inferrule
    {(Y_1 \to \blue{c}W_1\blue{r}V) \in G_1\\\\
    (W_2 \to \alpha) \in G_2}
    {([Y_1,W_2] \to \blue{c}[W_1,W_2]\blue{r}V) \in G}\text{GN-call}
\end{mathpar}\vspace{-10pt}
\Description{VPG construction rules of the nested concatenation}
\caption{VPG productions for nested concatenation \label{fig:nc-vpg}}\vspace{-10pt}
\end{figure}

\paragraph{\bf $\vec{s}$-Lockstep:} 
The construction of $\vec{m}$-lockstep follows the same pattern. The nonterminal symbols of the grammar $G$ of the $\vec{m}$-lockstep of $\Lang(G_1), \ldots, \Lang(G_n)$ have the form $[\vec{s},\vec{t},(W_1, \dots, W_k)]$, where $\vec{s},\vec{t}$ satisfy the constraints in Section \ref{sec:cr}, $W_i$ is a nonterminal in $G_i$ for each $i$, and the length and maximum element of $\vec{s}$ are bounded by those of $\vec{m}$. The rules for generating the grammar construction rules mimic those of the inductive definition in Figure \ref{fig:ls}.

There is, however, a small complication with an easy fix. In the rules for nested concatenation (in Figure \ref{fig:nc}), for every pattern in the conclusion, the parts that appear in the premises have a nonterminal symbol dedicated to it. For instance, $\blue{a}w_1$ is a pattern in the inductive definition, and $w_1$ corresponds to nonterminal symbol $W_1$ in the grammar construction rules. This makes things straightforward. Now consider the rule ``LS-call'' (of figure \ref{fig:ls}).

There is a pattern $\blue{c_i}w_i\blue{r_i}v_i$ in the conclusion, yet the subword $\blue{c_i}w_i\blue{r_i}$ is isolated in the premise, for which there does not exist a nonterminal. We can fix this if, for every rule of the form $Y \to \blue{c}W\blue{r}V$ in $G_i$, we add a rule $Y' \to \blue{c}W\blue{r}E$ (where $E \to \epsilon$), so subwords generated by $\blue{c}W\blue{r}$ can be referred to as $Y'$. The full set of production rules are listed in Figure \ref{fig:ls-vpg} in \refapp{app:cr-details}.

\paragraph{\bf Algorithmic Construction:} In an algorithmic view of this construction, we can maintain a queue of reachable symbols, so only the productions that are truly needed are added to the grammar, making the construction efficient. Otherwise, the grammar would end up having a lot of unreachable non-terminals. These VPG constructions exploit properties of the canonical reductions: the contextual information is propagated in a top-down manner. For an arbitrary VP contextual order, the contextual information propagates through the linear and hierarchical edges of the nested words, so the reduction does not necessarily admit a direct VPG construction in the style demonstrated in this section.

\section{The Semantics of The Recursive Product Program}\label{sec:semantics}
Recall that our full notation for a recursive program is a pair $(P,\mathcal{S})$  with $P$ representing its syntax as a visibly pushdown language and $\mathcal{S}$ representing its semantics. 
Let $(P_1, \mathcal{S}_1), \ldots, (P_n, \mathcal{S}_n)$ be recursive programs over disjoint VP alphabets $\tS_1, \ldots, \tS_n$ respectively. Assume that we have constructed a product program $\nP$ that is sound, well-nested, and visibly pushdown. To complete its description, we need to discuss its semantics. Here, we discuss how to compose $\mathcal{S}_1, \ldots, \mathcal{S}_n$ into a single $\mathcal{S}$. 

We assume the variable names from the individual programs are disjoint. Suppose $\mathcal{S}_k$ is induced by $\mathcal{F}_k$ (see Section \ref{sec:rp}) and $V = V_1 \uplus \cdots \uplus V_n$ where $V_k$ is the set of all valuations for $P_k$. The trivial solution would be to use $n$ stacks, which corresponds to a semantics $\mathcal{M}: \tS \to \powerset(\Pi_{j=1}^n \stack(V_j) \times \Pi_{j=1}^n \stack(V_j))$. Each component uses the corresponding stack from the tuple, and its semantic is defined in the standard way.
For an interleaving run $\rho \in P_1 \shuffle \cdots \shuffle P_n$, the Hoare triple $\{\pre\} \rho \{\post\}$ is {\em valid} w.r.t. $\mathcal{M}$ if 
\[\forall ((S_1.\nu_1, \ldots, S_n.\nu_n), (S'_1.\nu'_1, \ldots, S'_n.\nu'_n)) \in \sem{M}{\rho}:\ \bigcup_{j=1}^n \nu_j \models \pre \implies \bigcup_{j=1}^n \nu'_j \models \post.\] 
$\mathcal{M}$ is the semantics where all commutativity reasoning happens: the commutativity relation $\Ith$ (see Section \ref{sec:contextual-lex-reductions}) is sound w.r.t. $\mathcal{M}$.


By construction, any product program $\nP$ is well-nested, so it conceptually runs on a single stack. However, if we just naively push and pop on the same stack, we run into a problem with scoping of variables from different components. For instance, suppose $P_1$ pushes and writes to variable $x_1$; if $P_2$ pushes next, then $P_1$ loses track of $x_1$, which is no longer on the top frame.

There is a simple solution to this problem. We interpret the letters slightly differently: let each frame in the single stack store the variables from \emph{all} $n$ recursive programs (recall that the variable names are assumed to be disjoint); when $P_k$ pushes or pops the stack, we also copy over the irrelevant variables, i.e., those from $P_j$ where $j \neq k$. Formally, we define a semantics $\mathcal{S}: \tS \to \powerset(\stack(V) \times \stack(V))$ induced by $\mathcal{F} \subset (\Sc \times V \times V) \cup (\Sr \times V \times V \times V) \cup (\Sl \times V \times V)$:
\begin{itemize}
    \item If $c \in \Sc_k$, then $\sem{F}{c} := \{ (\nu, \nu) : \nu \in V_j,\, j\neq k \} \cup \sem{F_\mathit{k}}{c}$.
    \item If $r \in \Sr_k$, then $\sem{F}{r} := \{ (\nu, \nu_{<}, \nu) : \nu, \nu_{<} \in V_j,\, j\neq k \} \cup \sem{F_\mathit{k}}{r}$.
    \item If $a \in \Sl_k$, then $\sem{F}{a} := \{ (\nu, \nu) : \nu \in V_j,\, j\neq k \} \cup \sem{F_\mathit{k}}{a}$.
\end{itemize}
$\mathcal{S}$ can be easily adjusted to account for shared memory: the shared global variables are also copied over when when the stack is pushed and popped.

Recall that $\mathcal{M}$ is the $n$-stack semantics. Well-nestedness guarantees that the shift from $\mathcal{M}$ to $\mathcal{S}$ preserves the validity of Hoare triples, so any run in the well-nested shuffle has no distinction from a run of a standard recursive program.
\begin{theorem} \label{thm:sem}
    For any syntactic run $\rho \in P_1 \wnshuffle \cdots \wnshuffle P_n$ and assertions $A, B$, the validity of $\{A\} \rho \{B\}$ is the same with respect to the semantics $\mathcal{M}$ and $\mathcal{S}$. 
\end{theorem}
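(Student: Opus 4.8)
The plan is to establish a step-by-step \emph{simulation} between the $n$-stack semantics $\mathcal{M}$ and the single-stack semantics $\mathcal{S}$ that is sound precisely because $\rho$ is well-nested, and then read off the equality of the two Hoare-triple validities as a corollary. Concretely, I would define a relation $\sim$ between single-stack configurations $T = \gamma_0.\gamma_1.\cdots.\gamma_m \in \stack(V)$ (bottom-to-top, each $\gamma_i$ a full valuation over $V$) and $n$-stack configurations $(S_1,\ldots,S_n) \in \prod_{j} \stack(V_j)$, characterized as follows: for each component $j$, the stack $S_j$ is obtained from $T$ by keeping the base frame together with the frames that were pushed by currently open (pending) $j$-calls, restricting each kept frame to $V_j$, and finally overwriting the value in the topmost kept frame by $\stktop(T)|_{V_j}$. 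The immediate consequence I will exploit is the \emph{readout} invariant $\stktop(T)|_{V_j} = \stktop(S_j)$ for all $j$, so that the combined top valuation $\bigcup_j \stktop(S_j)$ used by $\mathcal{M}$ coincides with the single full valuation $\stktop(T)$ read by $\mathcal{S}$. The two initial configurations --- the single frame $\nu$ and the tuple $(\nu|_{V_1},\ldots,\nu|_{V_n})$ --- are trivially $\sim$-related.

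I would then prove by induction on the length of a well-nested prefix of $\rho$ that $\sim$ is preserved by each letter and that a step is feasible under $\mathcal{M}$ iff it is feasible under $\mathcal{S}$, yielding $\sim$-related successors. The internal and call cases are routine: an internal $a \in \Sl_k$ rewrites only $V_k$ in the top frame on both sides via the same relation $\mathcal{F}_k$, and a call $c \in \Sc_k$ pushes a new frame whose $V_k$-part is drawn from $\mathcal{F}_k$ and whose other components are copied verbatim, exactly mirroring the push of $S_k$ while leaving every other $S_j$ untouched. The key supporting observation is that non-$k$ letters never touch $V_k$ (the $V_j$'s are disjoint), so copied-up values of a component stay frozen until they are copied back down, which is precisely what the definitions of $\mathcal{S}$ on calls and returns encode; this is what guarantees that the non-topmost kept frames of each $S_j$ remain the correct frozen caller values.

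The crux, and the only place where well-nestedness is essential, is the return case $r \in \Sr_k$. Here I would invoke the standard property that in a valid run a return matches the most recent unmatched call, and then use well-nestedness (Definition \ref{def:wnw}, i.e.\ $i \rightsquigarrow j$ forces $w_i,w_j$ into the same $\tS_k$) to conclude that this matching call lies in $\tS_k$. Hence the top frame $\gamma_m$ of $T$ is exactly the frame pushed by a $k$-call, and the frame $\gamma_<$ directly below it is that call's caller frame, frozen on $V_k$ since it was pushed. Under $\mathcal{M}$ the return pops $S_k$, combining $\stktop(S_k) = \gamma_m|_{V_k}$ with its predecessor $\gamma_<|_{V_k}$ through $\mathcal{F}_k$ and leaving every other $S_j$ untouched; under $\mathcal{S}$ the return pops $\gamma_m$, recomputing its $V_k$-part by the \emph{same} $\mathcal{F}_k$ from the \emph{same} two valuations, while the defining rule for $r$ keeps the callee's values on all $V_j$ with $j \neq k$. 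I would argue these kept other-component values are exactly the up-to-date ones: any activity of a component $j \neq k$ between the matching call and $r$ is, by well-nestedness, a balanced subcomputation that leaves $V_k$ alone and records its net effect on $V_j$ in the current top frame, so keeping $\gamma_m|_{V_j}$ rather than reverting to the stale $\gamma_<|_{V_j}$ matches the fact that $\mathcal{M}$ never disturbed $S_j$. Thus the successors are again $\sim$-related and feasibility coincides. Were $\rho$ ill-nested, the matching call could come from a different component, $\gamma_m$ would not be the top of $S_k$, and the alignment would fail --- this is the obstacle that well-nestedness removes, and the part I expect to require the most care.

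Finally, since each letter induces a feasibility-preserving, $\sim$-preserving step, composing along $\rho$ gives a bijection between the executions contributing to $\sem{M}{\rho}$ and those contributing to $\sem{S}{\rho}$ that maps $\sim$-related initial configurations to $\sim$-related final configurations. By the readout invariant, the combined top valuations of the initial and final configurations agree on both sides, so the input--output valuation relations determined by $\sem{M}{\rho}$ and $\sem{S}{\rho}$ are identical. Consequently $\{A\}\,\rho\,\{B\}$ holds under $\mathcal{M}$ exactly when it holds under $\mathcal{S}$, which is the claim. I expect the genuine mathematical content to be entirely concentrated in the return case and its reliance on well-nestedness, with the remaining effort being the bookkeeping needed to verify that the skeleton invariant (which frame is the frozen caller of each open call) is maintained across pushes, pops, and intervening cross-component activity.
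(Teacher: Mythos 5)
Your overall strategy --- a step-by-step correspondence between $\mathcal{M}$- and $\mathcal{S}$-configurations, a readout invariant on top frames, and well-nestedness invoked only at returns --- is sound, and it is essentially an expanded version of the induction behind the paper's Lemma~\ref{lem:sem}. However, the simulation invariant you define is false, and it fails exactly in the call case that you dismiss as routine. Under $\mathcal{M}$, the non-top (saved) frames of $S_j$ are the $V_j$-valuations at the moments the currently open $j$-calls were made, i.e., the $V_j$-restrictions of the frames of $T$ sitting directly \emph{below} the frames those calls pushed; your invariant instead keeps the pushed (callee) frames themselves. Concretely, take $c_1 \in \Sc_j$, then $a \in \Sl_j$, then $c_2 \in \Sc_i$ with $i \neq j$, then $a' \in \Sl_j$, then $c_3 \in \Sc_j$. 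When $c_3$ executes, $\mathcal{M}$ saves the current top of $S_j$, which reflects both internals $a$ and $a'$; but the raw frame at the position pushed by $c_1$ reflects only $a$, because once $c_2$ pushed a frame above it, the later internal $a'$ wrote into $c_2$'s frame, not $c_1$'s. So after $c_3$ the middle frame of $S_j$ is the $a'(a(\cdot))$-updated value in $\mathcal{M}$, while your invariant records the stale $a(\cdot)$-updated value: the invariant is not preserved by the call step. Symptomatically, your return case silently uses the \emph{correct} fact --- that the predecessor of $\stktop(S_k)$ equals $\gamma_<|_{V_k}$ where $\gamma_<$ is the frame directly below the top of $T$ --- which contradicts the invariant you stated, under which that predecessor would be the raw value of the frame pushed by the \emph{previous} open $k$-call (or the base frame).

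The repair is local: redefine the correspondence so that the saved frames of $S_j$ are the frames of $T$ directly below the frames pushed by the currently open $j$-calls, restricted to $V_j$, with $\stktop(S_j) = \stktop(T)|_{V_j}$ on top. Those caller frames genuinely are frozen while the corresponding call is open (they can be rewritten only while on top, or by the matching return popping them), and with this invariant your internal, call, and return cases all go through as you sketched. It is worth noting how the paper sidesteps this bookkeeping entirely: Lemma~\ref{lem:sem} asserts only that agreeing \emph{tops} produce agreeing sets of output tops, quantified over \emph{arbitrary} initial stacks $\vec{S}$, $S$ whose tops agree. Because a run without pending returns never inspects anything below the initial top frames, this weaker statement admits a structural induction on the well-nested word (for $\rho = c\rho' r\rho''$ with $c,r$ matched --- hence, by well-nestedness, in the same component --- the return on both sides reads the agreeing pre-call tops supplied by the induction hypothesis), so no full-stack correspondence invariant is ever needed.
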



\section{Experimental Results} \label{sec:experiments}
We have implemented our reduction-based scheme as a tool, called {\tool}, that verifies hypersafety properties of sequential recursive programs. 
\subsection{Implementation} \label{sec:implementation}

{\tool} verifies an input program written in a simple procedural programming language against a given pair of pre/postconditions. The programming language supports the integer, boolean, and array data types. The user specifies the property and one of the built-in (parametric) canonical reductions, or a combination of canonical reductions. For instance, to verify the \ref{p3} property of $\mathtt{div}$, one would use the command\footnote{\ref{p3} is a shorthand for the pre/postcondition that expresses the property.}:
\begin{equation}
 \text{\footnotesize
 \tool\ {\tt div -property} \ref{p3} {\tt -reduction "(1,1)-lockstep(P3, nested\_concatenation(P1,P2))"}} \tag{C}
 \label{eq:cmd}
\end{equation}
{\tool} also accepts an arbitrary uniform visibly pushdown order, but none was required to solve any of the benchmarks used for the evaluation.

{\tool} implements the following algorithms:
\begin{description}
  \item[\sc Aut:] the optimized construction of Section \ref{sec:oc} as a nested word automaton (NWA) \cite{AlurM2009}, a model that is equivalent to a VPA and supported by existing libraries \cite{HeizmannCDE+2013};
  \item[\sc VPG:] the same optimized construction but additionally converting the NWA into a VPG;
  \item[\sc Direct:] the direct construction of Section \ref{sec:eccr}.
\end{description}

In each case, the resulting product program (in NWA or VPG format) and the corresponding hypersafety property are encoded by {\tool} as system of constrained Horn clauses (CHCs) in the SMT-LIB format \cite{BarST-SMT-10}, which is satisfiable if and only if the pre/postcondition holds. We did not find a classic recursive verification tool that does well and is not already using a similar CHC encoding in the backend. In particular, \cite{HeizmannHP2009} cannot handle these instances well. We use Z3/Spacer~\cite{Z3,KomuravelliGC2016}, Eldarica \cite{HojjatR2018}, and Golem \cite{BlichaBS2023} as the backend CHC solvers.

\paragraph{\bf CHC Encodings.}

We interpret the product program under the semantics $\mathcal{S}$ in Section \ref{sec:semantics}. An NWA can be directly encoded as CHCs in linear time (see \refapp{app:enc}): one predicate for each state, and one constraint for each transition that captures the \emph{inductivity} \cite{HeizmannHP2010} of the predicates (assertions). In \cite{HeizmannHP2010}, the assertions annotate a nested trace, whereas in our setting, the assertions annotate the NWA, and the naive encoding would be unsound with respect to \textsf{unsat} since it does not distinguish context-sensitive and insensitive paths. Our solution is to introduce a ghost variable that represents the ``return address'', so only assertions along context-sensitive paths are required to be inductive.


In the conversion of the NWA to a VPG, we use the standard quadratic time \cite{AlurM2009} algorithm and then eliminate the useless symbols and productions \cite{HopcroftMU2007}. A VPG is encoded as CHCs in linear time (see \refapp{app:enc}): one predicate for each nonterminal that \emph{overapproximates} the semantics of the language generating by this nonterminal. The semantics $\mathcal{S}$ involves a stack, but for a well-matched language, if we are only concerned with partial correctness, then we may use a \emph{stack-free} semantics defined inductively over the nesting structure, relating the variables before and after the execution of a trace. VPGs have this nesting structure built-in, and each production is translated into one constraint. This encoding is similar to ``transition summaries'' from \cite{BjornerGMR2015}.

\subsection{Benchmarks}
We built a benchmark set by including the safe relational benchmarks from \cite{MordvinovF2019} (28 in total, which are in turn adapted from \cite{DeAngelisFPP2016,MordvinovF2017}), taking the iterative hypersafety (non-concurrent) benchmarks from \cite{FarzanV2019} (17 in total) and rewriting them as recursive code, and adding 58 new ones. In some cases, the new additional instances are in the form of additional hypersafety properties of benchmarks from \cite{MordvinovF2019,FarzanV2019}. Our newly added benchmarks consist of two groups, one for arithmetic functions (33), and one for functions on arrays (25). In total, we have 62 arithmetic benchmarks, and 41 array benchmarks. (See \refapp{app:benchmarks} for details.)

The arithmetic benchmarks are hypersafety properties of primitives, in the spirit of the $\mathtt{div}$ example from Section \ref{sec:intro}. They include programs with linear or non-linear recursion, and then several hypersafety properties per program. Recall that the \ref{p1} property of $\mathtt{div}$ is relatively straightforward, but the \ref{p3} property of it is rather challenging to solve. So, the hardness of an instance can depend on the benchmark or the property or both. 
The array benchmarks are hypersafety properties of, or relations between, aggregate functions (min, max, and sum) and comparators (pointwise comparison and lexicographic order) on arrays; we have also included benchmarks that manipulate trees that are encoded as heaps in arrays; this is due to restrictions in the CHC solvers' support for theories and has nothing to do with our methodology. 

Despite the fact that hyperproperties were originally motivated by their applications in security, in terms of diversity and quantity, there are many more instances of {\em primitives} (like division or lexicographical orders on vectors) implementing known operations where $k$-safety properties are essential to encode and check the basic properties of these primitives. The arities of the hypersafety instances for these can vary. All but four benchmarks are 2- or 3-safety properties, and the rest has $k=7$ at maximum. These properties are specified in the style of \eqref{eq:cmd}.

\subsection{Experiments}
Our experiments were designed to answer the following research questions:
\begin{itemize} 
\item[RQ1:] Is our reduction approach essential and effective in proving hypersafety properties of recursive programs?
\item[RQ2:] Are parameterized lockstep and nested concatenation, as building blocks,  expressive enough to support the construction of effective product programs for a broad range of examples?
\item[RQ3:] Does the representation of reduction (automata v.s. two grammar styles) have an impact on the effectiveness of the process?
\end{itemize}

With respect to concurrency, this paper mainly proposes the theoretical insight that our reduction methodology acts as a complete sequentialization scheme for certain classes of recursive programs. We do not view {\tool} as a tool for verification or bug finding of recursive concurrent programs.

\paragraph{\bf Results.}
As a {\bf baseline} comparison, we use three standard scenarios, with the \emph{VPG encoding} in \refapp{app:enc}: (1) sequential composition as the product program, (2) a direct encoding of the problem without reductions and without making $k$ disjoint copies for the $k$-safety property, and (3) the variation of (2) in which $k$ disjoint copies are made. We refer to {\sc Baseline} as the best of these three algorithms, including the best performing CHC solver. We tried {\sc RelRecMc} (the tool from \cite{MordvinovF2019}) on all baseline encodings, but it times out or throws a \texttt{Segmentation fault} on every instance. We also tested the $\mathtt{div}$ examples from Section \ref{sec:intro} with its own style of encoding, and \ref{p2} and \ref{p3} cannot be solved.

On average, the production of the CHC encoding takes very little time (around 1 second) for all instances. As expected, the direct construction method is the fastest, followed by the generation of the automaton and the VPG based on the automaton. Our reported runtimes, therefore, are only for the time to verify the product program, and excludes this negligible construction time. We use a timeout of 600 seconds for the combination of both steps.

\begin{figure}[t]
\begin{center}
\includegraphics[scale=0.85]{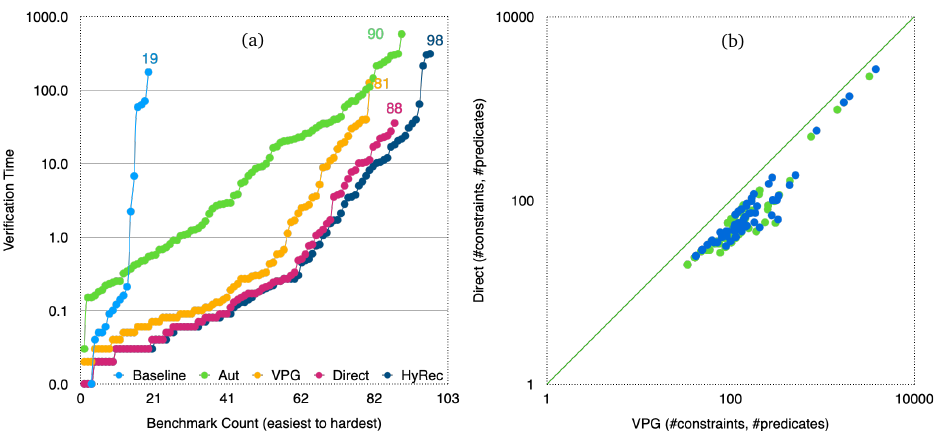}
\caption{Experimental Results: logscale plots. (a): quantile plot of performance comparison {\tool}'s algorithms against the baseline. (b): scatter plot of comparing CHC encoding sizes for the two grammar constructions.}
\label{fig:plot}
\end{center}\vspace{-15pt}
\end{figure}

Figure \ref{fig:plot}(a) illustrates the {\em log-scale} quantile plot that compares our three algorithms ({\sc Aut}, {\sc VPG}, and {\sc Direct}) against the {\sc Baseline}. {\tool} is the best of the three algorithms. The arithmetic benchmarks solved by {\sc Baseline} are exactly those with a function summary in Linear Integer Arithmetic. (RQ1) {\sc Direct}, {\sc VPG}, and {\sc Aut} contribute 73, 17, and 13 instances to {\tool} respectively (accounting for the ties). The number for each curve lists the total number of benchmarks solved by that algorithm. It is noteworthy that CHC encodings based on grammars are on average much more efficiently solvable than the one based on automata; yet, {\sc Aut} is better at solving harder instances: out of the 13 winners contributed by {\sc Aut}, 9 of them are only solvable by {\sc Aut}, whereas all 17 benchmarks where {\sc VPG} wins can be solved by others albeit more slowly. (RQ3)

Our direct construction does offer a significant improvement. This is while the resulting product programs are identical when it comes to the set of their syntactic runs. The difference may stem from the compactness of the resulting grammar. Figure \ref{fig:plot}(b) compares the sizes of the resulting CHC encodings:  blue dots compare the number constraints and green dots that of predicates. (RQ3)

{\tool} fails to solve 5 of the 103 benchmarks. One instance is a shortcoming of the CHC solvers and is not related to the product program aspect. It is the simplest of a group of 4 benchmarks from \cite{FarzanV2019} that increasingly unroll loops. {\tool} successfully proves unrolling the loop by 3, 4, and 5 times, but fails at the 2 unrolling. The rest of the failures are worthy of a deeper discussion, and are explored in the next section. (RQ2)

Separately, we also evaluate the scalability of our approach for higher arity hypersafety instances. With increasing $k$ until the first timeout or error, we run the tool with the $\mathtt{div}$ program (Fig.~\ref{fig:iexample}), $(1,\ldots,1)$-lockstep reduction, and the variation of \eqref{p1} as a $k$-property: \[ n_1 \le \cdots \le n_k \Rightarrow \mathtt{div}(n_k, d) \le \cdots \le \mathtt{div}(n_k, d). \]
As the arity of the hypersafety instance goes higher, the time for generating the CHC encoding is increasing. The three algorithms {\sc Direct}, {\sc Aut}, and {\sc VPG} scale to $k = 12$, $7$, and $6$ respectively; for the largest instance ({\sc Direct}, $k = 12$, and $10^5$ predicates in the CHC encoding), CHC generation takes up $33\%$ of the total time. In practice, however, these arities for hypersafety are very rare. As expected, the size of the CHC encodings (measured by the number of predicates or constraints) grows multiplicatively for all algorithms.

\subsection{Discussion and Limitations}\label{sec:discussion} 
Here, we summarize some key observations from the experiments and discuss some unique features and limitations of our proposed solution. Most importantly, we discuss the reasons behind why 4 benchmarks from the set cannot be solved by {\tool}.

\paragraph{\bfseries Representation matters.} It is somewhat surprising that although the same exact set of alignments are represented by the two different grammar constructions and the automata construction, the results significantly differ in some cases. The difference between the grammar constructions can be attributed to the size, as the scattered plot in Figure \ref{fig:plot}(b) highlights. This advantage, of having 3 chances at solving an instance, is a strict consequence of our {\em operational} framework. 

\paragraph{\bfseries Alignments do not always work.} It is undeniable that reductions are essential to the success of these hypersafety verification instances. It is also important to note that not all hypersafety verification instances can be solved using these reductions. For instance, we use a primitive $\mathtt{mult}(a,b)$ in our benchmark set that multiplies $a$ by $b$ through repetitive addition. Consider the following simple theorem about the commutativity of multiplication: $\mathtt{mult}(n,m) = \mathtt{mult}(m,n)$. This is a 2-safety property, where none of the lex reductions in this paper can simplify the proof down to a decidable theory (linear integer arithmetic). The same is true for the iterative version of this code as well. Two benchmarks from the set cannot be solved for this reason.  

\paragraph{\bfseries Explicit Product Programs v.s. Integrated Verification Technique}
There is a benchmark from \cite{MordvinovF2019} related to Fibonacci that {\tool} cannot solve, which stems from the fact that the conceptual product program for this must reorder the two recursive calls to succeed. In code (and standard semantics), this reordering is forbidden, but at the level of CHC encoding, once one is looking at a symbolic encoding, it can be done semantically and soundly. One can weaken program semantics to permit such sound reorderings. However, when the manipulations are done at the level of the symbolic encoding, it comes for free. As discussed in Section \ref{sec:intro}, having an explicit product program can have other advantages: one can get testing, symbolic execution, program analysis, and other off-the-shelf applications for free. 

\paragraph{\bfseries Syntactic v.s. Semantic alignments.}
Our reduction are syntactic, and {\tool} fails at solving the {\em sum of squares} benchmark \cite{ShemerGSV2019} (modeled by an array in our benchmark set), which truly requires a semantic alignment in the sense that one needs access to the program data to determine the right synchronization; the tool from \cite{ItzhakySV2024}, for iterative programs, can handle the iterative version of this benchmark automatically. We believe semantic alignments can be incorporated in our framework, but this requires a nontrivial shift in representation, for instance to {\em symbolic automata} \cite{DAntoniV2017,DAntoniA2014}, and this is left for future work. It is noteworthy, however, that in some instances in the literature, {\em semantic} alignments are used to account for limitations in {\em syntactic} alignments that are superficial. For instance, our nested concatenation is an instance of a syntactic benchmark that may seem superficially like a semantic benchmark: one needs to wait for one recursive call to be finished before starting the second one (which is meant to be nested inside it). However, we handle this in a purely syntactic way, and lex reductions in general include reductions in which the two copies can go out of synch for arbitrarily long durations, without the need to rely on program data to make this happen.

\paragraph{\bfseries Programmer's intuition goes a long way.} We found it very easy and straightforward to guess and specify the right choice of reduction for each benchmark. Considering that 45\% of our benchmarks were not designed by us, but acquired from other sources (including those that search for reductions), this is remarkable. 

\section{Related Work}\label{sec:related-work}


The most closely related work to ours is a line of work that identifies the gap in {\em relational verification} with the aid of CHC solving \cite{BjornerGMR2015,DeAngelisFPP2016,AngelisFPP2018a,FaisalAlAmeenKS2024,MordvinovF2017,MordvinovF2019} where the models are expressive enough to encode arbitrary recursion schemes like ours. While we operate at the recursive program level, these techniques directly target the CHC encodings \cite{BjornerGMR2015} of these programs and employ transformations such as fold/unfold \cite{DeAngelisFPP2016,AngelisFPP2018a}. In \cite{FaisalAlAmeenKS2024}, asynchronous fold/unfold transformation is proposed for fixpoint logic; the transformed logic formula is encoded and passed to a CHC solver.
Our product programs go beyond a bounded number of unfoldings. Recall the \ref{p3} property of $\mathtt{div}$. In the spirit of fold/unfold, one may think that unfolding once each of the $\mathtt{div}(n,d)$ and $\mathtt{div}(n',d)$ instances and twice the $\mathtt{div}(n'',d)$ instance may yield a product program, but this would not work due to a problem with the base case. Our product program conceptually unfolds $\mathtt{div}(n+n', d)$ for $n'$ times, which is essential in handling cases like this.

{\sc CHCproduct} \cite{MordvinovF2017} is a family of transformations that resemble synchronous lockstep. The net effect is not theoretically equivalent to our Figure~\ref{fig:ls} with $\vec{s} = (1, \ldots, 1)$, since {\sc CHCproduct} operates on the symbolic encoding of the programs, which allows the function calls to be ``rearranged'' and ``duplicated''. This flexibility also means the {\sc CHCproduct} is not uniquely defined, and the extension of \emph{synchronous} {\sc CHCproduct}~\cite{MordvinovF2017} employs a heuristic to select a unique  product. In \cite{MordvinovF2019}, an alternative technique based on Property Directed Reachability is proposed, which maintains over- and under-approximations of groups of predicates and automatically infers synchronous lockstep strategies, also on the symbolic encoding of the programs.

Putting CHC aside, in \cite{EilersMH2018} for programs with procedures, \emph{modular product programs} are constructed via a source-to-source transformation. Such product programs are intuitively a $(1,\ldots,1)$-lockstep reduction of copies of the same program. In this special case, modular product programs introduce Boolean \emph{activation variables} to encode control-flow symbolically, thereby avoiding taking the product of control locations.

\paragraph{\bf Multi-Stack Models and Sequentialization}

The languages of runs of a multi-stack pushdown automata are equivalent to those of Turing machines, even with two stacks. 
There are a number of solutions in the literature that deal with this problem by  under-approximating these multi-stack languages. The most well-known ones are based on bounding a measure in the set of runs; these include  bounding the number of context switches \cite{TorreMP2007,QadeerR2005}, phase bounding \cite{TorreMP2007,LaTorreMP2008}, and scope bounding \cite{LaTorreN2011,LaTorreNP2014a,LaTorreNP2020}. 
Our reductions are not an under-approximation, but rather a faithful representation. The languages are also theoretically incomparable against all these models. For example, in a bounded context switching scheme, one is permitted to break the respective order of calls and returns a bounded number of times, whereas it cannot be done in our model. In contrast, we include runs that are not bounded in up to any of the measures: number of context switches, phases, or scopes. Our focus in this paper is not on developing new classes of languages or giving new decidability results.  There is a loose connection between reductions and notion of {\em sequentialization} \cite{LalR2008} in the literature based on the same ideas of bounding the number of context switches, phases, or scopes. The well-nested shuffle can be considered as sequentialization that is {\em complete} under the conditions given in Section \ref{sec:concurrency}. It would be interesting to explore in future how well {\em incomplete} reductions of concurrent programs perform as means of under-approximations for bug finding.

\paragraph{\bf Hyperproperty Verification}
There is a large body of work studying verification of hyperproperties, where earlier work mostly focused on verification of hypersafety properties \cite{AntonopoulosKLN+2023,BartheCK2011,BartheDR2004,EilersMH2018,FarzanV2019,GodlinS2013,ShemerGSV2019,SousaD2016,TerauchiA2005,YangVSG+2018,ItzhakySV2024,UnnoTK2021}. Since our focus is on verification of infinite-state hypersafety properties, we forgo  surveying the verification of hyperliveness properties for finite-state programs, unless there is an overlap in techniques that makes the comparison worthwhile. 
There is a cluster of work \cite{BozzelliPS2021,GutsfeldMO2024,AgrawalB2016} on verification of hyperproperties \cite{ClarksonFKM+2014} based on \mbox{HyperLTL} (a temporal logic for hyperproperties) \cite{ClarksonFKM+2014}.
The work \cite{GutsfeldMO2024} on the verification of finite-state recursive programs, in the context of asynchronous HyperLTL,  is the only one from the cluster that addresses recursion in the hyperproperty space. The focus of \cite{GutsfeldMO2024} is on the decidability results of the relaxations of the logic for finite state recursive programs. We are in the scope of infinite-state programs, where verification is undecidable even in the absence of recursion. The state space of alignments (i.e., the degree of asynchrony) in \cite{GutsfeldMO2024} is not as general as an arbitrary reduction we define in Section~ \ref{sec:reduction}.


\paragraph{\bf Alignments for Iterative Programs}

Putting recursion aside, the closest to our work in terms of aim and scope are those that focus on verification of infinite-state programs and take advantages of the state space of alignments to solve the verification task. Specifically, in \cite{ShemerGSV2019,SousaD2016,FarzanV2019}, automated hypersafety verification techniques for sequential and concurrent programs \cite{FarzanV2019} are proposed. In \cite{UnnoTK2021}, a generalization of CHCs with a specialized solving algorithm is introduced, which is able to encode hyperproperties beyond hypersafety. In \cite{BeutnerF2022}, an automated game-based verification technique is used for a subset of hyperproperties that includes hypersafety properties but go beyond to include some classes of hyperliveness. Finally, in \cite{ItzhakySV2024}, a solution based on CHC encoding of a similar class of hyperproperties is proposed, which searches for a semantic alignment and a proof simultaneously. The tool from \cite{ItzhakySV2024} can handle some infinite-state programs, but for others, it requires the right {\em abstraction}, given by the user, that would translate the program into a finite-state one; consequently, the search space for alignments also become finite. \ruotong{TODO: I should check whether they did use a full predicate abstraction. If no, I should soften this.} As an instance, none of the three properties of the iterative version of the $\mathtt{div}$ example (from Section \ref{sec:intro}) can be verified by the tool over the concrete state. In some sense, recursive programs inherit all the complications of the iterative framework and add an extra layer of complexity. We discuss this extra complexity in Section \ref{sec:intro} and also note that the existence of a stack means even a program with finite data domain may have infinite state space.

In the paper, we mostly target the additional layer of complexity, and do not offer a new insight for iterative programs, with one exception: \emph{canonical reductions as building blocks} for the user to communicate reduction ideas to a verification tool. More precisely, the analogous set of lex reductions for iterative programs were already introduced in \cite{FarzanKP2022}, but the analog of canonical reductions (as a simple and effective way to specify a reduction) does not exist in the iterative program literature, to the best of our knowledge. It is noteworthy that the iterative counterparts of the canonical reductions are far easier to define formally and implement. Therefore, they may offer a practical solution to program alignments that sits between one fixed reduction (which has limited expressivity) and automated alignment search (which has limited scalability \cite{FarzanV2019} or requires user-provided abstractions in practice \cite{ItzhakySV2024}).





\section*{Data-Availability Statement}
Our tool {\tool} and the benchmarks used in the evaluation (Section \ref{sec:experiments}) can be found at \cite{ChengF2025a}. A description of all benchmarks is provided in \refapp{app:benchmarks}.

\bibliographystyle{ACM-Reference-Format}
\bibliography{biblio,anon}

\iftoggle{ext}{%
\appendix
\section{Background} \label{app:background}

\subsection{Recursive Programs as VPLs} \label{app:recursive-programs}

Following \cite{HeizmannHP2010}, we model a (single-stack) recursive program as a VPL (a well-matched one, for our purpose), where the alphabet $\tS$ is the program statements. Such VPLs can be obtained from a recursive control-flow graph by viewing the graph as a VPA. Not all words in the language is a semantically valid run of the program: a sequence of statements (also called a \emph{trace}) that repsects the control-flow structure and correctly matches the calls and returns may be \emph{infeasible} when semantics is taken into account. Below, we define a concrete alphabet $\tS$ and its semantics. We use it in the examples in Appendix~\ref{app:examples} and in our implementation of {\tool}. The theory developed in Section~\ref{sec:reduction}--\ref{sec:semantics} does not rely on this concrete alphabet.

For any set $X$, let $\stack(X)$ denote the set of stacks whose frames are elements of $X$, and let $\powerset(X)$ denote the power set of $X$. Let $L$ be the set of local variables, $F$ the set of function names, $L_{\mathsf{p}} \subset L$ the set of function parameters, $\param: F \to \powerset(L_{\mathsf{p}})$ that maps each function name to a set of parameters, and $\fout: F \to \powerset(L)$ the maps each function name to a set of output variables. Define $\bar{L}_{\mathsf{p}} := \{ \bar{p} \mid p \in L_{\mathsf{p}} \}$, which represents dummy variables that record the initial value of the parameters when a function is called. Let $D$ be the domain of the variables and $T  = D^{L \cup \bar{L}_{\mathsf{p}}}$ the set of all valuations.

We assume that $\tS$ consists of \emph{assign} statements \letter{$x$ := $t$}, \emph{assume} statements \letter{assume $\phi$}, \emph{call} statements $\letter{call $f$}$, and \emph{return} statements \letter{ret $f$}, where $x \in L$, $t$ is an expression over $L$, $\phi$ is an assertion over $L$, and $f \in F$. The first two types of statements are internal. For a program $P$ over this alphabet $\tS$, we always assume that for any word in $P$, if a call \letter{call $f$} and a return \letter{ret $f$} match, then $f = g$. Define the semantics function $\mathcal{S}: \tS \to \powerset(\stack(T) \times \stack(T))$ as follows:
\begin{align*}
	\sem{S}{x := t} & := \{ (S.\nu,\, S.\nu') : \nu' = \nu \oplus \{ x \mapsto \nu(t) \} \}, \\
	\sem{S}{\phi} & := \{ (S.\nu,\, S.\nu) : \nu \models \phi \}, \\ 
	\sem{S}{\call\; f} & := \{ (S.\nu,\, S.\nu.\nu') : \forall p\in \param(f).\, \nu'(p) = \nu'(\bar{p}) = \nu(p) \}, \\
	\sem{S}{\ret\; f} & := \{ (S.\nu_<.\nu,\, S.\nu') : \nu' = \nu_< \oplus \{ y \mapsto \nu(y) : y \in \fout(f) \} \}.
\end{align*}
Extend $\mathcal{S}$ from letters to words via relation composition. For a program $P$ over $\tS$, a run $\rho \in P$ is said to be infeasible if $\sem{S}{\rho} = \varnothing$. When talking about the runs of a program, we by default include both the feasible ones and infeasible ones.

\section{Examples of Product Programs} \label{app:examples}

The formal semantics of the alphabet that we use is presented in Appendix \ref{app:recursive-programs}.

\subsection{$\mathtt{div}$}

The grammar for $\mathtt{div}$ is as follows:

\begin{align*}
    D &\rightarrow \letter{assume $n < d$}\; \letter{$q$ := $0$} \\
    D &\rightarrow \letter{assume $n \ge d$}\; \letter{call $\mathtt{div}$}\; D\; \letter{ret $\mathtt{div}$}\; \letter{$q$ := $q + 1$}
\end{align*}

The grammar for synchronously lockstepping two copies of $\letter{call $\mathtt{div}$}\; D\; \letter{ret $\mathtt{div}$}$ is as follows, where \letter{$\sigma$ : $k$} means performing $\sigma$ on the $k$-th stack and $S$ is the start symbol:

\begin{align*}
    S {}\rightarrow{}& \letter{call $\mathtt{div}$ : 1}\; A\; \letter{ret $\mathtt{div}$ : 1} \\
    A {}\rightarrow{}& \letter{assume $n < d$ : 1}\; D_2 \\
    A {}\rightarrow{}& \letter{assume $n \ge d$ : 1}\; \letter{call $\mathtt{div}$ : 2}\; B\; \letter{ret $\mathtt{div}$ : 2}\; \letter{$q$ := $q + 1$ : 1} \\
    B {}\rightarrow{}& \letter{assume $n < d$ : 2}\; D_1 \\
    B {}\rightarrow{}& \letter{assume $n \ge d$ : 2}\; \letter{call $\mathtt{div}$ : 1}\; A\; \letter{ret $\mathtt{div}$ : 1}\; \letter{$q$ := $q + 1$ : 2} \\
    D_1 {}\rightarrow{}& \letter{assume $n < d$ : 1}\; \letter{$q$ := $0$ : 1} \\
    D_1 {}\rightarrow{}& \letter{assume $n \ge d$ : 1}\; \letter{call $\mathtt{div}$ : 1}\; D_1\; \letter{ret $\mathtt{div}$ : 1}\; \letter{$q$ := $q + 1$ : 1} \\
    D_2 {}\rightarrow{}& \letter{assume $n < d$ : 2}\; \letter{$q$ := $0$ : 2} \\
    D_2 {}\rightarrow{}& \letter{assume $n \ge d$ : 2}\; \letter{call $\mathtt{div}$ : 2}\; D_2\; \letter{ret $\mathtt{div}$ : 2}\; \letter{$q$ := $q + 1$ : 2}
\end{align*}

This does not exactly match the lockstep that we formally define when it comes to the ordering of internals but has the same spirits.

\section{Proofs} \label{app:proofs}
\begin{proof}[Proof of Proposition \ref{prop:uncountable}]
    Let $\Sigma = \{ a, b\}$, $P_1 = aa^*$, and $P_2 = bb^*$. Each minimal reduction corresponds to a choice function on $(P_1 \shuffle P_2) / {\equiv_{\Ith}}$. Note that $(P_1 \shuffle P_2) / {\equiv_{\Ith}}$ is an infinite set, and each element in the quotient has size at least $2$.
\end{proof}

\begin{proof}[Proof of Proposition \ref{prop:total-clo}]
    Let $\preceq$ be a CLO induced by a contextual order $\prec$.
    
    To show transitivity, suppose $\alpha \preceq \beta$ and $\beta \preceq \gamma$. Suppose $\alpha = xay$, $\beta = xbz = x'b'z'$, and $\gamma = x'ct$, where $a \prec_x b$ and $b' \prec_{x'} c$.
    \begin{itemize}
        \item If $x = x'$, then $\gamma = xct$ and $a \prec_x c$, so $\alpha \preceq \gamma$.
        \item If $x$ is a proper prefix of $x'$, then $xb$ is a prefix of $\gamma$ and $a \prec_x b$, so $\alpha \preceq \gamma$.
        \item If $x'$ is a proper prefix of $x$, then $x'b'$ is a prefix of $\alpha$ and $b' \prec_{x'} c'$, so $\alpha \preceq \gamma$.
    \end{itemize}
    The three other cases can be verified in the same way, and transitivity holds. Similarly, we can verify reflxivity and antisymmetry.

    For the sufficient condition for totality, for any distinct two distinct words, consider the first position where they differ.
\end{proof}

\begin{proof}[Proof of Theorem \ref{thm:VPLs-closed-under-wnshuffle}]
    Let $A_{i}=(Q_{i},Q_{i}^{\mathsf{in}},\Gamma_{i},\delta_{i},Q_{i}^{\mathsf{F}})$ be a VPA over the visibly pushdown alphabet $\tS_{i}$. Define the VPA $A=(Q,Q^{\mathsf{in}},\Gamma,\delta,Q^{\mathsf{F}})$
    over $\tS$ as follows:

    \begin{itemize}
        \item $Q=Q_1\times \cdots \times Q_n$.
        \item $Q^{\mathsf{in}}=Q_1^{\mathsf{in}}\times \cdots \times Q_n^{\mathsf{in}}$.
        \item $\Gamma=(\Gamma_1\setminus\{\bot\})\uplus \cdots \uplus (\Gamma_2\setminus\{\bot\})\uplus\{\bot\}$.
        \item $\delta^{\mathsf{int}} = \{(\vec{q}, \sigma, \vec{q}[i \mapsto q_i']) : \vec{q} \in Q,\, i \in \{1, \ldots, n\},\, (q_i, \sigma, q_i') \in \delta_i \}$.
        \item $\delta^{\mathsf{call}} = \{(\vec{q}, \sigma, \vec{q}[i \mapsto q_i'], \gamma) : \vec{q} \in Q,\, i \in \{1, \ldots, n\},\, (q_i, \sigma, q_i', \gamma) \in \delta_i \}$.
        \item $\delta^{\mathsf{ret}} = \{(\vec{q}, \sigma, \gamma, \vec{q}[i \mapsto q_i']) : \vec{q} \in Q,\, i \in \{1, \ldots, n\},\, (q_i, \sigma, \gamma, q_i') \in \delta_i \}$.
        \item $Q^{\mathsf{F}}=Q_1^{\mathsf{F}} \times \cdots \times Q_n^{\mathsf{F}}$.
    \end{itemize}
    It can be verified that $\Lang(A)=\Lang(A_1)\wnshuffle\cdots\wnshuffle\Lang(A_n)$.
\end{proof}

\begin{proof}[Proof of Proposition \ref{prop:vp-sigma-star}]
    We prove the statement for a generic $I$ using the sleep set \cite{Godefroid1996} construction. Suppose $\prec$ is represented by a complete and deterministic VPA $(Q_S, Q_S^{\mathsf{in}}, \Gamma_S, \delta_S, Q_S)$. Define the VPA $A = (Q, Q^{\mathsf{in}}, \Gamma, \delta, Q^{\mathsf{F}})$ as follows:
    \begin{itemize}
        \item $Q = 2^{\tS} \times Q_S$.
        \item $Q^{\mathsf{in}} = \{\varnothing\} \times Q_S^{\mathsf{in}}$.
        \item $\Gamma = \Gamma_S$.
        \item $\delta$ is the smallest subset of $(Q\times\Sc\times Q\times(\Gamma\setminus\{\bot\}))\cup(Q\times\Sr\times\Gamma\times Q)\cup(Q\times\Sl\times Q)$ such that for all set $r \in 2^{\tS}$, states $s, s' \in Q_S$, letter $\sigma \in \tS \setminus r$, and stack symbol $\gamma \in \Gamma$, writing $r' = \{ \sigma' \in \tS \mid \sigma' \in I(\sigma) \land (((\sigma', \sigma) \in \mathsf{ord}(s)) \lor (\sigma' \in r)) \}$,
        \begin{itemize}
            \item if $(s, \sigma, s', \gamma) \in \delta_S$, then $((r, s), \sigma, (r', s'), \gamma) \in \delta$;
            \item if $(s, \sigma, \gamma, s') \in \delta_S$, then $((r, s), \sigma, \gamma, (r', s')) \in \delta$; and
            \item if $(s, \sigma, s') \in \delta_S$, then $((r, s), \sigma, (r', s')) \in \delta$.
        \end{itemize}
        \item $Q^{\mathsf{F}} = Q$.
    \end{itemize}
    It can be shown that $\Lang(A) = \red_\prec(\tS^*)$.
\end{proof}

\begin{proof}[Proof of Proposition \ref{prop:nonvpllex} and Proposition \ref{prop:dummy}]
    Let $\tS_1,\tS_2$ be the visibly pushdown alphabet of parentheses and brackets respectively. Define $\prec:\Sigma^{*}\to\Lin(\Sigma)$ that maps strings of an even length to $\texttt{(}<\texttt{)}<\texttt{[}<\texttt{]}$, and strings of an odd length to $\texttt{[}<\texttt{]}<\texttt{(}<\texttt{)}$. Let $L_1$ be the language of balanced parentheses and $L_2$ be the language of balanced brackets.

    Suppose $\red_\prec(L_1\shuffle L_2)$ is context-free, and let $p$ be a pumping length. Pumping the string 
    \[
        \underbrace{\texttt{([}\cdots\texttt{([}}_{p\text{ times}}\underbrace{\texttt{(]}\cdots\texttt{(]}}_{p\text{ times}}\texttt{)}^{2p}
    \]
    leads to a contradiction, so $\red_\prec(L_1\shuffle L_2)$ cannot be context-free.
\end{proof}

\begin{proof}[Proof of Theorem \ref{thm:vpwnred}]
    If $L$ is closed under $I$ then  
    \[
    \red_{I,\prec}(L)=\red_{I,\prec}(\tS^{*})\cap L.
    \]
    
    Suppose $\prec$ is a VP contextual order and $\red_{I,\prec}(P_1 \shuffle P_2) \subset P_1 \wnshuffle P_2 \subset P_1 \shuffle P_2$. It follows that
    \[
        \red_{I,\prec}(P_1 \shuffle P_2) = \red_{I,\prec}(P_1 \shuffle P_2) \cap (P_1 \wnshuffle P_2) = \red_{I,\prec}(\tS^*) \cap (P_1 \wnshuffle P_2).
    \]
    By Proposition \ref{prop:vp-sigma-star}, Theorem \ref{thm:VPLs-closed-under-wnshuffle}, and the closedness of VPLs under intersection, it follows that $\red_{I,\prec}(P_1 \shuffle P_2)$ is visibly pushdown.
\end{proof}

\begin{proof}[Proof of Proposition \ref{prop:coherency-implies-well-nestedness}]
    Write $\tS_i = \left\langle \Sc_i, \Sr_i, \Sl_i \right\rangle$ ($i = 1, 2$). Suppose for contradiction there is some word in $\red_\prec(L_1 \shuffle L_2) \setminus (L_1 \shuffle L_2)$. Use $\texttt{(}$, $\texttt{)}$, $\texttt{[}$, and $\texttt{]}$ to denote a letter in $\Sc_1$, $\Sr_1$, $\Sc_2$, and $\Sr_2$ respectively. Consider the shortest prefix where the well-nestedness is broken, and WLOG suppose the last letter in the prefix is a $\mathtt{]}$. Then $\mathtt{]}$ is wrongly matched with a $\mathtt{(}$. Then the word has the form $\alpha \mathtt{(} \beta \mathtt{]} \gamma \mathtt{)} \delta$. Note that the $\mathtt{(}$ is the last pending call in the prefix $\alpha \mathtt{(} \beta$. The first letter from $\tS_1$ in $\gamma \mathtt{)}$ can be swapped to the position of the $\mathtt{]}$ to give a smaller interleaving, which is a contradiction.
\end{proof}

\begin{proof}[Proof of Theorem \ref{thm:vpred-vp-coherent}]
    By Proposition~\ref{prop:coherency-implies-well-nestedness} and Theorem~\ref{thm:vpwnred}.
\end{proof}

\begin{proof}[Proof of Lemma \ref{lem:repair}]
    The coherency of $\prec'$ is evident. The VPA for $\prec'$ can be constructed from the VPA for $\prec$ and an additional state to track (1) whether there is a pending call and (2) the component which the last pending call is from.

    Any equivalence class $E$ of $P_{\wnshuffle} = P_1 \wnshuffle \cdots \wnshuffle P_n$ is the restriction of a unique equivalence class $E'$ of $P_{\shuffle} = P_1 \shuffle \cdots \shuffle P_n$ to $P_{\wnshuffle}$. The definition of $\prec'$ ensures that the $\min_{\preceq}(E) = \min_{\preceq'}(E')$. Then, since any equivalence class of $P_{\shuffle}$ contains a unique equivalence class of $P_{\wnshuffle}$, it follows that $\red_\prec(P_{\wnshuffle}) = \red_{\prec'}(P_{\shuffle})$.
\end{proof}

\begin{proof}[Proof of Theorem \ref{thm:vpred}]
    Suppose $\prec$ is a VPO and $P_1, \ldots, P_n$ are VPLs (well-matched by our global assumption). By Lemma \ref{lem:repair}, for some coherent VPO $\prec'$, we have
    \[
        \red_\prec(P_1 \wnshuffle \cdots \wnshuffle P_n) = \red_{\prec'}(P_1 \shuffle \cdots \shuffle P_n).
    \]
    Then, by Theorem \ref{thm:vpred-vp-coherent}, $\red_\prec(P_1 \wnshuffle \cdots \wnshuffle P_n)$ is visibly pushdown.
\end{proof}

\begin{proof}[Proof of Theorem \ref{thm:oc}]
    Suppose $A = (Q_A, \{q_A^{\mathsf{in}}\}, \Gamma_A, Q_A, \delta_A)$ and $P_i$ is given by $A_i = (Q_i, Q_i^{\mathsf{in}}, \Gamma_i, F_i, \delta_i)$.
    Suppose $\prec$ is uniform w.r.t. $A_1, A_2$. We further assume that $F_i$ has no outgoing transition; $A_i$ can be modified to satisfy this condition without breaking uniformity. Assume $\Gamma_1 \cap \Gamma_2 = \{\bot\}$ and let $\Gamma_{12} = \Gamma_1 \cup \Gamma_2$. For any $q_1 \in Q_1$, $q_2 \in Q_2$, and $\gamma \in \Gamma_{12}$, define $\enabled(q_1, q_2, \gamma)$ as the set of outgoing letters enabled at $q_1$ or $q_2$ when the stack top is $\gamma$ (if $\gamma \notin \Gamma_i$, then $\enabled(q_1, q_2, \gamma)$ contains no letter from $\Sr_i$).

    We construct $R = (Q, Q^{\mathsf{in}}, \Gamma, F, \delta)$, which simulates the interleaving of $A_1$ and $A_2$, with a conceptual stack $S$, guided by $A$. 
    \begin{itemize}
        \item $Q = Q_1 \times Q_2 \times Q_A \times \Gamma_{12}$. The state $(q_1, q_2, q_A, \gamma) \in Q$ maintains the states $q_1, q_2, q_A$ in $A_1, A_2, A$ respectively and the top $\gamma$ of the stack $S$.
        \item $Q^{\mathsf{in}} = Q_1^{\mathsf{in}} \times Q_2^{\mathsf{in}} \times \{ q_A^{\mathsf{in}}\} \times \{ \bot \}$.
        \item $\Gamma = (\Gamma_{12} \times \Gamma_A \times \Gamma_{12}) \cup \{\bot\}$. When $(\gamma, \gamma_A, \gamma')$ is the stack top of $R$, the top two elements of $S$ are $\gamma$ and $\gamma'$ (in that order), and the stack top of $A$ is $\gamma_A$.
        \item $F = F_1 \times F_2 \times Q_A \times \Gamma_{12}$.
        \item 
        
        For any $q = (q_1, q_2, q_A, \gamma) \in Q$, if $\min_{\ord(q_A)}(\enabled(q_1, q_2, \gamma)) \in \tS_1$,
        \begin{itemize}
            \item for all $(q_1, \sigma, q'_1) \in \dl_1$ and $(q_A, \sigma, q'_A) \in \dl_A$, we have
            \[ (q, \sigma, (q'_1, q_2, q'_A, \gamma)) \in \dl; \]
            \item for all $(q_1, \sigma, q'_1, \gamma_1) \in \dc_1$ and $(q_A, \sigma, q'_A, \gamma_A) \in \dc_A$, we have
            \[ (q, \sigma, (q'_1, q_2, q'_A, \gamma_1), (\gamma_1, \gamma_A, \gamma)) \in \dc; \]
            \item for all $(q_1, \sigma, \gamma, q'_1) \in \dr_1$, $(q_A, \sigma, \gamma_A, q'_A) \in \dr_A$, and $\gamma' \in \Gamma_{12}$, we have
            \[ (q, \sigma, (\gamma, \gamma_A, \gamma'), (q'_1, q_2, q'_A, \gamma')) \in \dr. \]
        \end{itemize}
        The case where $\min_{\ord(q_A)}(\enabled(q_1, q_2, \gamma)) \in \tS_2$ is symmetrical, and $\delta$ is the smallest set defined by these rules.
    \end{itemize}
    Let $P$ be the automaton for $P_1 \wnshuffle P_2$ constructed in Theorem \ref{thm:VPLs-closed-under-wnshuffle}. By relating accepted runs of $R$ to those of $P$, one can show that $\Lang(R) \subset \Lang(P)$. Furthermore, for any word $\rho$ over $\tS$, $\rho \in \Lang(R) \Leftrightarrow \rho \in \red_\prec(P_1 \wnshuffle P_2)$. We omit a formal proof by induction but point out the key lemmas.
    \begin{itemize}
        \item[$\Rightarrow$:] For any word $\rho \in \Lang(R)$ and words $\alpha, \beta$ such that $\rho = \alpha \beta$, if $R$ reaches state $(q_1, q_2, q_A, \gamma)$ after reading $\alpha$ in an accepted run over $\rho$, then $x \in \enabled(q_1, q_2, \gamma)$ for any letter $x \in \letters(\beta)$ that can commute to the right of $\alpha$.
        \item[$\Leftarrow$:] For any word $\rho \in \red_\prec(P_1 \wnshuffle P_2)$ and words $\alpha, \beta$ and letter $x \in \tS_i$ such that $\rho = \alpha x \beta$, if $P$ reaches $(q_1, q_2)$ with stack top $\gamma$ after reading $\alpha$ in an accepted run over $\rho$, then $x \prec_\alpha y$ for all letter $y \in \enabled(q_1, q_2, \gamma) \setminus \tS_i$.
        
        Note that this does not hold without the assumption that $F_1$ and $F_2$ have no outgoing transition.
    \end{itemize}

    Lastly, we observe that $|Q| = |Q_1||Q_2||Q_A||\Gamma_{12}|$.
\end{proof}

For any $\tS_i$, let $\tS_i^\dep$ be all letters in $\tS_i$ that do not soundly commute with all letters in other components. In other words, it is the complement of $\tS_i^\indep$.

\begin{lemma} \label{lem:tail-indep-grammar}
    Let the nonterminal $Y$ generate exactly the well-matched runs over $\tS_i^\indep$. The following statements are equivalent:
    \begin{enumerate}
        \item $P_i$ is tail-independent (resp. head-independent).
        \item For every run $\rho$ of $P_i$ and every pair of positions $j < j'$ (resp. $j > j'$) in $\rho$ such that $\rho_j, \rho_{j'} \in \tS_i^{\mathsf{dep}}$, for all positions $k, l$ in $\rho$ such that $k \rightsquigarrow l$, we have $k < j < l \Rightarrow k < j' < l$.
        \item Every run of $P_i$ is derivable from $X$, whose grammar is given by
        \[ X \to Y \mid aX \mid cYrX \mid cXrY, \]
        where $a$, $c$, and $r$ range over all letters in $\Sl_i$, $\Sc_i$, and $\Sr_i$ respectively. (resp. $X \to Y \mid Xa \mid XcYr \mid YcXr$)
    \end{enumerate}
\end{lemma}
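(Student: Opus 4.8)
The plan is to note that all three statements are conditions on a single run $\rho$ of $P_i$ (each has the shape ``every run satisfies \dots''), so it suffices to prove, for an arbitrary well-matched word $\rho$ over $\tS_i$, the pointwise equivalences $(1)\Leftrightarrow(2)$ and $(2)\Leftrightarrow(3)$; the universally quantified statements follow at once. I would prove only the tail-independent case in full and obtain the head-independent case by a reversal duality: reversing a run and swapping the roles of $\Sc_i$ and $\Sr_i$ sends well-matched words to well-matched words, exchanges the first and last dependent positions, and carries Definition \ref{def:tail-independent}, the $j<j'$ condition, and the grammar $X \to Y \mid aX \mid cYrX \mid cXrY$ exactly onto their head-independent mirrors $X \to Y \mid Xa \mid XcYr \mid YcXr$ (for instance $cXrY$ reverses to $YcXr$). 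Throughout I would use that a component's returns are local operations, so $\Sr_i \subseteq \tS_i^{\indep}$; this is what keeps the strict-inequality boundary in statement (2) consistent with a (possibly dependent) call or return sitting at a scope boundary.

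For $(1)\Leftrightarrow(2)$ the idea is to read the decomposition $\rho=uv$ of Definition \ref{def:tail-independent} off the matching structure. Given (1), all letters of $\tS_i^{\dep}$ lie in $u$ (since $v$ is independent), and every matched pair $k \rightsquigarrow l$ whose span contains a dependent letter must have its call in $u$ and its return in $v$, hence straddle the cut between $u$ and $v$; since these straddling scopes are nested and all contain the cut, the nesting condition $k<j<l \Rightarrow k<j'<l$ follows immediately for dependent $j<j'$. Conversely, given (2), I would place the cut just after the last dependent position $p$, so that $v$ is automatically independent, and then use (2) to verify the remaining clauses of Definition \ref{def:tail-independent}. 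The only delicate point is that no scope lying entirely inside $u$ may enclose a dependent letter, which is exactly what (2) forbids: if such a scope $(k,l)$ with $l\le p$ enclosed a dependent $j$, then the pair $(j,p)$ would require $k<p<l$, contradicting $l\le p$. Every call therefore sits in $u$ and every return whose span meets $\tS_i^{\dep}$ sits in $v$, while the exceptional independent-span calls and returns are precisely those permitted to cross the cut.

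For $(2)\Leftrightarrow(3)$ I would induct on the unique top-level factorization of $\rho$ into internals and maximal scopes. For $(3)\Rightarrow(2)$ I would check that each production preserves the nesting condition: $X\to Y$ yields a dependent-free word; $X\to aX$ and $X\to cYrX$ prepend an internal or an independent-bodied scope, and since the only possibly-dependent boundary letter sits at a position no scope strictly contains, no new violating pair is created; and $X\to cXrY$ nests the inductively conforming dependent part inside one scope and appends an independent tail $Y$, keeping all dependent positions within that scope. The converse $(2)\Rightarrow(3)$ is the parsing direction, which I expect to be the main obstacle. Here I would peel the leftmost top-level block $b_1$: an internal gives $X\to aX$; a scope $c\sigma r$ with independent body gives $X\to cYrX$ with $Y\Rightarrow^{*}\sigma$; and a scope $c\sigma r$ whose body contains a dependent letter forces $X\to cXrY$. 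The crux is this last case: I must show the tail after the matching return is entirely independent, so that it can be generated by $Y$. This is exactly where (2) is used, since a dependent letter strictly inside $(c,r)$ together with any later dependent letter would violate the nesting condition, so no dependent letter can follow $r$. That the residual factors (the body for the inner $X$ and the tail for $Y$, or the remainder for a trailing $X$) again satisfy (2) is routine, as the condition is inherited by factors, and the induction closes.
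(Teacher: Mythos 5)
Your proposal is correct and takes essentially the same route as the paper's proof: both establish the equivalence through condition (2) as the hub, using the same straddling-scope argument for $(1)\Rightarrow(2)$, the same cut placed immediately after the last dependent position for $(2)\Rightarrow(1)$, the same structural inductions (peeling the leftmost block, respectively inducting on the derivation) for $(2)\Leftrightarrow(3)$, and symmetry for the head-independent case. Two minor remarks: your explicit case split in $(2)\Rightarrow(3)$ (independent body via $cYrX$, dependent body via $cXrY$) is in fact slightly more careful than the paper's terse write-up, which argues only the dependent-body case; and your blanket assumption $\Sr_i \subseteq \tS_i^{\indep}$ should be strengthened to cover calls as well (as the paper assumes), since under your reversal duality the head-independent case needs call-independence -- e.g.\ with a dependent call $c$, the word $c\,a\,r$ is derivable from the head grammar yet violates the head version of (2).
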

\begin{proof}
    We prove the equivalence for tail-independence. The case for head-independence is symmetrical.
    \begin{description}
        \item[(1) $\Rightarrow$ (2):] Suppose (1). Let $\rho$ be a run of $P_i$. Write $\rho = uv$ as in the definition. Let $j < j'$ be a pair of positions such that $\rho_j, \rho_{j'} \in \tS_i^{\mathsf{dep}}$. Then they are both in $u$. Let $k, l$ be positions such that $k \rightsquigarrow l$ and $k < j < l$. Then $k$ is in $u$ and $l$ is in $v$, which shows that $k < j < j' < l$.

        \item[(2) $\Rightarrow$ (1):] Suppose (2). Let $\rho$ be a run of $P_i$. If there is no letter in $\tS_i^\dep$ in $P_i$, then any division $\rho = uv$ satisfies the requirement. Otherwise, let $u$ be the shortest prefix of $\rho$ that contains the last letter in $\tS_i^\dep$ and write $\rho = uv$. Then all letters in $v$ are in $\tS_i^\indep$.
        
        Suppose $k \rightsquigarrow l$ and there is some letter in $\tS_i^\dep$. It suffices to show that $k$ is in $u$ and $l$ is in $v$. Indeed: since any letter in $\tS_i^\dep$ must be in $u$, the position $k$ is in $u$; since $l$ is behind the last letter in $\tS_i^\dep$, the position $l$ is in $v$.

        \item[(2) $\Rightarrow$ (3):] We will prove the stronger claim that for any well-matched run over $\tS_i$, if for every pair of positions $j < j'$ in $\rho$ such that $\rho_j, \rho_{j'} \in \tS_i^{\mathsf{dep}}$, for all positions $k, l$ in $\rho$ such that $k \rightsquigarrow l$, we have $k < j < l \Rightarrow k < j' < l$, then $\rho$ is derivable from $X$.
        
        Let $\rho$ be well-matched run over $\tS_i$ such that the condition holds. Assume by induction that the statement holds for shorter runs.
        
        If $\rho = \epsilon$, then $\rho$ is derivable from $Y$ and therefore $X$.

        If $\rho = av$, where $a \in \Sl_i$, then $v$ also satisfies the condition, no matter whether $a \in \tS_i^\dep$ or not, and is therefore derivable from $X$. Thus, $\rho$ is derivable from $X$.

        If $\rho = cwrv$, where $c \in \Sl_i$ and $r$ is the matching return, then $w$ also satisfies the condition and is thus derivable from $X$; all letters in $v$ must be in $\tS_i^\indep$, so $v$ is derivable from $Y$. Hence, $X \Rightarrow cXrY \Rightarrow^* \rho$.

        \item[(3) $\Rightarrow$ (2):] Suppose (3). Let $\rho$ be a run of $P_i$. Suppose words shorter than $\rho$ all satisfy the requirement in (2). Let $j < j'$ be a pair of positions such that $\rho_j, \rho_{j'} \in \tS_i^{\mathsf{dep}}$. Consider the first step of a derivation of $\rho$ from $X$. We will show that for all positions $k, l$ in $\rho$ such that $k \rightsquigarrow l$, we have $k < j < l \Rightarrow k < j' < l$.
        \begin{description}
            \item[$X \Rightarrow Y$.] This cannot happen due to the existence of $\rho_j$.
            \item[$X \Rightarrow aX$.] If $j$ is the position of $a$, then no $k, l$ exist. Otherwise, by induction hypothesis.
            \item[$X \Rightarrow cYrX$.] Both $j$ and $j'$ can only be in $X$ on the RHS. Use induction hypothesis.
            \item[$X \Rightarrow cXrY$.] Still, both $j$ and $j'$ can only be in $X$ on the RHS. Suppose $k \rightsquigarrow l$ and $k < j < l$. If $k, l$ are the positions of $c, r$ respectively, then they also enclose $j'$. Otherwise, they are inside $X$ on the RHS, and we use induction hypothesis.
        \end{description}
    \end{description}
\end{proof}

\begin{proof}[Proof of Proposition \ref{prop:tail-indep-vpl}]
    By Lemma \ref{lem:tail-indep-grammar}, $P_i$ is tail-independent iff $P_i \subset \Lang(X)$. Note that
    \[ Y \to \epsilon \mid aY \mid cYrY, \]
    where $a$, $c$, and $r$ range over all letters in $\Sl_i \cap \tS_i^\indep$, $\Sc_i$, and $\Sr_i$ respectively. (Recall we assume that calls and returns fully commute with letters in other components.) Thus, the grammar for $X$ can be rewritten to a VPG, which shows $\Lang(X)$ is a VPL.

    We have $P_i \subset \Lang(X) \iff P_i \cap \stcomp{\Lang(X)} = \varnothing$. By \cite[Theorem 6]{AlurM2004}, one can convert the VPG for $\Lang(X)$ into a VPA with $O(1)$ states and stack alphabet of size $O(|\tS_i|)$. By \cite[Theorem 2]{AlurM2004}, this VPA can be determinized to have $O(1)$ states and stack alphabet of size $O(|\tS_i|)$. Then, $P_i \cap \stcomp{\Lang(X)}$ can be constructed as a VPA with polynomial size, and its emptiness can be decided in polynomial time.

    A similar argument, rewriting the grammar to a ``reversed'' VPG, holds for head-independence.
\end{proof}

\begin{proof}[Proof of Theorem \ref{thm:wnshuffle-sound}]
    We use the characterization (3) of tail-independence in Lemma \ref{lem:tail-indep-grammar}.

    Let $\pi_i$ be the projection $\Pi_{\tS_i}$. Let $\tS'_i$ be the subset of $\tS_i$ that soundly commutes with every letter in $\bigcup_{j\neq i} \tS_j$. Let $I$ be the largest sound commutativity relation.

    We will prove by induction that, if $P_1, \ldots, P_n$ are tail-independent singleton languages, then every word in $P_1 \shuffle \cdots \shuffle P_n$ is $I$-equivalent to some word in $P_1 \wnshuffle \cdots \wnshuffle P_n$.

    Let $P_1 = \{\rho_1\}, \ldots, P_n = \{\rho_n\}$ be tail-independent singleton languages and suppose the statement holds on smaller instances. Let $\rho \in P_1 \shuffle \cdots \shuffle P_n$. If $\rho$ contains no call (and thus no return), then the statement holds. Otherwise, write $\rho = \alpha c \beta r \gamma$, where $c$ is the first call in $\rho$, and $r \in \Sr_i$ matches $c$ in $\rho_i$.
    \begin{itemize}
        \item If $\letters(\pi_i(\beta)) \subset \tS'_i$, then $\rho \equiv_I \alpha c \beta' r \beta'' \gamma$, where $\beta' = \pi_i(\beta)$; in other words, we move letters in $\beta r$ that are in $\tS_i$ forward. Invoke the induction hypothesis on $(\{\pi_j(\beta''\gamma)\})_{j\in[n]}$.
        \item If $\letters(\pi_i(\beta)) \not\subset \tS'_i$, then the derivation of $\rho_i$ must have the form
        \[ X \Rightarrow a_1 X \Rightarrow \cdots \Rightarrow a_1\cdots a_n X \Rightarrow a_1\cdots a_n cXr\blue{Y} \Rightarrow \cdots \Rightarrow \rho_i, \]
        where $\blue{Y} \Rightarrow^* \pi_i(\gamma)$. Therefore, $\rho \equiv_I \alpha c \beta \gamma' r \gamma''$, where $\gamma'' = \pi_i(\gamma)$; in other words, we move letters in $r \gamma$ that are in $\tS_i$ backward. Invoke the induction hypothesis on $(\{\pi_j(\beta\gamma')\})_{j\in[n]}$.
    \end{itemize}

    For any languages $L_1, \ldots, L_n$, the shuffle (resp. well-nested shuffle) of $L_1, \ldots, L_n$ is the union of the shuffle (resp. well-nested shuffle) of $\{\rho_1\}, \ldots, \{\rho_n\}$ for all $\rho_1 \in L_1, \ldots, \rho_n \in L_n$. Therefore, if $P_1, \ldots, P_n$ are tail-independent languages that are not necessarily singletons, it still holds that every word in $P_1 \shuffle \cdots \shuffle P_n$ is $I$-equivalent to some word in $P_1 \wnshuffle \cdots \wnshuffle P_n$.

    The case for head-independent languages is symmetrical.
\end{proof}

\begin{proof}[Proof of Proposition \ref{prop:cr-vplex}]
    A VP contextual order is given by a complete deterministic VPA \\ $A = (Q, \{q^{\mathsf{in}}\}, \Gamma, Q, \delta)$ together with a map $\ord: Q \to \Lin(\Sigma)$. Below, we define $A$ and $\ord$ for each of concatenation, nested concatenation, and $\vec{s}$-lockstep. We assume $\Gamma = Q \uplus \{\bot\}$ and write the VPA like a weakly-hierarchical linear accepting NWA \cite{AlurM2009}, i.e., the automaton pushes $q$ if at state $q$ a call letter is read. Since $P_i$'s are well-matched, we omit the case when a pending return is read.
    \begin{description}
        \item[Concatenation] Let $A$ be a complete deterministic VPA with a single state $\star$. Let $\ord(\star)$ be any linear order such that $\Sigma_1 < \cdots < \Sigma_n$, i.e., if $x \in \Sigma_i$, $y \in \Sigma_j$, and $i < j$, then $x < y$.
        \item[Nested Concatenation] Let $A$ be a complete deterministic VPA with a single state $\star$. Let $\ord(\star)$ be any linear order such that $\Sl_1 \cup \Sc_1 < \cdots < \Sl_n \cup \Sc_n < \Sr$.
        \item[$\vec{s}$-lockstep] Let $Q = \{ (t_1, \ldots, t_n) \mid 0 \le t_i \le s_i \text{ for each } i \} \cup \{ \star \}$. Let $q^{\mathsf{in}} = (0, \ldots, 0)$. For all states $q, q', (t_1, \ldots, t_n) \in Q$ and letters $a_i \in \Sl_i,\, c_i \in \Sc_i,\, r_i \in \Sr_i$,
        \begin{align*}
        \dl(q, a_i) &= q, \\
        \dc(\star, c_i) &= \star, \\
        \dc((t_1, \ldots, t_n), c_i) &= \begin{cases}
            (0, \ldots, 0, s_i - 1, s_{i+1}, \ldots, s_n), &\text{if } t_i = 0\\
            (0, \ldots, 0, t_i - 1, t_{i+1}, \ldots, t_n), &\text{otherwise}
        \end{cases} \\
        \dr(q, q', r_i) &= \begin{cases}
            q', &\text{if } q' = (t_1, \ldots, t_n) \land t_i = 0 \\
            \star. &\text{otherwise} 
        \end{cases}
        \end{align*}
        $\ord(\star)$ is any linear order such that $\Sigma_1 < \cdots < \Sigma_n$. $\ord((t_1, \ldots, t_n))$ is any linear order such that $\Sl_1 < \cdots < \Sl_n < \Sc_j < \Sc_{j+1} < \cdots < \Sc_n < \Sc_1 < \cdots \Sc_{j-1} < \Sr$, where $j$ is smallest index such that $t_j > 0$.
    \end{description}
\end{proof}

\begin{proof}[Proof of Theorem \ref{thm:cr-vp}]
    By Proposition \ref{prop:cr-vplex}, induction, and the fact that for any permuation $\pi$ of $\{1, \ldots, n\}$,
    \[
        P_{\pi(1)} \wnshuffle \cdots \wnshuffle P_{\pi(n)} = P_1 \wnshuffle \cdots \wnshuffle P_n.
    \]
\end{proof}

\begin{proof}[Proof of Proposition \ref{prop:homo-vpo}]
    Suppose $\prec$ is represented by VPA $A = (Q_A,Q_A^{\mathsf{in}},\Gamma_A,\delta_A,Q_A^{\mathsf{F}})$ together with map $\ord$. Define VPA $A' = (Q_A,Q_A^{\mathsf{in}},\Gamma_A \uplus \{\star\},\delta_{A'},Q_A^{\mathsf{F}})$, where $\delta_{A'}$ is obtained from $\delta_A$ by
    \begin{itemize}
        \item replacing every $(q, c, q', \gamma) \in \dc_A$ such that $c \in \Sigma'$ with $(q, c, q'', \star)$, where $(q, f(c), q'') \in \delta_A$; and
        \item replacing every $(q, r, \gamma, q') \in \dr_A$ such that $r \in \Sigma'$ with $(q, r, \star, q'')$, where $(q, f(r), q'') \in \delta_A$.
    \end{itemize}

    Let $w = w_1\cdots w_k$ be any prefix of any run of $P_1 \wnshuffle \cdots \wnshuffle P_n$. Let $(q, s)$ be the configuration of $A$ after reading $f(w)$. Then the configuration of $A'$ after reading $w$ is $(q, s')$, where $s'$ is obtained from $s$ by inserting $\star$ pushed by all $w_i \in \Sigma'$ that is a pending call.

    Define map $\ord' : Q_A \to \Lin(\tS)$ over the states of $A'$ by mapping $q \in Q_A$ to any linear order extending $\{ (a, b) \mid (f(a), f(b)) \in \ord(q) \}$. One can show that such an extension always exists, e.g., by the Order-Extension Principle. We can take any extension because reducing $P_1 \wnshuffle \cdots \wnshuffle P_n$ under $\Ith$ only compares $a, b$ from different $\tS_i$'s, and in this case $f(a)$ and $f(b)$ are already comparable by $\ord(q)$. The contextual order $\prec'$ represented by VPA $A'$ together with map $\ord'$ satisfies $\red_{\prec'}(P_1 \wnshuffle \cdots \wnshuffle P_n) = \red_{f(\prec)}(P_1 \wnshuffle \cdots \wnshuffle P_n)$.
\end{proof}

\subsection{Semantics}
Let $\stktop(S)$ denote the top frame of a nonempty stack $S$ and let \[ \stktop(\vec{S}) := (\stktop(S_1), \ldots, \stktop(S_n))\] for a tuple of nonempty stacks $\vec{S} = (S_1, \ldots, S_n)$. We have the following lemma, which can be proved by induction. Note the right-hand side of the implication states the equation of two sets.

\begin{lemma} \label{lem:sem}
    For any $\rho \in \tS_1 \wnshuffle \cdots \wnshuffle \tS_n$ without pending returns, $\vec{S} \in \prod_{j=1}^n \stack(V_j)$, and $S \in \stack(V)$,
    \[
        \stktop(\vec{S}) = \stktop(S) \implies \stktop(\sem{M}{\rho}(\vec{S})) = \stktop(\sem{S}{\rho}(S)).
    \]
\end{lemma}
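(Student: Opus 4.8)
The plan is to induct on the length of $\rho$. The key structural observation is that a nonempty well-nested word (\ref{def:wnw}) without pending returns never begins with a return, so its first letter is either an internal or a call; if it is a call $c \in \Sc_k$, then either $c$ is pending or it has a matching return, and in the latter case well-nestedness forces the matching return to lie in the \emph{same} component, i.e. to be some $r \in \Sr_k$. Throughout I maintain exactly the invariant of the statement: under the identification of a tuple $(\mu_1,\dots,\mu_n)$ with $\bigcup_{j}\mu_j \in V$, the set of top frames reachable under $\mathcal{M}$ from $\vec S$ coincides with the set reachable under $\mathcal{S}$ from $S$, provided the initial top frames already coincide.

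The base case $\rho=\epsilon$ is immediate, since neither semantics changes the stack. For $\rho = a\rho'$ with $a \in \Sl_k$, both $\mathcal{M}$ and $\mathcal{S}$ rewrite the $V_k$-slice of the (combined) top frame by $\mathcal{F}_k(a)$ and leave every $V_j$-slice with $j\neq k$ unchanged, so the top frames still agree after $a$; I then apply the induction hypothesis to the strictly shorter $\rho'$, which again has no pending returns. For $\rho = c\rho'$ with a \emph{pending} $c \in \Sc_k$, both semantics push a single frame whose $V_k$-slice is produced by $\mathcal{F}_k(c)$ and whose remaining slices are copied from the current top; the resulting top frames agree, and since $c$ never returns within $\rho$ the induction hypothesis on $\rho'$ finishes this case. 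The discrepancy in stack \emph{shape} between the single stack and the $n$ stacks is irrelevant here, because the invariant constrains only top frames.

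The crux is the matched case $\rho = c\,\sigma\,r\,\tau$ with $c \in \Sc_k$, matching $r \in \Sr_k$, and $\sigma$ the well-matched infix between them. Reading $c$ aligns the top frames as in the pending case. Since $\sigma$ is well-matched, under both semantics its computation returns to the pre-$c$ stack depth and never touches the frame sitting on top just before $c$ --- a standard balancedness property of visibly pushdown runs, provable by the same induction --- so applying the induction hypothesis to $\sigma$ yields a common combined top frame $\eta'$ after $\sigma$, while the caller frame is preserved as $\nu_k$ in component $k$ of $\mathcal{M}$ and as $\nu$ in the single stack of $\mathcal{S}$, with $\nu|_{V_k}=\nu_k$ by the initial agreement. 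Now well-nestedness is exactly what makes the pop correct: on the single stack the frame popped by $r$ is the one pushed by $c$, and its caller is $\nu$, so $\mathcal{S}$ computes the new $V_k$-slice as $\mathcal{F}_k(r)(\eta'|_{V_k},\,\nu_k)$, identically to $\mathcal{M}$; meanwhile the clause of $\mathcal{S}$ that copies the irrelevant variables on a return takes each $V_j$-slice ($j\neq k$) from the popped frame $\eta'$, which is precisely the value $\mathcal{M}$ still holds on component $j$. Hence the top frames agree after $r$, with the frames below again the untouched originals, and the induction hypothesis on $\tau$ closes the argument.

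I expect the matched case to be the main obstacle, for three intertwined reasons: the appeal to well-nestedness to guarantee that the pushed and popped frames belong to the same component; the auxiliary balancedness fact that a well-matched infix preserves every frame strictly below its starting top under both semantics; and the careful verification that the ``copy the irrelevant variables'' clauses in the call and return cases of $\mathcal{S}$ reproduce the trivial behaviour of $\mathcal{M}$ on the untouched component stacks. A secondary care point is that both semantics are nondeterministic, so the whole argument must be carried out at the level of \emph{sets} of reachable top frames; this is manageable because below-frame preservation is a deterministic structural fact that factors out of the set-level top-frame correspondence delivered by the induction hypothesis.
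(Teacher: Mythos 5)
Your proof is correct and follows essentially the same approach the paper intends: the paper states Lemma~\ref{lem:sem} with only the remark that it ``can be proved by induction,'' and your structural induction --- with the exhaustive case split on the first letter (internal / pending call / matched call--return), the auxiliary fact that a well-matched infix preserves all frames strictly below its starting top under both $\mathcal{M}$ and $\mathcal{S}$, and the set-level treatment of the nondeterministic semantics --- supplies exactly the details the paper omits. In particular, your verification that the ``copy the irrelevant variables'' clauses of $\mathcal{S}$ on calls and returns reproduce the untouched component stacks of $\mathcal{M}$ is the key computation, and it is carried out correctly.
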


\begin{proof}[Proof of Theorem \ref{thm:sem}]
    Formally, for any $\rho \in \tS_1 \wnshuffle \cdots \wnshuffle \tS_n$ without pending returns,
    \begin{itemize}
        \item under the semantics $\mathcal{M}$, $\{A\} \rho \{B\}$ is valid iff $\stktop(\vec{S}) \models A \Rightarrow \stktop(\vec{S'}) \models B$ for all $(\vec{S}, \vec{S'}) \in \sem{M}{\rho}$;
        \item under the semantics $\mathcal{S}$, $\{A\} \rho \{B\}$ is valid iff $\stktop(S) \models A \Rightarrow \stktop(S') \models B$ for all $(S, S') \in \sem{S}{\rho}$.
    \end{itemize}

    Let $\rho \in P_1 \wnshuffle \cdots \wnshuffle P_n$. Suppose $\{A\} \rho \{B\}$ under the semantics $\mathcal{S}$. By Lemma \ref{lem:sem}, for any $(\vec{S}, \vec{S'}) \in \sem{M}{\rho}$, there is some $(S, S') \in \sem{S}{\rho}$ such that $\stktop(\vec{S}) = \stktop(S)$ and $\stktop(\vec{S'}) = \stktop(S')$; since $\{A\} \rho \{B\}$ under the semantics $\mathcal{S}$, we have $\stktop(S) \models A \Rightarrow \stktop(S') \models B$, and therefore $\stktop(\vec{S}) \models A \Rightarrow \stktop(\vec{S'}) \models B$. This shows that $\{A\} \rho \{B\}$ under the semantics $\mathcal{M'}$. The other direction is symmetrical.
\end{proof}

\section{Extra Details}
\subsection{CHC Encodings} \label{app:enc}

We use the alphabet introduced in Appendix \ref{app:recursive-programs}.

\paragraph{Directly Encode an NWA}

Let $A = (Q, Q^{\mathsf{in}}, Q^{\mathsf{F}}, \tS, \delta)$ be a linearly accepting, weakly hierarchical NWA without pending returns such that for any $(q, q', \letter{ret f}, q'') \in \delta$, any outgoing call letter at $q'$ is $\letter{call f}$. For each state $q \in Q$, declare a predicate $I_q(\widebar{\vec{x}}, \vec{l}, r)$, which represents an assertion that holds at $q$ relating the function arguments $\widebar{\vec{x}}$, the local variables $\vec{l}$, and the return address $r \in Q \uplus \{\bot\}$ at the current stack frame, where $\bot$ denotes the stack bottom. Each transition in $\delta$ translates to a constrained Horn clause:
\begin{itemize}
  \item $(q, a, q') \in \delta$ such that $a \in \Sl$ translates to
  \[
    I_q(\widebar{\vec{x}}, \vec{l}, r) \land \sem{S}{a}(\vec{l}, \vec{l}') \implies I_{q'}(\widebar{\vec{x}}, \vec{l}', r).
  \]
  \item $(q, \letter{call f}, q') \in \delta$ translates to
  \[
    I_q(\widebar{\vec{x}}, \vec{l}, r) \land \bigwedge_{p \in \param(f)} p' = \widebar{p}' = p \implies I_{q'}(\widebar{\vec{x}}', \vec{l}', q).
  \]
  \item $(q, q', \letter{ret f}, q'') \in \delta$ translates to
  \begin{multline*}
    I_q(\widebar{\vec{x}}, \vec{l}, q') \land I_{q'}(\widebar{\vec{x}}', \vec{l}', r) \land \left(\bigwedge_{p \in \param(f)} \widebar{p} = p'\right) \\ \land \left(\bigwedge_{l\in \vec{l}} l'' = \mathsf{ite}(l \in \fout(f),\, l,\, l')\right) \implies I_{q''}(\widebar{\vec{x}}', \vec{l}'', r).
  \end{multline*}
\end{itemize}

To verify a pair of pre/postconditions $\pre$ and $\post$, add the clauses
\begin{align*}
  \pre &\implies I_{q^{\mathsf{in}}}(\widebar{\vec{x}}, \vec{l}, \bot), \\
  I_{q^{\mathsf{F}}}(\widebar{\vec{x}}, \vec{l}, r) \land \neg \post &\implies \bot
\end{align*}
for all $q^{\mathsf{in}} \in Q^{\mathsf{in}}$, $q^{\mathsf{F}} \in Q^{\mathsf{F}}$, and $r \in Q \uplus \{\bot\}$. If it is known that $\Lang(A)$ is well-matched, then it suffices to take $r = \bot$.

The following proposition states the soundness and completeness of the encoding:

\begin{proposition}
  Let $A = (Q, Q^{\mathsf{in}}, Q^{\mathsf{F}}, \tS, \delta)$ be a linearly accepting, weakly hierarchical NWA without pending returns such that for any $(q, q', \letter{ret f}, q'') \in \delta$, any outgoing call letter at $q'$ is $\letter{call f}$. Let $P = \Lang(A)$.
  For the recursive program $(P, \mathcal{S})$, the CHC encoding is satisfiable iff $\{\pre\} P \{\post\}$.
\end{proposition}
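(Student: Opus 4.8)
The statement is a soundness-and-completeness (\emph{iff}) claim about a CHC encoding, so the plan is to prove the two implications separately using the standard dictionary between satisfying interpretations of the predicates $I_q$ and inductive over-approximations of the configurations reachable by $A$. A model of the clauses is precisely an assignment of assertions to automaton states that is closed under the internal, call, and return transitions; the least such assignment is the collecting (reachability) semantics, and the key observation is that this collecting semantics is itself a model exactly when $\{\pre\}\,P\,\{\post\}$ holds.

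The technical core is a single \emph{summary lemma}, proved by induction on the well-matched (nested) structure of runs. Fix any interpretation $\mathcal{I}$ satisfying all the internal, call, and return clauses. I would show: for every well-matched segment $\rho$ driving $A$ from $q$ to $q'$ within a single stack frame, for every $(S.\nu, S.\nu') \in \sem{S}{\rho}$, and for every return address $r$, if $\mathcal{I} \models I_q(\nu, r)$ then $\mathcal{I} \models I_{q'}(\nu', r)$, where $\nu$ bundles the frame's dummy arguments and local variables. The induction follows the grammar $\epsilon \mid a\rho' \mid c\,\rho_1\,r\,\rho_2$: the empty and internal cases are immediate from the internal clause and the induction hypothesis, while the call/return case is where the \emph{return-address ghost variable} does all the work. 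The call clause enters the nested frame and records the calling state $q$ as the new return address; the induction hypothesis applied to $\rho_1$ (carried with return address $q$) reaches the matching return state; and the return clause then fires \emph{exactly because} the recorded address equals the middle state of the matching return transition, allowing it to recombine the callee's assertion with the caller's assertion $I_q$ and to restore the caller's locals and output variables in agreement with $\sem{S}{r}$. The side condition that the outgoing call at that middle state is a call to $f$ is what guarantees $\param(f)$ and $\fout(f)$ coincide between the call and the return, so the parameter-passing and output-copying constraints match $\sem{S}{c}$ and $\sem{S}{r}$. I expect this matching step to be the main obstacle: the whole point is to show the return clause composes assertions \emph{only} along genuinely interprocedurally realizable (context-sensitive) paths and never along spuriously matched ones, which is precisely the unsoundness the ghost variable is introduced to prevent.

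Given the summary lemma, \textbf{soundness} (satisfiability $\Rightarrow$ the triple) is short. Applying the lemma to a complete accepting run from some $q^{\mathsf{in}}$ to some $q^{\mathsf{F}}$ with return address $\bot$: the seeding clause yields $I_{q^{\mathsf{in}}}(\nu, \bot)$ whenever $\nu \models \pre$, the lemma propagates this to $I_{q^{\mathsf{F}}}(\nu', \bot)$, and the bad clause $I_{q^{\mathsf{F}}}(\nu', \bot) \land \neg\post \Rightarrow \bot$ forces $\nu' \models \post$; infeasible runs make the triple vacuous. For \textbf{completeness} (the triple $\Rightarrow$ satisfiability), I would exhibit the collecting interpretation: $I_q$ holds of a (frame, return-address) tuple iff that configuration is reachable by some feasible prefix run of $A$ from an initial valuation satisfying $\pre$. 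By construction this set is closed under $A$'s transitions, so it satisfies all internal/call/return clauses, and it satisfies the seeding clause since initial configurations are reachable. The remaining obligation is the bad clause: because $\Lang(A)$ is well-matched and linearly accepting, a reachable final configuration (with $r = \bot$, to which the bad clauses may be restricted) corresponds to a complete feasible run $\rho \in P$ taking $\nu \models \pre$ to some $\nu'$, and $\{\pre\}\,P\,\{\post\}$ then gives $\nu' \models \post$, so $I_{q^{\mathsf{F}}}(\nu', \bot) \land \neg\post$ is unsatisfiable. Combining the two directions yields the claimed equivalence.
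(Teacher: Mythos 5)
Your proof is correct, and the two arguments are worth comparing because they diverge in a real way on one of the two directions. On soundness (satisfiable $\Rightarrow$ triple) you and the paper do essentially the same thing in different packaging: the paper reads off, from a model and an accepting run $q_0 \xrightarrow{\rho_1} \cdots \xrightarrow{\rho_n} q_n$, the position-indexed annotation $J_i = \{(\bar{\vec{x}},\vec{l}) \mid I_{q_i}(\bar{\vec{x}},\vec{l},r)\}$ (with $r$ the state before the last pending call) and then appeals to the fact that an \emph{inductive sequence of nested interpolants} in the sense of Heizmann et al.\ validates the Hoare triple; your summary lemma, proved by structural induction on well-matched segments, is a self-contained proof of that same validation step rather than a citation of it. The genuine divergence is in completeness (triple $\Rightarrow$ satisfiable). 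The paper argues the contrapositive through the derivation semantics of CHCs: by induction on derivations, every derivable fact $I_q(\bar{\vec{x}},\vec{l},r)$ is witnessed by a constructible partial run of $A$ carrying an interpolant annotation, so a derivation of $\bot$ unravels into a concrete counterexample run refuting $\{\pre\}\,P\,\{\post\}$. You instead exhibit a model directly, namely the collecting interpretation of reachable (frame, return-address) configurations. Both routes are sound; yours is the more economical ``least model $=$ collecting semantics'' argument, while the paper's buys refutational completeness in a usable form --- unsatisfiability comes with an explicit counterexample trace, which is what a CHC solver actually reports.

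One step of yours deserves more than ``by construction this set is closed under $A$'s transitions'': the verification that the collecting interpretation satisfies the \emph{return} clauses. Unlike the internal and call clauses, a return clause is not closure under a single transition; its body conjoins $I_q(\cdot,\cdot,q')$ and $I_{q'}(\cdot,\cdot,r)$, which under your interpretation are witnessed by two \emph{a priori unrelated} feasible runs --- one ending in the callee's frame at $q$, the other ending in the caller's frame at $q'$. To discharge the clause you must splice them: replay the caller's run to $q'$, take a call transition (the hypothesis that every outgoing call letter at $q'$ calls the same $f$ as the return guarantees the letters match, so $\param(f)$ and $\fout(f)$ agree between call and return), then re-run the callee's top-frame segment, which is feasible from the freshly pushed frame precisely because the clause's constraint $\bar{p} = p'$ forces the callee's recorded initial arguments to equal the caller's current ones and because the ghost variables $\bar{p}$ are never reassigned; finally take the return transition, which restores the caller's locals and outputs exactly as the clause prescribes. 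This is the same context-sensitivity issue you correctly isolate in the soundness direction, resurfacing in the model construction; it is not a gap in the approach, but it is the one place where your completeness argument has real content and should be spelled out. (A minor nit: restricting the bad clauses to $r = \bot$ is licensed only when $\Lang(A)$ is well-matched; in the general pending-call case the same collecting model handles $r \neq \bot$ as well, since the paper's triple semantics constrains the top frame of runs with pending calls too.)
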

\begin{proof}
  $\Rightarrow$: Suppose the CHC encoding is satisfiable. Let $\rho$ be a nested word with an accepting run $q_0 \xrightarrow{\rho_1} \cdots \xrightarrow{\rho_n} q_n$. Then $\rho$ can be annotated with an \emph{inductive sequence of nested interpolants}\footnote{In the context of this definition, an interpolant is a set of valuations. For our purpose, we remove the requirement that $J_0 = \top$ and $J_n = \bot$.}~\cite[Definition 5]{HeizmannHP2010} $J_0, \ldots, J_n$, where $J_i = \{ (\widebar{\vec{x}}, \vec{l}) \mid I_{q_i}(\widebar{\vec{x}}, \vec{l}, r) \}$ (we use $(\widebar{\vec{x}}, \vec{l})$ to denote the corresponding valuation) and $r$ is the state just before the last pending call in $\rho_1 \cdots \rho_i$, or $\bot$ if no such state exists. By the construction of the CHC, $\pre \Rightarrow q_0$ and $q_n \Rightarrow \post$. One can show that such an annotation implies $\{\pre\} \rho \{\post\}$.

  $\Leftarrow$: Prove by induction that given a derivation of $I_q(\widebar{\vec{x}}, \vec{l}, r)$, one can construct a nested word $\rho$ such that
  \begin{itemize}
    \item There is a partial run of $A$ over $\rho$ that ends at state $q$. Moreover, if $r = \bot$, then $\rho$ is well-matched; otherwise, $\rho$ has a pending call, $r$ is the state before the last pending call in this run.
    \item $\rho$ can be annotated with an inductive sequence of nested interpolants that starts with some $\{\nu\}$ such that $\nu \models \pre$ and ends with $\{ (\vec{x}, \vec{l}) \}$.
  \end{itemize}
  Then, from a derivation of $\bot$, one can extract a counterexample that witnesses $\{\pre\}P\{\post\}$ does not hold.
\end{proof}

\paragraph{Encode a VPG}

The semantics of a well-matched syntactic run can be given without an explicit stack, as a relation on valuations rather than stacks of valuations. Let $T$ be the set of all valuations. Unlike the direct encoding of an NWA, the ghost variables for function arguments are not needed.

\begin{itemize}
\item $\sem{S'}{\epsilon} := \{ (\nu, \nu) : \nu \in T \}$.
\item $\sem{S'}{\letter{assume $\phi$}} := \{ (\nu, \nu) : \nu \models \phi \}$.
\item $\sem{S'}{\letter{$x$ := $t$}} := \{ (\nu, \nu \oplus \{x \mapsto \nu(t)\}) : \nu \in T \}$.
\item $\sem{S'}{\letter{call $q$}\, \rho\, \letter{ret $q$}} := \{ (\nu, \nu \oplus \bigcup_y\{y \mapsto \mu'(y)\}) : (\mu, \mu') \in \sem{S'}{\rho} \land \bigwedge_x \mu(x) = \nu(x) \}$, where $\rho$ is well-matched and $x,y$ range over the input and output variables of $q$ respectively.
\item $\sem{S'}{\rho\tau} = \sem{S'}{\tau} \circ \sem{S'}{\rho}$, where $\rho, \tau$ are well-matched.
\end{itemize}

Let $X$ be all variables in the program. Given a well-matched VPG $G = (V, \tS, P, S)$, for each nonterminal $L \in V$, declare a predicate $P_L(X, X')$ that \emph{overapproximates} the semantics of all words produced by $L$. Each production in $P$ translates to a constrained horn clause:

\begin{itemize}
\item $L \to a L'$ translates to
  \[ \sem{S}{a}(X, X') \land P_{L'}(X', X'') \implies P_L(X, X''). \]
\item $L \to \letter{call $q$}\, L' \, \letter{ret $q$}\, L''$ translates to
  \begin{multline*}
    \left(\bigwedge_x x' = x\right) \land \left(\bigwedge_y y^{(3)} = y''\right) \land \left(\bigwedge_{z\neq y}z^{(3)} = z\right) \\
    \land P_{L'}(X', X'') \land P_{L''}(X^{(3)}, X^{(4)}) \implies P_L(X, X^{(4)})
  \end{multline*}
  where $x,y$ range over the input and output variables of $q$ respectively.
\item $L \to \epsilon$ translates to
  \[ \bigwedge_{z\in X} z' = z \implies P_L(X, X'). \]
\end{itemize}

To verify a safety property $p(X, X')$, add for all start symbol $L$ the clause
\[ P_L(X, X') \land \neg p(X, X') \implies \bot. \]

\subsection{Parameterized Lockstep} \label{app:cr-details}

Define $G_0 := \bigcup_{i=1}^n \left(G_i \cup \{ Y' \to \blue{c}W\blue{r}E : (Y \to \blue{c}W\blue{r}V) \in G_i \}\right) \cup \{ E \to \epsilon \}$. The rules are listed in Fig.~\ref{fig:ls-vpg}. The nonterminal symbols of the grammar $G$ of the $\vec{m}$-lockstep of $\Lang(G_1), \ldots, \Lang(G_n)$ have the form $[\vec{s},\vec{t},\prod_{i=0}^k W_i]$, where $\vec{s},\vec{t}$ satisfy the constraints in Section \ref{sec:cr}, $W_i$ is a nonterminal in $G_0$ for each $i$, and the length and maximum element of $\vec{s}$ are bounded by those of $\vec{m}$.
Moreover, to account for \textsc{Ls-Eps}, $[\vec{s},\vec{t},\prod_{i=1}^{j-1}W_i \times E \times \prod_{i=j}^k W_i]$ and $[\vec{s},\vec{t},\prod_{i=1}^k W_i]$ are treated as the same symbol for any $k > 1$.

\begin{figure}[h]
\begin{mathpar}
   \inferrule[G-LsOne]
   {(X \to \alpha) \in G_0}
   {([\vec{s},\vec{t},X] \to \alpha) \in G}
   \and
   \inferrule[G-LsInt]
   {\forall i\in [k].\, (W_i \to \alpha_i) \in G_0\\
   (U_j \to \blue{a} W_j) \in G_0\\
   \text{none of $\alpha_i$ where $i < j$ starts with an internal}}
   {\textstyle \left([\vec{s},\vec{t},\prod_{i=1}^{j-1} W_i \times U_j \times \prod_{i=j+1}^k W_i] \to \blue{a}[\vec{s},\vec{t},\prod_{i=1}^k W_i] \right) \in G}
   \and
   \inferrule[G-LsFstCall]
   {\forall i\in [k].\, (Y_i \to \blue{c_i} W_i \blue{r_i} V_i) \in G_0\\
   \vec{t} = \vec{0}}
   {\textstyle \left([\vec{s},\vec{t},\prod_{i=1}^k Y_i] \to \blue{c_1}\, [\vec{s},\dec_{\vec{s}}(\vec{t}),W_1 \times \prod_{i=2}^k Y'_i]\, \blue{r_1}\, [\vec{s},\vec{t},\prod_{i=1}^n V_i]\right) \in G}
   \and
   \inferrule[G-LsRstCall]
   {\forall i\in [k].\, (Y_i \to \blue{c_i} W_i \blue{r_i} V_i) \in G_0\\
   \vec{t} \neq \vec{0}\\
   j = \min\{i : t_i > 0\}}
   {\textstyle \left([\vec{s},\vec{t},\prod_{i=1}^k Y_i] \to \blue{c_j}\, [\vec{s},\dec_{\vec{s}}(\vec{t}),\prod_{i=1}^{j-1} Y'_i \times W_j \times \prod_{i=j+1}^k Y'_i]\, \blue{r_j}\, V_j \right) \in G}
\end{mathpar}
\Description{VPG construction rules of the parameterized lockstep}
\caption{VPG productions for parameterized lockstep \label{fig:ls-vpg}.  $[k]$ denotes the set $\{1,\ldots,k\}$.}
\end{figure}

\subsection{Implementation of VPG Canonical Reductions via a Queue}

\begin{algorithm}
\caption{Direct VPG reduction}\label{alg:direct-vpg}
\begin{algorithmic}
    \Require $G_i$'s are uniform, with disjoint nonterminals
    \Procedure{Reduction}{$G_1,\ldots,G_n$}
    \State $V' \gets \Call{\blue{Start-Symbols}}{\attrib{G_1}{start}, \ldots, \attrib{G_n}{start}}$
    \State $P \gets \bigcup_i \attrib{G_i}{productions}$
    \State $Q \gets \Call{Queue}{V'}$, $V \gets \Call{Set}{V'}$, $R \gets \Call{Vpg}{V'}$
    \While{$Q$ is nonempty}
        \State $X \gets \Call{Dequeue}{Q}$
        \State $(Y_1, \ldots, Y_m) \gets \Call{\blue{Components}}{X}$
        \ForAll{$(Y_1 \to \alpha_1), \ldots, (Y_m \to \alpha_m)$ in $P$}
            \State $\id{rhs}, \id{splits} \gets \Call{\blue{Product}}{X,\, ((Y_1 \to \alpha_1),\, \ldots,\, (Y_m \to \alpha_m))}$
            \State Add the production $X \to \id{rhs}$ to $R$
            \State Add the productions $\id{splits}$ to $P$
            \ForAll{nonterminal $B$ in $\beta$}
                \If{$B \notin V$}
                    \State $\Call{Enqueue}{Q,\, B}$
                    \State $\Call{Add}{V,\, B}$
                \EndIf
            \EndFor
        \EndFor
    \EndWhile
    \State \Return $R$
    \EndProcedure
\end{algorithmic}
\end{algorithm}

Algorithm \ref{alg:direct-vpg} outlines the scheme for constructing a (canonical) reduction as a VPG that maintains a queue $Q$ of reachable symbols, and is parametric on the implementations of the procedures $\Call{Start-Symbols}{}$, $\Call{Components}{}$, and $\Call{Product}{}$ which differ from one reduction to another. By maintaining a queue $Q$ of reachable symbols, Algorithm \ref{alg:direct-vpg} only adds those productions that are truly needed, making the construction efficient. We assume the input VPGs are \emph{uniform}: for any productions $X \to \alpha$ and $X \to \beta$, the right-hand sides $\alpha, \beta$ both start with a call, both start with an internal, or both are empty. Any VPG can be converted in linear time to an equivalent one that is uniform. See also Algorithm \ref{alg:direct-nestedconcat} and \ref{alg:direct-lockstep}.

\begin{algorithm}
\caption{Direct VPG reduction for $\nestedconcat$ of two components} \label{alg:direct-nestedconcat}
\begin{algorithmic}[1]
    \Function{Start-Symbols}{$V^{\mathsf{st}}_1, V^{\mathsf{st}}_2$}
    \State \Return $\{ [X_1, X_2] : X_1 \in V^{\mathsf{st}}_1, X_2 \in V^{\mathsf{st}}_2\}$
    \EndFunction
    \Statex

    \Function{Components}{$X$}
    \If{$X$ is of the form $[Y_1, Y_2]$}
        \State \Return $(Y_1, Y_2)$
    \Else
        \State \Return $(X)$
    \EndIf
    \EndFunction
    \Statex

    \Function{Product}{$X,\, \vec{p}$}
    \State $(Y_1 \to \alpha_1), \ldots, (Y_n \to \alpha_n) \gets \vec{p}$
    \If{n = 1}
        \State \Return $\alpha_1, \varnothing$
    \ElsIf{$\alpha_1$ is of the form $c Z_1 r W_1$}
    \State \Return $c[Z_1,Y_2]rW_1,\, \varnothing$
    \ElsIf{$\alpha_1$ is of the form $a Z_1$}
    \State \Return $a[Z_1,Y_2],\, \varnothing$
    \Else
    \State \Return $\alpha_2,\, \varnothing$
    \EndIf
    \EndFunction
\end{algorithmic}
\end{algorithm}
    
\begin{algorithm}
\caption{Direct VPG reduction for $\vec{s}$-lockstep} \label{alg:direct-lockstep}
\begin{algorithmic}[1]
    \Function{Start-Symbols}{$V^{\mathsf{st}}_1, \ldots, V^{\mathsf{st}}_n$}
    \State \Return $\{ [\vec{s},\vec{0},\vec{X}] : X_1 \in V^{\mathsf{st}}_1, \ldots, X_n \in V^{\mathsf{st}}_n\}$
    \EndFunction
    \Statex

    \Function{Components}{$X$}
    \If{$X$ is of the form $[\vec{s},\vec{t},\vec{Y}]$}
        \State \Return $\vec{Y}$
    \Else
        \State \Return $(X)$
    \EndIf
    \EndFunction
    \Statex

    \Function{Product}{$X,\, \vec{p}$}
    \State $(Y_1 \to \alpha_1), \ldots, (Y_n \to \alpha_n) \gets \vec{p}$
    \If{$n = 1$}
        \State \Return $\alpha_1,\, \varnothing$
    \EndIf
    \State $[\vec{s},\vec{t},\vec{Y}] \gets X$
    \If{$n > 1$ and there is $i$ such that $\alpha_i = \epsilon$}
        \State drop the $i$-th component in $\vec{s},\vec{t},\vec{Y},\vec{p}$
        \State \Return $\Call{Product}{[\vec{s},\vec{t},\vec{Y}],\, \vec{p}}$
    \ElsIf{there is a smallest $i$ such that $\alpha_i$ is of the form $a Z_i$}
        \State $\vec{Y'} \gets \vec{Y}[i \mapsto Z_i]$
        \State \Return $a[\vec{s},\vec{t},\vec{Y'}],\, \varnothing$
    \Else
        \State write $\alpha_i = c_i Z_i r_i W_i$ for each $i$
        \State $\vec{t'} \gets (\vec{t} - 1) \bmod \vec{s}$
        \If{$\vec{t} = \vec{0}$}
            \State let $U_i = [c_iZ_ir_i]$ for each $i$
            \State $\vec{Z'} \gets \vec{U}[1 \mapsto Z_1]$
            \State \Return $c_1[\vec{s},\vec{t'},\vec{Z'}]r_1[\vec{s},\vec{t},\vec{W}],\, \{ U_i \to c_iZ_ir_iE : i = 2, \ldots, n \} \cup \{ E \to \epsilon \}$
        \Else
            \State $i \gets \min\{j : t_j > 0\}$
            \State $\vec{Z'} \gets \vec{Y}[i \mapsto Z_i]$
            \State \Return $c_i[\vec{s},\vec{t'},\vec{Z'}]r_iW_i,\, \varnothing$
        \EndIf
    \EndIf
    \EndFunction
\end{algorithmic}
\end{algorithm}

\section{Benchmarks} \label{app:benchmarks}

Table \ref{tab:relrecmc-benchmarks}, \ref{tab:weaver-benchmarks}, \ref{tab:arith-benchmarks}, and \ref{tab:array-benchmarks} describe the benchmarks that we use for evaluating {\tool}. Table~\ref{tab:relrecmc-benchmarks} lists the benchmarks adapted from the safe relational benchmarks from \cite{MordvinovF2019}; if a benchmark is about recursive functions that are better understood by looking at the code, we list the name of the original benchmark and the number $k$ when it is viewed as a $k$-property; these recursive functions only call themselves and at most once. The Ackermann benchmark from \cite{MordvinovF2019} is discussed in Section~\ref{sec:customizations}. Table~\ref{tab:weaver-benchmarks} lists the benchmarks adapted from the iterative sequential benchmarks from~\cite{FarzanV2019}.

The benchmarks involve a variety of common functions on natural numbers and aggregate functions on arrays. We make the following assumptions:
\begin{itemize}
    \item The implementation of $\operatorname{mult}$ recurses on the first parameter.
    \item On arrays, the functions $\min, \max, \operatorname{sum}$ scan linearly from left to right unless otherwise specified. On binary trees simulated by arrays via a binary heap, they follow the structure of the heap.
    \item For arrays and trees, pointwise operations (sum and scaling) are fused with aggregate functions. For example, $\operatorname{sum}(A + 2B)$ is implemented as a single function that scans $A, B$ simultaneously from left to right.
\end{itemize}

For succinctness, we may omit ranges of variables that can be inferred from context, such as $n \ge 0$.

We label benchmarks 1--30, 32, 41--57, 77--99 and 103 with $(1, \ldots, 1)$-lockstep; 33--40, 58--73, and 99 with $\vec{s}$-lockstep where $\vec{s} \neq (1, \ldots, 1)$; 31, 74--76, and 100--102 with (1, 1)-lockstep of one component and the nested concatenation of the other two.

Note that for benchmarks that are theoretically out of reach for our tool, we still label them with $(1, \ldots, 1)$-lockstep for a uniform treatment.

\newcounter{rownum} 
\newcommand{\row}{\stepcounter{rownum}\therownum}

\begin{table}[h]
    \centering
    \begin{tabular}{|c|l|}
        \hline
        \textbf{No.} & \textbf{Description} \\
        \hline
        \setcounter{rownum}{0}
        \row & \texttt{add-horn}, a 2-property \\
        \row & \texttt{barthe}, a 2-property \\
        \row & $\operatorname{div}(n + d, d) = \operatorname{div}(n, d) + 1$ \\
        \row & $d < d' \implies \operatorname{div}(n, d) \ge \operatorname{div}(n, d')$ \\
        \row & \texttt{double-inc-loop}, a 2-property \\
        \row & $n^n < n!$ for $n > 1$ \\
        \row & $n < n' \implies \operatorname{fib}(n) \le \operatorname{fib}(n')$ \\
        \row & $\operatorname{fib}(n) = \operatorname{fib}'(n + 1)$, where the first term in $\operatorname{fib}'$ is $\operatorname{fib}'(1)$ \\
        \row & $\operatorname{fib}$ is equivalent to $\operatorname{fib}'$, where the two recursive calls are swapped \\
        \row & $\operatorname{gcd}(n, m) = \operatorname{gcd}(m, n)$ \\
        \row & \texttt{inc-loop-1}, a 2-property \\
        \row & \texttt{inc-loop-2}, a 3-property \\
        \row & \texttt{inc-loop-5}, a 6-property \\
        \row & \texttt{limit-1}, a 2-property \\
        \row & \texttt{limit-2}, a 2-property \\
        \row & $1 < n < n' \implies \operatorname{lucas}(n) \le \operatorname{lucas}(n')$ \\
        \row & $n > 2 \implies \operatorname{fib}(n) < \operatorname{lucas}(n)$ \\
        \row & $\operatorname{McCarthy91}$ is equivalent to $\operatorname{McCarthy91}'$, where the two branches are swapped \\
        \row & $n \le n' \implies \operatorname{McCarthy91}(n) \le \operatorname{McCarthy91}(n')$ \\
        \row & $q = \operatorname{div}(n, d) \implies \operatorname{mult}(q, d) + \operatorname{mod}(n, d)$ \\
        \row & $\operatorname{mod}(n + d, d) = \operatorname{mod}(n, d)$ \\
        \row & $a < a', b > 0 \implies \operatorname{mult}(a, b) < \operatorname{mult}(a', b)$ \\
        \row & $0 < a < a', 0 < b < b' \implies \operatorname{mult}(a, b) < \operatorname{mult}(a', b')$ \\
        \row & the number of digits of $n$, i.e., $\lfloor \log_{10} n \rfloor + 1$, is monotone \\
        \row & $0 < x < y,\, n > 0 \implies x^n < y^n$ \\
        \row & $x, y > 0,\, n > 0 \implies x^n + y^n \le (x + y)^n$ \\
        \row & \texttt{sum-6}, a 7-property \\
        \row & \texttt{twice-sum-1}, a 2-property \\
        \hline
    \end{tabular}
    \caption{Benchmarks adapted from \cite{MordvinovF2019} \label{tab:relrecmc-benchmarks}}
\end{table}

\begin{table}[h]
    \centering
    \begin{tabular}{|c|l|}
        \hline
        \textbf{No.} & \textbf{Description} \\
        \hline
        \row & array equality is symmetric \\
        \row & array equality is transitive \\
        \row & $\operatorname{mult}(a, c) + \operatorname{mult}(b, c) = \operatorname{mult}(a + b, c)$ \\
        \row & $\operatorname{mult}(a, b) + \operatorname{mult}(a, c) = \operatorname{mult}(a, b + c)$ \\
        \row & equivalence before and after unrolling a loop twice \\
        \row & equivalence before and after unrolling a loop three times \\
        \row & equivalence before and after unrolling a loop four times \\
        \row & equivalence before and after unrolling a loop five times \\
        \row & equivalence before and after unrolling a loop twice, with cleaning up \\
        \row & equivalence before and after unrolling a loop three times, with cleaning up \\
        \row & equivalence before and after unrolling a loop four times, with cleaning up \\
        \row & equivalence before and after unrolling a loop five times, with cleaning up \\
        \row & the lexicographic order comparator is symmetric \\
        \row & a variation of the lexicographic order comparator is symmetric \\
        \row & the lexicographic order comparator respects equality \\
        \row & a variation of the lexicographic order comparator respects equality \\
        \row & the lexicographic order is transitive \\
        \hline
    \end{tabular}
    \caption{Benchmarks adapted from the iterative sequential benchmarks from \cite{FarzanV2019} \label{tab:weaver-benchmarks}}
\end{table}

\begin{table}[h]
    \centering
    \begin{tabular}{|c|l|}
        \hline
        \textbf{No.} & \textbf{Description} \\
        \hline
        \row & the function $\operatorname{inc}$ that recursively sums up $n$ ones is injective \\
        \row & $a > 0 \land b < b' \implies \operatorname{mult}(a, b) < \operatorname{mult}(a, b')$ \\
        \row & $\operatorname{div}(n, d) = \operatorname{div}(2n, 2d)$ \\
        \row & $n \le n' \implies \operatorname{div}(n, d) \le \operatorname{div}(n', d)$ \\
        \row & $n \le n', d \ge d' \implies \operatorname{div}(n, d) \le \operatorname{div}(n', d')$ \\
        \row & $2\, \operatorname{mod}(n, d) = \operatorname{mod}(2n, 2d)$ \\
        \row & $(1 + 2) + \cdots + ((2n-1) + 2n) = (2 + 4 + \cdots + 2n) + (1 + 3 + \cdots + (2n-1))$ \\
        \row & $m < n \implies m! \le n!$ \\
        \row & $a > 1,\, 0 < m < n \implies a^m <a^n$ \\
        \row & $\operatorname{fib}(n) = 3\, \operatorname{fib}(n - 3) + 2\, \operatorname{fib}(n - 4)$ for $n > 3$ \\
        \row & the Tribonacci sequence is monotone \\
        \row & the Tetranacci sequence is monotone \\
        \row & $2\, \operatorname{inc}(n) = \operatorname{inc}(2n)$ \\
        \row & $\operatorname{inc}(2n) = 2\, \operatorname{inc}(n)$ \\
        \row & $2\, \operatorname{mult}(a, b) = \operatorname{mult}(2a, b)$ \\
        \row & $\operatorname{mult}(2a, b) = 2\, \operatorname{mult}(a, b)$ \\
        \row & $3\, \operatorname{mult}(a, b) = \operatorname{mult}(3a, b)$ \\
        \row & $\operatorname{mult}(3a, b) = 3\, \operatorname{mult}(a, b)$ \\
        \row & $3\, \operatorname{mult}(2a, b) = 2\, \operatorname{mult}(3a, b)$ \\
        \row & $2\, \operatorname{mult}(3a, b) = 3\, \operatorname{mult}(2a, b)$ \\
        \row & $d \mid n \implies d \mid 2n$ \\
        \row & $d \mid 2n \impliedby d \mid n$ \\
        \row & $d \mid n \implies d \mid 3n$ \\
        \row & $d \mid 3n \impliedby d \mid n$ \\
        \row & $2\, \operatorname{div}(n, d) \le \operatorname{div}(2n, d)$ \\
        \row & $\operatorname{div}(2n, d) \ge 2\, \operatorname{div}(n, d)$ \\
        \row & $3\, \operatorname{div}(n, d) \le \operatorname{div}(3n, d)$ \\
        \row & $\operatorname{div}(3n, d) \ge 3\, \operatorname{div}(n, d)$ \\
        \row & $\operatorname{div}(n, d) + \operatorname{div}(n', d) \le \operatorname{div}(n + n', d)$ \\
        \row & $d \mid n,\, d \mid n' \implies d \mid (n + n')$ \\
        \row & $\operatorname{mod}(n, d) = 0,\, \operatorname{mod}(n', d) = 0 \implies \operatorname{mod}(n + n', d) = 0$ \\
        \row & $\operatorname{mult}(a, b) = \operatorname{mult}(b, a)$ \\
        \hline
    \end{tabular}
    \caption{More arithmetic benchmarks \label{tab:arith-benchmarks}}
\end{table}

\begin{table}[h]
    \centering
    \begin{tabular}{|c|l|}
        \hline
        \textbf{No.} & \textbf{Description} \\
        \hline
        \row & $\operatorname{sum}(A) + \operatorname{sum}(B) = \operatorname{sum}(A + B)$ \\
        \row & the lexicographic order is antisymmetric \\
        \row & $\operatorname{min}(A + B) \ge \operatorname{min}(A) + \operatorname{min}(B)$ \\
        \row & $\operatorname{min}(A) \le \operatorname{max}(A)$ for nonempty array $A$ \\
        \row & $2\, \operatorname{sum}(A) = \operatorname{sum}(A)$ \\
        \row & $2\, \operatorname{sum}(A) + \operatorname{sum}(B) = \operatorname{sum}(2\, A + B)$ \\
        \row & pointwise $\le$ on arrays is transitive \\
        \row & pointwise $\le$ on arrays is reflexive \\
        \row & pointwise $\le$ on arrays is antisymmetric \\
        \row & $A \le B \implies 2A \le 2B$ \\
        \row & $A \le B \implies A + C \le A + C$ \\
        \row & $\max(A) < \min(B) \implies \max(B) < \min(C) \implies \min(A) < \min(C)$ \\
        \row & $\max(A) < \min(B) \implies \max(B) < \min(C) \implies \min(A) < \max(C)$ \\
        \row & $\min(T) \le \max(T)$ for tree $T$ \\
        \row & $\min(T) \le \max(T)$ for tree $T$, another implementation of $\min$, $\max$ \\
        \row & $\max(T) < \min(T') \implies \min(T) < \min(T')$ \\
        \row & $T \le T' \implies 2T \le 2T'$, where $T, T'$ has the same shape and $\le$ is pointwise \\
        \row & $\operatorname{sum}(T) + \operatorname{sum}(T') = \operatorname{sum}(T + T')$ \\
        \row & $\min(A) \le \max(A)$, where $\min, \max$ recurse on the two halves of $A$ \\
        \row & $\max(A) < \min(B) \implies \min(A) < \min(B)$, where $\min, \max$ recurse on the two halves \\
        \row & $\min(T) + \min(T') \le \min(T + T')$ \\
        \row & sum up two consecutive elements at a time = sum up one element at a time \\
        \row & $\min(A[i..k]) = \min(\min(A[i..j]), \min(A[j..k]))$ \\
        \row & $\max(A[i..k]) = \max(\max(A[i..j]), \max(A[j..k]))$ \\
        \row & $\operatorname{sum}(A[i..k]) = \operatorname{sum}(A[i..j]) + \operatorname{sum}(A[j..k])$ \\
        \row & $i < i' < j' < i \implies \operatorname{sum}(A[i'..j']) < \operatorname{sum}(A[i..j])$ if every element of $A$ is positive \\
        \hline
    \end{tabular}
    \caption{More array benchmarks \label{tab:array-benchmarks}}
\end{table}

}{}

\end{document}